\DeclareMathOperator*{\argmin}{arg\,min}
\DeclareMathOperator*{\argmax}{arg\,max}
\DeclareMathOperator*{\median}{median}
\newtheorem{thm}{Theorem}
\newtheorem{aplemma}{Lemma}[section]
\theoremstyle{definition}
\newtheorem{rem}{Remark}
\numberwithin{equation}{section}
\newcommand{\E}{\mathrm{E}}
\newcommand{\vvec}{\mathrm{vec}}
\begin{document}

\title{Unsupervised Learning in a General Semiparametric Clusterwise Index Distribution Model 
\footnotetext{Keywords: Heuristic initialization, Optimal clustering, Semiparametric information criterion, Separation penalty estimation, Sufficient dimension reduction}}

\author{Jen-Chieh Teng}
\affil{Data Science Degree Program, National Taiwan University, Taipei, Taiwan\\D09948011@ntu.edu.tw}
\author{Chin-Tsang Chiang}
\affil{Institute of Applied Mathematical Sciences, National Taiwan University, Taipei, Taiwan \\chiangct@ntu.edu.tw} 

\date{\today}%
\maketitle
 
\begin{abstract}
This study introduces a general semiparametric clusterwise index distribution model to analyze how latent clusters affect the covariate-response relationships. By employing sufficient dimension reduction to account for the effects of covariates on the cluster variable, we develop a distinct method for estimating model parameters. Building on a subjectwise representation of the underlying model, the proposed separation penalty estimation method clusters data points and estimates cluster index coefficients. We propose a convergent algorithm for this estimation procedure and incorporate a heuristic initialization to expedite optimization.
The resulting clustering estimator is subsequently used to fit the cluster membership model and to construct an optimal clustering rule, with both procedures iteratively updating the clustering and parameter estimators. Another key contribution of our method is the development of two consistent semiparametric information criteria for selecting the number of clusters. In line with principles of clustering and estimation in supervised learning, the estimated cluster structure is consistent and optimal, and the parameter estimators possess the oracle property. Comprehensive simulation studies and empirical data analyses illustrate the effectiveness of the proposed methodology.
\end{abstract}
\begin{spacing}{1.9}
\begin{section}{Introduction}

In fields such as real estate economics, biomedical research, and epidemiology, identifying latent clusters is fundamental to characterizing heterogeneity in covariate-response relationships. Clustered regression methodologies aim to identify unobserved structures, thereby enabling more refined estimation of covariate effects within distinct subpopulations. For instance, in a real estate valuation study in New Taipei City, the nonparametric framework of \citet{li2023nonparametric} revealed substantial variation in how factors such as transaction date, building age, proximity to transit infrastructure, access to amenities, and location influenced house prices. Notably, whether a property was located on the left or right riverbank emerged as a key axis of heterogeneity, reflecting the spatial complexity of urban housing markets. In biomedical research, latent structure models have similarly proven effective in delineating clinically relevant subgroups. The method developed by \citet{wei2013latent}, applied to the Cleveland Heart Disease dataset, identified distinct clusters characterized by heterogeneous relationships between maximum heart rate and covariates spanning demographic and biological domains. These clusters were closely aligned with the heart disease status, illustrating the capacity of such a model to uncover biological variation not readily apparent through traditional analyses.

We introduce a semiparametric clusterwise index distribution model for the unsupervised detection of latent structure within heterogeneous subpopulations. A dedicated estimation procedure is developed for both
model parameters and cluster assignments. Motivated by real estate and biomedical research applications, we revisit empirical studies where previous analyses indicate latent clustering related to the left or right riverbank and heart disease status, respectively. 
By including these factors as covariates, the proposed model remains effective in capturing the latent structure underlying the 
covariate-response relationships. The framework is sufficiently adaptable to capture nuanced forms of dependence and can be applied effectively in various scientific contexts.

Due to missing, unmeasured, or unobservable characteristics, observations cannot be reliably assigned to inherent subgroups. This limitation necessitates inferring latent cluster memberships and the corresponding covariate-response relationships. A substantial body of work has addressed this challenge by developing models that uncover latent structures and accommodate unobserved heterogeneity in regression settings. This modeling framework---commonly referred to as switching regression \citep[]{quandt1972new}, clusterwise regression \citep[]{charles1977regression}, or regression clustering \citep[]{devijver2017model}---has been widely adopted in diverse fields. To handle the complexities of unsupervised learning in such contexts, researchers
have employed mixture and subjectwise regression models
to identify latent clusters and capture variation in the covariate-response relationships across
heterogeneous subpopulations.
 
Building on a subjectwise representation of the clusterwise linear model, \citet{spath1981correction} recast the problem of cluster membership identification as a combinatorial optimization task and proposed an exchange algorithm to obtain a suboptimal clustering of data points. \citet{ma2017concave} and \citet{ma2020exploration} employed a pairwise fusion strategy with concave penalties to consistently recover latent group structures. As a concave alternative, \citet{chen2021identifying} proposed an adaptive fusion method based on the $\ell_{1}$ penalty.
More recently, \citet{han2024robust} extended the linear transformation model to incorporate latent cluster effects, integrating a pairwise fusion penalty within a rank-based estimation procedure for subject index coefficients.
To address the computational complexity associated with pairwise fusion, \citet{tang2021individualized} introduced a multidirectional separation penalty, and \citet{he2023center} proposed a center-augmented regularization framework. 
Despite these advances, subjectwise regression methods often struggle to recover the true clustering accurately and generally involve substantial computational complexity.

In contrast to subjectwise regression methods, mixture regression models introduce clusterwise distribution and cluster membership models for capturing latent cluster structures. Early contributions by \citet{quandt1958estimation,quandt1960tests,quandt1972new} and \citet{chow1960tests} investigated maximum likelihood estimation in switching regression models under covariate-independent cluster membership. \citet{kiefer1978discrete} later proposed a more efficient estimation strategy within this setting. \citet{desarbo1988maximum} developed a conditional mixture method that estimates cluster assignments and model parameters to account for individual heterogeneity.
Subsequent advances introduced covariate-dependent structures for cluster membership. Building on the mixture-of-experts framework of \citet{jacobs1991adaptive}, \citet{jordan1994hierarchical} proposed a hierarchical model combining clusterwise generalized linear models with a logistic gating function. \citet{jiang1999hierarchical} further investigated the theoretical properties of this model class. Numerous methods have been proposed to determine the presence or number of latent clusters, including those of \citet{peng1996bayesian} and \citet{shen2015inference}. It is worth noting that mixture regression models are sensitive to misspecification in the cluster membership structure.

Building on the single-index framework, the proposed clusterwise index distribution model enhances our understanding of the underlying data structure. Its semiparametric nature reduces the risk of parametric misspecification and obviates the complexities of identifying cluster-specific dimension-reduction subspaces. Estimation under the corresponding mixture distribution is challenging, owing to the difficulty in estimating the unknown distribution function and the constraints imposed on the cluster membership model. We adopt a subjectwise model representation to address this issue and facilitate cluster structure identification and parameter estimation. The proposed method comprises three key components. First, we combine a pseudo least integrated squares criterion with a separation penalty to clustering data points and estimate the clusterwise index distribution model. 
The corresponding clustering estimator informs the estimation of the cluster membership model. Second, we construct an optimal clustering rule while iteratively updating the clustering and parameter estimators. Third, we propose two consistent semiparametric information criteria for selecting the number of clusters. In line with principles of clustering and estimation in supervised learning, the clustering estimator consistently recovers the true clusters and maximizes the probability of correct membership, and the parameter estimators possess the oracle property.

For separation penalty estimation, we employ the alternating direction method of multipliers algorithm of \citet{boyd2011distributed} to ensure numerical convergence and incorporate the difference of convex functions programming \citep{an2005dc} to accommodate the separation penalty. 
A heuristic initialization scheme is provided to efficiently minimize the objective function. In the refined estimation phase, sufficient dimension reduction \citep{li1991sliced} is used to characterize the influence of covariates on cluster membership, mitigating sensitivity to misspecification in parametric and semiparametric models. Building on the clustering estimator from the separation penalty estimation procedure, we propose a pseudo least squares estimator of the central subspace \citep{cook1994interpretation}. 
The proposed criteria for selecting the number of clusters yield consistent estimators and exhibit favorable finite-sample performance with distinct advantages.

The article is organized as follows. Section \ref{sec:background} presents a general semiparametric framework for modeling clusterwise and subjectwise distributions, along with a flexible specification for cluster membership.
Section \ref{sec:estimation I} describes an estimation-clustering procedure for unsupervised settings. Section \ref{sec:estimation II} details the computational algorithm for the separation penalty estimation and the implementation of the proposed estimation procedure. Section \ref{sec:simulation} presents simulation studies to assess the finite-sample performance of the competing estimators. Section \ref{sec:applications} illustrates the applicability of the proposed methodology with empirical data. Section \ref{sec:discussion} concludes with a summary of the main findings and a discussion of future research directions.

\end{section}

\begin{section}{Modeling Conditional Distributions and Cluster Membership}
\label{sec:background}

Let $Y$ be a quantitative response variable with support $\mathcal{Y}$, and let $X$ be a $p \times 1$ covariate vector with support $\mathcal{X}$.
Denote $k$ as the number of latent clusters, $C$ as the corresponding cluster variable, taking values in $\mathcal{C} =\{1, \dots, k\}$, and $Z$ as the augmented covariate vector $(1, X^{\top})^{\top}$.

\begin{subsection}{Clusterwise and Subjectwise Index Distribution Models} \label{subsec:model_prop}

To investigate how latent clusters influence the conditional distribution of a response variable given covariates, we introduce the clusterwise index distribution (CID) model for the conditional distribution $F(y|x,\ell)$ of $Y$ given $(X, C)$:
\begin{align}
F(y|x, \ell) = G\big(y,\gamma_{\ell}^{\top}z\big), (x, \ell, y) \in \mathcal{X} \times \mathcal{C} \times \mathcal{Y}, \label{fn:semi_dist}
\end{align}
where $\gamma_{\ell} = (\gamma_{\ell 0}, \gamma_{\ell 1}, \dots, \gamma_{\ell p})^{\top}$ is a vector of index coefficients for Cluster $\ell$,
and $G(y,v)$ is an unknown bivariate function that defines a valid conditional distribution in $y$
for each fixed $v$. The associated conditional density $g(y,v)$ satisfies 
$g\big(y,\gamma_{\ell_{1}}^{\top}z\big)\neq g\big(y,\gamma_{\ell_{2}}^{\top}z)$ for almost every $(x, y)$ whenever $\ell_{1}\neq \ell_{2}$.  For identifiability, we assume $\gamma_{11}\neq 0$ 
and normalize $(\gamma_{10}, \gamma_{11})=(0,1)$.
The proposed semiparametric framework encompasses clusterwise versions of linear, generalized linear, and linear transformation models as special cases. The CID model implies that the conditional mean $E[Y|x,\ell]$ of $Y$ given $(X, C)$ satisfies
\begin{align}
E[Y|x, \ell] = m\big(\gamma_{\ell}^{\top}z\big), \ell \in \mathcal{C}, \label{fn:semi_mean}
\end{align}
where $m(v)=\int y \,d_{y}G(y,v)$ is an unknown smooth function. When $\gamma_{\ell}$ comprises cluster-specific coefficients $\gamma_{(1)\ell}$ of length $1\leq q\leq p+1$ and cluster-invariant coefficients $\gamma_{(2)}$ corresponding to covariates $Z_{(1)}$ and $Z_{(2)}$, respectively, the cluster index $\gamma_{\ell}^{\top}Z$ can be written as $\gamma_{(1)\ell}^{\top}Z_{(1)}+\gamma_{(2)}^{\top}Z_{(2)}$, $\ell\in\mathcal{C}$. 
Accordingly, the CID model can be expressed more explicitly as
\begin{align}
F(y|x, \ell) = G\big(y,\gamma_{(1)\ell}^{\top}z_{(1)}+\gamma_{(2)}^{\top}z_{(2)}\big), \ell \in \mathcal{C}. \label{fn:semi_dist1}
\end{align}

Given unsupervised data $\{(X_i, Y_i)\}_{i=1}^n$, we initiate estimation of the CID model using its representation as a subjectwise index distribution (SID) model. This formulation characterizes the conditional distribution 
$F(y|x_i, \beta_i)$ of $Y_i$ given $X_i$ and the subject index coefficients $\beta_i$ as
\begin{align}
F(y|x_i,\beta_i) = G\big(y, \beta_i^{\top} z_i\big), i = 1, \dots, n,\text{ with } \beta_1, \dots, \beta_n \in \{ \gamma_1, \dots,\gamma_k\}. \label{fn:semi_dist_ind}
\end{align}
Each $\beta_i$ can be decomposed into subject-specific coefficients $\beta_{(1)i}\in \{ \gamma_{(1)1}, \dots,\gamma_{(1)k}\}$ and subject-invariant coefficients $\beta_{(2)}=\gamma_{(2)}$. An alternative strategy based on the mixture distribution model
$\sum_{\ell = 1}^k G(y, \gamma^{\top}_{\ell} z)\pi(\ell | x)$, where $\pi(\ell | x)$ denotes the conditional probability of $\{C=\ell\}$ given $X$, 
is hindered by the complexity of estimating $G(y,v)$ and the constraints imposed on $\pi(\ell | x)$, $c\in\mathcal{C}$. In contrast, the SID-based formulation avoids these limitations and provides clusters of data points that facilitates estimation under the proposed cluster membership model.

\end{subsection}

\begin{subsection}{Cluster Membership Model}  \label{subsec:model_variants}

In the unsupervised learning setting, we refine the initial estimation of the CID model by introducing the cluster membership probabilities $\pi(\ell | x)$, $\ell\in\mathcal{C}$. To characterize the dependence of $C$ on $X$, we incorporate sufficient dimension reduction (SDR) via the specification

\vspace{-0.1in}
\begin{align}
\pi( \ell | x) = \pi_{\ell}\big(A^{\top}_{d}x\big), (x, \ell) \in \mathcal{X} \times \mathcal{C}, \label{fn:SDR}
\end{align}
where $A_{d} = (\alpha_1, \dots, \alpha_d)$ is a $p\times d$ full-rank matrix spanning the central subspace (CS) and $\pi_{\ell}(u)$ is an unknown function satisfying $0\leq \pi_{\ell}(u)\leq 1$, $\ell=1,\dots, k$, and $\sum^{k}_{\ell=1}\pi_{\ell}(u)=1$ for all $u\in\mathbb{R}^d$. 
The structural dimension $d$ governs model complexity: when $d = p$, the model is fully nonparametric; for $d < p$, it encompasses semiparametric and parametric forms, including the single-index model of \citet{cosslett1983distribution} and several classical models for categorical outcomes \citep[cf.][]{nelder1972generalized}.
In the special case where $C$ and $X$ are independent, the SDR-based model (\ref{fn:SDR}) reduces to the covariate-independent form
$\pi(\ell|x) = \pi_{\ell}$ for all $\ell \in \mathcal{C}$ and $x$.
To ensure the identifiability of $A_{d}$ in the semiparametric setting, we adopt a local coordinate system on the Grassmann manifold \citep{borisenko1991grassmann}, parameterizing the basis as $A_{d}=(I_{d},A^{\top}_{d*})^{\top}$, where $I_{d}$ is the $d\times d$ identity matrix and $A_{d*}=(\alpha_{1*},\dots,\alpha_{d*})$ is a $(p-d)\times d$ parameter matrix. The $i$th component of $\alpha_{j*}$ captures the relative contribution of the $(i + j)$th covariate, compared to the $j$th, along the $j$th direction of the CS, for $j = 1, \dots, d$.

Building on the CID model, the conditional density $f(y|x,\ell)$ of $Y$ given $(X,C)$ is
\vspace{-0.1in}
\begin{align}
f(y|x, \ell) = g\big(y,\gamma_{\ell}^{\top}z\big), (x, \ell, y) \in \mathcal{X} \times \mathcal{C} \times \mathcal{Y}. \label{fn:semi_den}
\end{align}
Together with the SDR-based cluster membership model, this yields the posterior probability of belonging to cluster $\ell$ as
\begin{align}
\pi(\ell |x, y) = \frac{ g\big(y, \gamma^{\top}_{\ell}z\big) \pi_{\ell}\big(A^{\top}_{d}x\big)}{\sum_{\ell_1 = 1}^k  g\big(y, \gamma^{\top}_{\ell_1}z\big) \pi_{\ell_{1}}\big(A^{\top}_{d}x\big)}, (x,y,\ell) \in \mathcal{X} \times \mathcal{Y} \times \mathcal{C}. \label{fn:posterior} 
\end{align}
These posterior probabilities form the basis of the optimal clustering rule that maximizes the probability of correct membership.

\end{subsection}

\end{section}
\begin{section}{Oracle and Refined Estimation with Cluster Selection}

\label{sec:estimation I}

Let $\mathcal{G} = \{ \mathcal{G}_1, \dots, \mathcal{G}_k \}$ denote a partition of the individual-based index set $\{1, \dots, n\}$, and let $\beta$ be a column vector composed of 
$\beta_{1},\dots,\beta_{n}$ in the SID model. When each $\beta_{i}$ comprises $\beta_{(1)i}$ and $\beta_{(2)}$, $\beta$ is written as 
$\beta=\big(\beta^{\top}_{(1)},\beta^{\top}_{(2)}\big)^{\top}$, with $\beta_{(1)}=\big(\beta^{\top}_{(1)1},\dots,\beta^{\top}_{(1)n}\big)^{\top}$. 
The set of underlying clusters is denoted by $\mathcal{G}^{\text{o}}$, with the corresponding true coefficients $\beta^{\text{o}}$, where $\beta^{\text{o}}_i=\gamma^{\text{o}}_{\ell}$ if $i\in \mathcal{G}^{\text{o}}_{\ell}$ for $i=1,\dots, n$ and $\ell\in\mathcal{C}$. The assumptions and proofs of the main results are provided in Appendices \ref{sec:A.Assumptions}--\ref{sec:A.4}.

\begin{subsection}{Separation Penalty Estimation for the CID and SID Models}
\label{subsec:PSISP}

Let $K_{r}(v)$ be an 
$r$th-order kernel function satisfying
$\int K_{r}(v)dv$ $=1$, $\int v^{s}K_{r}(v)dv=0$, $s=1,\dots,r-1$, and $\int u^{r}K_{r}(v)dv\neq 0$.
We define the scaled kernel function $K_{r,h}(v)$ as
$K_{r}(v/h)/h$ and specify $K_{r}(v)$ to be symmetric, twice continuously differentiable, compactly supported, and such that $\int K^{(2)}_{r}(v)dv=0$, where $h$ denotes a generic bandwidth and $K^{(m)}_r(v)$ denotes the $m$th derivative of $K_r(v)$. 
 
Based on unsupervised data $\{(X_{i},Y_{i})\}^{n}_{i=1}$, $G(y,v)$ given $\beta$ is estimated by the kernel-weighted empirical estimator
\begin{align}
\widehat{G}_h(y, v; \beta) = \frac{\sum_{i=1}^{n} I(Y_i \leq y) K_{2,h}\big(\beta_i^{\top}Z_i - v\big)}{\sum_{i=1}^{n} K_{2,h}\big(\beta_i^{\top}Z_i - v\big)}. \label{fn:G_n_hat}
\end{align}
The subject and cluster coefficients are estimated by minimizing the pseudo sum of integrated squares (PSIS) with a separation penalty:
\begin{eqnarray}
\textsc{psis}_{\textsc{sp}}\big(\beta,\gamma_{(1)};\lambda\big) =  \frac{1}{2} \sum_{i=1}^n \int \big( I(Y_i \leq y)  - \widehat{G}_{h}^{-i}\big(y,\beta_{i}^{\top} Z_i; \beta\big) \big)^2 d \widehat{F}(y)
 + \lambda\sum^{n}_{i=1}\min_{\ell} \big\|\beta_{(1)i} -\gamma_{(1)\ell}\big\|_{1},~\label{fn:psisp}
\end{eqnarray}
where $\widehat{G}^{-i}_{h}(y, v; \beta)$ denotes the leave-one-out version of $\widehat{G}_{h}(y, v; \beta)$, $i=1,\dots, n$, $\widehat{F}(y)$ is the empirical distribution of $Y$, and $\|\cdot\|_{1}$ denotes the $\ell_{1}$-norm of a vector.  The tuning parameter $\lambda\geq 0$ controls within-cluster heterogeneity: larger values encourage closer alignment of subject-specific coefficients with their corresponding cluster centers, whereas smaller values permit greater deviation from the centers. If $\beta_{i}$ does not include subject-invariant coefficients, the penalty term is modified by replacing $\big(\beta_{(1)i},\gamma_{(1)\ell}\big)$
with $\big(\beta_{i},\gamma_{\ell}\big)$ for all $i=1,\dots, n$ and $\ell \in \mathcal{C}$. The criterion employs the $\ell_{1}$-type penalty rather than the $\ell_{2}$-type penalty, as the latter does not ensure the oracle property of the resulting estimator.

For fixed values of $h$ and $\lambda$, the separation penalty (SP) estimator is defined as the minimizer $\big(\widehat{\beta}^{\lambda},\widehat{\gamma}_{(1)}^{\lambda}\big)$ of $\textsc{psis}_{\textsc{sp}}\big(\beta,\gamma_{(1)};\lambda\big)$.
The clustering estimator is given by $\widehat{\mathcal{G}}^{\lambda} = (\widehat{\mathcal{G}}^{\lambda}_1, \dots, \widehat{\mathcal{G}}^{\lambda}_{k})$, where the $i$th data point is assigned to $\widehat{\mathcal{G}}^{\lambda}_{\ell}$ if $\ell\in\argmin_{\ell_1} \| \widehat{\beta}^{\lambda}_{(1)i} - \widehat{\gamma}^{\lambda}_{(1)\ell_1} \|_1$ for $i=1,\dots, n$ and $\ell\in\mathcal{C}$.
We define the objective function $\textsc{psis}\big(\gamma,h;\mathcal{G}\big)$ as
\begin{align}
\textsc{psis}\big(\gamma,h;\mathcal{G}\big)=  \frac{1}{2}\sum_{i=1}^n \int \bigg(I(Y_i \leq y) -\sum^{k}_{\ell=1}I(i\in \mathcal{G}_{\ell}) \widehat{G}^{-i}_h\big(y, \gamma^{\top}_{\ell}Z_i; \mathcal{G},\gamma\big) \bigg)^2 d\widehat{F}(y),\label{fn:psis1}
\end{align}
where
\begin{align}
\widehat{G}_{h}(y,v; \mathcal{G},\gamma) =  \frac{\sum_{i=1}^n I(Y_i \leq y)\sum^{k}_{\ell=1}I\big(i\in \mathcal{G}_{\ell}\big) K_{2,h}\big(\gamma^{\top}_{\ell}Z_{i} - v\big)}{\sum_{i=1}^n \sum^{k}_{\ell=1}I\big(i\in \mathcal{G}_{\ell}\big) K_{2,h}\big(\gamma^{\top}_{\ell}Z_i - v\big)}. \label{fn:G_n_hat_subject}
\end{align}
To ensure the $\sqrt{n}$-consistency of the SP estimator, the bandwidth selector $\hat{h}$ is obtained through the score iteration procedure described in Section \ref{subsec:hsmethod}. The tuning parameter $\hat{\lambda}$ is chosen to minimize the criterion $\textsc{psis}(\widehat{\gamma}^{\lambda}, \tilde{h}^{\lambda}; \widehat{\mathcal{G}}^{\lambda})$, with $\tilde{h}^{\lambda}\in\argmin_{h}\textsc{psis}(\widehat{\gamma}^{\lambda}, h; \widehat{\mathcal{G}}^{\lambda})$, over a pre-specified grid of positive values.

Let $\widehat{\gamma}$ denote the minimizer of $\textsc{psis}\big(\gamma,h;\mathcal{G}^{\text{o}}\big)$ and define
 the oracle estimator of $\beta^{\text{o}}$ as $\widehat{\beta}^{or} = \sum_{\ell = 1}^k (I(1 \in \mathcal{G}^{\text{o}}_{\ell}), \dots, I(n \in \mathcal{G}^{\text{o}}_{\ell}) )^{\top} \otimes \widehat{\gamma}_{\ell}$, where $I(\cdot)$ is the indicator function and $\otimes$ denotes the Kronecker product. We denote $\theta_{*}$ as the unknown components of a generic parameter vector $\theta$ and
 $W$ as $(I(C=1), \dots, I(C=k))^{\top} \otimes Z$ with support $\mathcal{W}$, $G^{[m]}\big(y, w, \gamma^{\top}w\big)$ as the uniformly convergent function of $\partial_{\gamma_{*}}^m\widehat{G}_h\big(y, \gamma^{\top}w; \gamma\big)$ for $m =0, 1, 2$, and
$\Gamma$ as the parameter space of $\gamma_{*}$, with the true parameter $\gamma^{\text{o}}_{*}$ lying in its interior. The following theorem establishes the oracle property of $(\widehat{\beta}^{\lambda},\widehat{\gamma}^{\lambda})$:

\begin{thm} \label{Thm3.1}
Under assumptions {A1}--{A7},
\begin{align*}
P\Big(\widehat{\beta}^{\lambda}= \widehat{\beta}^{or}\Big) {\longrightarrow} 1\text{ and }P\Big(\widehat{\gamma}_{(1)}^{\lambda} = \widehat{\gamma}_{(1)}\Big) {\longrightarrow} 1\text{ as }n {\longrightarrow} \infty.
\end{align*} 
\begin{proof}
See Appendix \ref{sec:A.1}.
\end{proof}
\end{thm}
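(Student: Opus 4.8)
The plan is to show that, with probability tending to one, the oracle estimator $\big(\widehat{\beta}^{or},\widehat{\gamma}\big)$ is the unique local minimizer of $\textsc{psis}_{\textsc{sp}}$ in a shrinking neighborhood of the truth and that no competing minimizer survives outside this neighborhood; the two displayed conclusions then follow. The organizing observation is that at the oracle configuration the subject coefficients collapse onto their cluster centers, $\widehat{\beta}^{or}_{i}=\widehat{\gamma}_{\ell}$ for $i\in\mathcal{G}^{\text{o}}_{\ell}$, so the separation penalty vanishes and the penalized loss coincides in value with the oracle criterion $\textsc{psis}\big(\widehat{\gamma},h;\mathcal{G}^{\text{o}}\big)$. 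Because the $\ell_{1}$-type penalty is nonsmooth precisely at this collapsed point, exact recovery is governed by a subgradient (KKT) condition of the type familiar from penalized variable selection, which is exactly why an $\ell_{2}$ penalty would fail to pin the coefficients.

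First I would record the large-sample behavior of the oracle estimator. Treating $\textsc{psis}\big(\gamma,h;\mathcal{G}^{\text{o}}\big)$ as a semiparametric M-estimation criterion and using the uniform convergence of $G^{[m]}\big(y,w,\gamma^{\top}w\big)$ for $m=0,1,2$ together with the kernel and bandwidth assumptions, I would establish $\widehat{\gamma}-\gamma^{\text{o}}_{*}=O_{p}(n^{-1/2})$ and the cluster-level first-order condition, which at the collapsed point reads $\sum_{i\in\mathcal{G}^{\text{o}}_{\ell}}\partial_{\beta_{(1)i}}L=0$, where $L$ denotes the smooth PSIS loss. A second-order expansion of $L$ about the oracle, with a locally positive-definite Hessian supplied by the assumptions, then controls the loss on an $O_{p}(n^{-1/2})$ ball and handles the cluster-invariant coefficients $\beta_{(2)}$, which enter $L$ but not the penalty.

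The heart of the argument is the KKT verification in two directions. For the collapse direction, perturbing a single $\beta_{(1)i}$ away from its center raises the penalty by $\lambda\|\cdot\|_{1}$ while altering the loss only through $\partial_{\beta_{(1)i}}L$, so the collapsed point is optimal in the $\beta$-coordinates provided $\max_{i}\|\partial_{\beta_{(1)i}}L\|_{\infty}\le\lambda$ holds on the oracle event. Controlling this maximum uniformly over the $n$ subjects---using the compact support and smoothness of $K_{r}$ and $G^{[m]}$, and the leave-one-out construction of $\widehat{G}^{-i}_{h}$ to tame cross-subject dependence---and matching it against the lower bound on $\lambda$ from the assumptions is the central probabilistic estimate. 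For the assignment direction, I would use the density-separation condition $g\big(y,\gamma^{\top}_{\ell_{1}}z\big)\neq g\big(y,\gamma^{\top}_{\ell_{2}}z\big)$ with the consistency of $\widehat{\gamma}$ to show that $\min_{\ell'}\|\beta_{(1)i}-\gamma_{(1)\ell'}\|_{1}$ is attained uniquely at the true cluster, so no subject is tempted to switch and the upper bound on $\lambda$ prevents distinct centers from merging.

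I expect the main obstacle to be this uniform KKT verification under double nonconvexity: the kernel-based loss is nonconvex in $\beta$ because $\beta_{i}^{\top}Z_{i}$ enters $\widehat{G}^{-i}_{h}$ nonlinearly, and the separation penalty is a difference of convex functions through the $\min_{\ell}$ operation. Establishing that the oracle is not merely stationary but a \emph{strict} local minimizer---so that uniqueness forces $\widehat{\beta}^{\lambda}=\widehat{\beta}^{or}$ and $\widehat{\gamma}^{\lambda}_{(1)}=\widehat{\gamma}_{(1)}$---requires combining positive-definiteness of the loss Hessian on the collapsed subspace with the strict increase contributed by the penalty kink in the complementary directions, while the lower and upper bounds on $\lambda$ are simultaneously active. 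Spurious minimizers far from the oracle would be excluded separately via the overall consistency of the penalized estimator for $\gamma^{\text{o}}$, after which both probability statements follow.
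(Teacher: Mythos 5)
Your proposal identifies the right mechanism---the penalty level $\lambda$ must dominate the per-subject score of the smooth loss, and the separation $D_{\gamma}>4\phi_{n}$ must force each $\min_{\ell}\|\beta_{(1)i}-\gamma_{(1)\ell}\|_{1}$ to be attained at the true cluster---but the route you sketch (subgradient stationarity at the collapsed point, then strict local minimality, then uniqueness) is not the paper's, and the step you yourself flag as the main obstacle is precisely where it would break down. Coordinate-wise KKT verification at the oracle shows only that single-coordinate perturbations do not decrease the objective; for a criterion that is nonconvex in $\beta$ (through $\beta_i^{\top}Z_i$ inside the kernel estimator) and only d.c.\ in the penalty, this does not rule out a joint perturbation of all $\beta_i$ and all $\gamma_{\ell}$ that lowers the value, and no amount of Hessian positive-definiteness on the collapsed subspace repairs this, because the problematic directions are exactly the non-collapsed ones where the objective is nonsmooth and nonconvex.

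The paper avoids this entirely with a two-step comparison rather than a stationarity argument. For an arbitrary $(\beta,\gamma)\in B_{n}\times\Gamma_{n}$ it introduces the collapsed projection $\beta^{*}_{i}=\gamma^{*}_{\ell}=|\mathcal{G}^{\text{o}}_{\ell}|^{-1}\sum_{j\in\mathcal{G}^{\text{o}}_{\ell}}\beta_{j}$ for $i\in\mathcal{G}^{\text{o}}_{\ell}$, at which the penalty vanishes, and then (i) compares $\textsc{psis}_{\textsc{sp}}(\beta^{*},\gamma^{*};\lambda)$ with the oracle value via a second-order expansion of $\textsc{psis}(\cdot,h;\mathcal{G}^{\text{o}})$ around $\widehat{\gamma}$ using the positive definiteness of $V_{2}$, and (ii) bounds $\textsc{psis}_{\textsc{sp}}(\beta,\gamma;\lambda)-\textsc{psis}_{\textsc{sp}}(\beta^{*},\gamma^{*};\lambda)$ from below by rewriting \emph{both} the first-order Taylor term of the loss and the penalty as sums of pairwise within-cluster differences $\|\beta_{(1)i}-\beta_{(1)j}\|$, yielding a lower bound of the form $(\lambda-c_{0})n^{-1}\sum_{\ell}\sum_{i<j\in\mathcal{G}^{\text{o}}_{\ell}}\|\beta_{(1)i}-\beta_{(1)j}\|(1+o_p(1))$, which is nonnegative under A7. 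This pairwise rewriting is the device that converts your per-coordinate heuristic into a bound valid for arbitrary joint perturbations, and it is the piece missing from your argument. Two smaller corrections: A7 imposes only a lower bound $\lambda\geq c_{0}$ (there is no upper bound on $\lambda$ preventing centers from merging; separation of the estimated centers is handled by $D_{\gamma}>4\phi_{n}$ together with $\gamma\in\Gamma_{n}$), and the theorem is a statement about the minimizer over the neighborhoods $B_{n}\times\Gamma_{n}$ fixed in A6, so the global step you propose for excluding remote minimizers is not part of the paper's proof.
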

\noindent Theorem \ref{Thm3.1} implies that $\widehat{\mathcal{G}}^{\lambda}$ consistently recovers $\mathcal{G}^{\text{o}}$ as the sample size increases. Let $\widehat{\beta}$, $\widehat{\gamma}$, $\widehat{\mathcal{G}}$, and $\tilde{h}$ denote $\widehat{\beta}^{\hat{\lambda}}$, $\widehat{\gamma}^{\hat{\lambda}}$, $\widehat{\mathcal{G}}^{\hat{\lambda}}$, $\tilde{h}^{\hat{\lambda}}$, respectively. The asymptotic normality of $\widehat{\gamma}_{*}$ follows from that of $\widehat{\gamma}_{*}$ and is given by
\begin{align}
\sqrt{n}\big(\widehat{\gamma}_{*} - \gamma^{\text{o}}_{*}\big)\stackrel{d}{\longrightarrow} N\big(0,V_{2}^{-1}V_{1}V_{2}^{-1}\big) \text{ as }n \longrightarrow \infty, \label{normality:psispe}
\end{align}
where $V_{1}= E[(\int (I(Y\leq y)-G(y,\gamma^{\text{o}\top}W))G^{[1]}(y, W, \gamma^{\text{o}\top}W)dF(y))^{\otimes 2} \big]$, $V_{2}= E[\int ( G^{[1]}\big(y, W,\gamma^{\text{o}\top}$ $W))^{\otimes 2}dF(y)]$, and $(\cdot)^{\otimes 2}$ denotes the second-order Kronecker power. The asymptotic variance of $\widehat{\gamma}_{*}$ is estimated by $\widehat{V}_{2}^{-1}\widehat{V}_{1}\widehat{V}_{2}^{-1}$, with
\begin{align*}
\widehat{V}_{1}=&\frac{1}{n} \sum_{i=1}^n \sum^{k}_{\ell=1}I(i\in \widehat{\mathcal{G}}_{\ell}) \bigg[\int \Big( I(Y_i \leq y)  - \widehat{G}^{-i}_{\tilde{h}}\big(y, \widehat{\gamma}^{\top}_{\ell}Z_i; \widehat{\mathcal{G}},\widehat{\gamma}\big)  \Big)  \partial_{\gamma_{*}} \widehat{G}^{-i}_{\tilde{h}}\big(y, \widehat{\gamma}^{\top}_{\ell}Z_i; \widehat{\mathcal{G}},\widehat{\gamma}\big) d\widehat{F}(y)\bigg]^{\otimes 2}\text{ and}\\
\widehat{V}_{2}=& \frac{1}{n} \sum_{i=1}^n \sum^{k}_{\ell=1}I(i\in \widehat{\mathcal{G}}_{\ell}) \int \Big(\partial_{\gamma_{*}} \widehat{G}^{-i}_{\tilde{h}}\big(y, \widehat{\gamma}^{\top}_{\ell}Z_i; \widehat{\mathcal{G}},\widehat{\gamma}\big) \Big)^{\otimes 2}d\widehat{F}(y).
\end{align*}

\begin{rem}
\label{rem1}
An alternative to the PSIS in (\ref{fn:psisp}) is the negative log pseudo-likelihood for the response variable. However, this method is more sensitive to extreme values in the density estimates and requires more restrictive bandwidth conditions when applied to continuous outcomes.
The SP term may also be replaced by a pairwise fusion penalty (FP) of the form $\sum_{i < j} p(\| \beta_{(1)i} - \beta_{(1)j} \|; \lambda)$, where  $p(t, \lambda)$ is a non-decreasing and concave function. While SP has a computational complexity of order $O(kn)$, FP entails higher-order $O(n^2)$. Numerical results suggest that the two methods yield comparable empirical performance.
\end{rem}

\end{subsection}

\begin{subsection}{Estimation for the Cluster Membership Model}
\label{subsec:GMestimation}

Define $\mathcal{K}_{r,h_{d}}(u)=\prod^{d}_{j=1}K_{r,h_{d}}(u_{j})$
for $r > \max\{(4+d)/4,(1+d)/2\}$.
Under the SDR-based cluster membership model, the conditional expectation
$E[I(C=\ell)|A^{\top}_{d}X=u]$ coincides with the cluster membership probability $\pi_{\ell}(u)$. Given $\mathcal{G}$ and $A_{d}$,  
$\pi_{\ell}(u)$ is estimated by
\begin{align}
\widehat{\pi}_{\ell,h_{d}}(u;\mathcal{G},A_{d})=\frac{\sum^{n}_{i=1}I\big(i\in \mathcal{G}_{\ell}\big)\mathcal{K}_{r,h_{d}}\big(A^{\top}_{d}X_{i}-u\big)}{\sum^{n}_{i=1}\mathcal{K}_{r,h_{d}}\big(A^{\top}_{d}X_{i}-u\big)},\ell \in \mathcal{C}.\label{fn:est1_GM}
\end{align}
For fixed values of $d$ and $h_{d}$, the pseudo least squares (PLS) estimator $\widehat{A}_{d}$ is defined as the minimizer of the following pseudo sum of squares: 
\begin{align}
\textsc{pss}(A_{d},h_{d})=\frac{1}{2}\sum^{n}_{i=1}\sum^{k-1}_{\ell=1}\Big(I\big(i\in \widehat{\mathcal{G}}_{\ell}\big)- \widehat{\pi}^{-i}_{\ell,h_{d}}\big(A^{\top}_{d}X_{i};\widehat{\mathcal{G}},A_{d}\big)\Big)^{2},  \label{fn:PSS_l}
\end{align}
where $\widehat{\pi}^{-i}_{\ell,h_{d}}(u;\widehat{\mathcal{G}},A_{d})$ is computed as in (\ref{fn:est1_GM}), excluding the $i$th observation.
As noted in Remark \ref{rem1}, the pseudo maximum likelihood estimation may be unstable when the estimated cluster membership probabilities approach zero or take negative values.

To estimate the true basis matrix $A^{\text{o}}$ and its structural dimension $d^{\text{o}}$, we first compute the bandwidth
$\tilde{h}_{d}\in\arg\min_{h_{d}} \textsc{pss}(\widehat{A}_{d},h_{d})$, which achieves the convergence rate $O_{p}(n^{-1/(2r+d)})$. The estimator $\widehat{d}$ of $d^{\text{o}}$ is subsequently defined as the minimizer of the criterion
\begin{align}
\textsc{pss}(d) =\frac{2}{n}\textsc{pss}\Big(\widehat{A}_{d},\tilde{h}_{d}\Big)+\frac{\log n}{2n^{\frac{2r}{2r+d}}}\text{ for }d\geq 1.
 \label{fn:PSS_d}
\end{align}
The resulting estimator $\widehat{A}$ of $A^{\text{o}}$ is obtained accordingly. In practice, the procedure is implemented via a forward search over $d$, in which  $\textsc{pss}(d)$ is evaluated sequentially, starting from $d=1$ until a local minimum is reached.
For $\hat{d} = 1$, a second-order kernel function is employed for $1 \leq d \leq 2$, while a fourth-order kernel is used for $1 \leq d \leq 6$ when $2 \leq \hat{d} \leq 5$.

For a generic matrix $A$, define its projection matrix as $P_{A}=A(A^{\top}A)^{-1}A^{\top}$. Let $\pi^{[m]}_{\ell}({x}, A^{\top}_{d}x)$ denote the uniformly convergent function of $\partial^{m}_{\text{vec}(A_{d*})}\widehat{\pi}_{\ell,h_{d}}(A^{\top}_{d}x;\widehat{\mathcal{G}},A_{d})$ for $m=0, 1, 2$, where $\text{vec}(\cdot)$ denotes the column-stacking vectorization operator, with $\pi^{[0]}_{\ell}({x},A^{\text{o} \top}x) = \pi_{\ell}(A^{\text{o} \top}x)$ for $\ell\in\mathcal{C}$.
The following theorem establishes the consistency of $\hat{d}$ to $d^{\text{o}}$ and the asymptotic normality of $\widehat{A}$:
\begin{thm} 
\label{Thm3.3_1}
Under assumptions {A1}--{A7} and {B1}--{B5}, 
\begin{align*}
\hat{d}\stackrel{p}{\longrightarrow}d^{\text{o}} \text{ and }\sqrt{n}\text{vec}\big(P_{\widehat{A}}- P_{A^{\text{o}}}\big)\stackrel{d}{\longrightarrow} N(0,\Sigma) \text{ as }n \longrightarrow \infty,
\end{align*}
where 
\begin{align*}
&\Sigma=\Big[\big(\text{vec}\big((I_{p}-P_{A^{\text{o}}})(I_{d^{\text{o}}},V^{\top}_{0})^{\top}\big)(A^{\text{o}\top}A^{\text{o}})^{-1}A^{\text{o}\top}
+A^{\text{o}}(A^{\text{o} \top}A^{\text{o}})^{-1}(I_{d^{\text{o}}},V^{\top}_{0})(I_{p}-P_{A^{\text{o}}})\big)^{\otimes 2}\Big],\text{ with}\\
&\text{vec}(V_{0})=\big(\sum^{k-1}_{\ell=1}E\big[ \big(\pi^{[1]}_{\ell}({X}, A^{\text{o} \top}X)\big)^{\otimes 2}\big]\big)^{-1}
\big(\sum^{k-1}_{\ell=1}\big(I(C=\ell)-\pi_{\ell}\big({X}, A^{\text{o}\top}X\big)\big)\pi^{[1]}_{\ell}\big({X}, A^{\text{o}\top}X\big)\big).
\end{align*}
\begin{proof}
See Appendix \ref{sec:A.3.1}.
\end{proof}
\end{thm}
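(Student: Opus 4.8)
The plan is to prove the two assertions in sequence: first the dimension consistency $\hat{d}\stackrel{p}{\longrightarrow}d^{\text{o}}$, and then, on the event $\{\hat{d}=d^{\text{o}}\}$, the asymptotic normality of $\widehat{A}$. A preliminary reduction makes both tractable. Theorem \ref{Thm3.1} gives $P(\widehat{\mathcal{G}}=\mathcal{G}^{\text{o}})\to 1$, so on an event of probability tending to one I may replace the estimated labels $I(i\in\widehat{\mathcal{G}}_{\ell})$ by the true membership indicators $I(C_i=\ell)$ throughout $\textsc{pss}(A_d,h_d)$ and $\widehat{\pi}_{\ell,h_d}$. Because $E[I(C=\ell)\mid A^{\text{o}\top}X]=\pi_{\ell}(A^{\text{o}\top}X)$ under \eqref{fn:SDR}, the PLS criterion is then a genuine nonparametric least-squares fit of the cluster indicators onto the reduced covariate $A_d^{\top}X$, and $\widehat{A}_d$ is an M-estimator whose population target minimizes the prediction error $Q(d)=\sum_{\ell=1}^{k-1}E[(I(C=\ell)-E[I(C=\ell)\mid A_d^{\top}X])^2]$ over admissible $d$-dimensional bases.

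For the dimension-selection consistency I would follow the standard information-criterion decomposition, showing $P(\textsc{pss}(d)>\textsc{pss}(d^{\text{o}}))\to 1$ for every $d\neq d^{\text{o}}$, treating the two regimes separately. For underfitting $d<d^{\text{o}}$, the assumption that $A^{\text{o}}$ spans the central subspace of dimension $d^{\text{o}}$ (one of B1--B5) forces a strictly positive approximation gap $Q(d)-Q(d^{\text{o}})>0$; since $\frac{2}{n}\textsc{pss}(\widehat{A}_d,\tilde{h}_d)\stackrel{p}{\longrightarrow}Q(d)$ and the penalty difference vanishes, the fit term dominates and the criterion strictly prefers $d^{\text{o}}$. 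For overfitting $d>d^{\text{o}}$, the population fit is unchanged, $Q(d)=Q(d^{\text{o}})$, because the central subspace already captures all dependence of $C$ on $X$; the common $O_p(n^{-1/2})$ fluctuation of $\frac{1}{n}\sum_{i,\ell}(I(C_i=\ell)-\pi_{\ell}(A^{\text{o}\top}X_i))^2$ cancels in the difference, leaving only the difference of integrated squared smoothing errors, which is $O_p(n^{-2r/(2r+d)})$ at the optimal bandwidth $\tilde{h}_d=O_p(n^{-1/(2r+d)})$. The penalty increment $\frac{\log n}{2}(n^{-2r/(2r+d)}-n^{-2r/(2r+d^{\text{o}})})$ is positive of exact order $\frac{\log n}{2}n^{-2r/(2r+d)}$ and therefore dominates the stochastic fit difference by the divergent factor $\log n$. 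Combining the two regimes yields $\hat{d}\stackrel{p}{\longrightarrow}d^{\text{o}}$.

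Given $\hat{d}=d^{\text{o}}$ with probability tending to one, I would derive the limit law of $\widehat{A}=\widehat{A}_{d^{\text{o}}}$ through the estimating equation $\partial_{\text{vec}(A_{d*})}\textsc{pss}(\widehat{A},\tilde{h}_{d^{\text{o}}})=0$. A Taylor expansion of the normal equations about the true local coordinate $A^{\text{o}}_{d*}$ gives the asymptotically linear representation $\sqrt{n}\,\text{vec}(\widehat{A}_{d*}-A^{\text{o}}_{d*})=H^{-1}\frac{1}{\sqrt{n}}\sum_{i=1}^n\sum_{\ell=1}^{k-1}(I(C_i=\ell)-\pi_{\ell}(A^{\text{o}\top}X_i))\pi^{[1]}_{\ell}(X_i,A^{\text{o}\top}X_i)+o_p(1)$, where the Hessian converges, $\frac{1}{n}\partial^2_{\text{vec}(A_{d*})}\textsc{pss}\stackrel{p}{\longrightarrow}H=\sum_{\ell=1}^{k-1}E[(\pi^{[1]}_{\ell}(X,A^{\text{o}\top}X))^{\otimes 2}]$, and the residual-weighted score obeys a central limit theorem. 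This identifies the per-observation influence function as $\text{vec}(V_0)$ and yields $\sqrt{n}\,\text{vec}(\widehat{A}_{d*}-A^{\text{o}}_{d*})\stackrel{d}{\longrightarrow}N(0,\text{Var}(\text{vec}(V_0)))$. A final delta-method step transports this to the projection: since $A_{d*}\mapsto P_A$ is smooth on full-rank matrices with differential $(I_p-P_{A^{\text{o}}})\,dA\,(A^{\text{o}\top}A^{\text{o}})^{-1}A^{\text{o}\top}+A^{\text{o}}(A^{\text{o}\top}A^{\text{o}})^{-1}\,dA^{\top}(I_p-P_{A^{\text{o}}})$ evaluated along the perturbation encoded by $V_0$, the continuous mapping and delta methods give $\sqrt{n}\,\text{vec}(P_{\widehat{A}}-P_{A^{\text{o}}})\stackrel{d}{\longrightarrow}N(0,\Sigma)$ with $\Sigma$ in the stated sandwich form.

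The main obstacle I anticipate is the uniform control of the leave-one-out kernel estimators $\widehat{\pi}^{-i}_{\ell,h_d}$ and their first and second $\text{vec}(A_{d*})$-derivatives, which enter both the fit expansion and the score/Hessian analysis. The crux is to show that the nonparametric bias of these estimators is $o_p(n^{-1/2})$ and that the leading stochastic term admits a U-statistic (Hoeffding) projection whose linear part produces the claimed score; this is precisely where the higher-order kernel and the bandwidth condition $r>\max\{(4+d)/4,(1+d)/2\}$ are used, to push the bias and the degenerate remainder terms below the parametric rate. A secondary difficulty is that the bandwidth $\tilde{h}_d$ is data-dependent, so I would establish a stochastic-equicontinuity argument showing that the estimating equations and their limits are insensitive to $h_d$ within an $O_p(n^{-1/(2r+d)})$ neighborhood, allowing $\tilde{h}_d$ to be treated as deterministic at the required order.
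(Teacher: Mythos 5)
Your proposal is correct and follows essentially the same route as the paper's proof: reduction to the true labels via the clustering consistency, an M-estimation/Taylor expansion of the pseudo-score with degenerate-$U$-statistic control of the leave-one-out kernel remainders, the penalty-dominates-overfit / bias-dominates-underfit dichotomy for $\hat d$, and a delta-method step carrying $\widehat{A}_{d^{\text{o}}}$ to $P_{\widehat{A}}$. The only organizational caveat is that the paper establishes the $\sqrt{n}$-rate of $\widehat{A}_{d}$ for every $d\ge d^{\text{o}}$ \emph{before} the overfitting comparison, since bounding the cross term in $\textsc{pss}\big(\widehat{A}_d,\tilde h_d\big)-\textsc{pss}\big(\widehat{A}_{d^{\text{o}}},\tilde h_{d^{\text{o}}}\big)$ by $O_p\big(n^{-2r/(2r+d)}\big)$ uses that rate, so your two stages should be interleaved accordingly.
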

\begin{rem}
Existing SDR methods for categorical responses, including those by
\cite{shin2014probability}, \cite{yao2019covariate}, and \cite{zhang2020maximum}, are typically limited to binary responses or continuous covariates. In contrast, the proposed method recovers the central subspace while accommodating certain discrete and categorical covariates, provided the indices meet the continuity condition.
\end{rem}

\end{subsection}

\begin{subsection}{Optimal Clustering and Refined Estimation}
\label{subsec:optimal_estimation}

Let $C^{\text{o}}$ denote the true cluster label associated with a new observation $(X_{0},Y_{0})$. Define $\widehat{C}$ as the Bayes classifier specified by
$\widehat{C}=\ell_{0}$ if $\pi(\ell_{0}|X_{0},Y_{0})=\max_{\ell}\pi(\ell |X_{0},Y_{0}), \ell_{0}\in\mathcal{C}$.
For an arbitrary classifier $\bar{C}\in\mathcal{C}$ constructed from $(X_{0},Y_{0})$,
the probability of correct membership is
\begin{align}
P(\bar{C}=C^{\text{o}})=E\bigg[\sum^{k}_{\ell=1} I(\bar{C}=\ell)E[I(C^{\text{o}}=\ell)|X_{0},Y_{0}]\bigg]= E\bigg[\sum^{k}_{\ell=1}I(\bar{C}=\ell)\pi(\ell|X_{0},Y_{0})\bigg].\label{pcc}
\end{align}
By the definition of the Bayes classifier, it holds that
\begin{align}
\pi (\ell_{1} |X_{0},Y_{0})-\pi(\ell_{2} | X_{0},Y_{0})\geq 0 \text{ for } \widehat{C}=\ell_{1} \text{ and } \bar{C}=\ell_{2},\text{ with } \ell_{1}, \ell_{2} \in \mathcal{C}. \label{prob_ineq}
\end{align}
Combining (\ref{pcc}) and (\ref{prob_ineq}), the classifier $\widehat{C}$ is optimal in the sense that 
$P(\widehat{C}=C^{\text{o}})\geq P(\bar{C}=C^{\text{o}})$.

In the clusterwise index density framework associated with the CID model, we estimate $g(y,v)$ for continuous responses by
\begin{align}
&\widehat{g}_{\hat{h}_{1},\hat{h}_{2}}\big(y,v;\widehat{\mathcal{G}},\widehat{\gamma}\big)=\int K_{2,\hat{h}_{1}}(y^{*}-y)d_{y^{*}}\widehat{G}_{\hat{h}_{2}}\big(y^{*},v;\widehat{\mathcal{G}},\widehat{\gamma}\big),
\label{gestimate}
\end{align}
with $\big(\hat{h}_{1},\hat{h}_{2}\big)=\argmax_{\{h_{1},h_{2}\}}\sum^{n}_{i=1} \sum^{k}_{\ell =1}I\big(i\in\widehat{\mathcal{G}}_{\ell} \big)\log \widehat{g}_{h_{1},h_{2}}\big(Y_{i},\widehat{\gamma}^{\top}_{\ell}Z_{i};\widehat{\mathcal{G}},\widehat{\gamma}\big)$.
This data-driven selection ensures optimal asymptotic convergence of $\widehat{g}_{\hat{h}_{1},\hat{h}_{2}}\big(y,v;\widehat{\mathcal{G}},\widehat{\gamma}\big)$ regarding the mean integrated squared error. The resulting estimator enables estimation of the posterior cluster probabilities in (\ref{fn:posterior}), given by
\begin{align}
\widehat{\pi}(\ell |x, y) = \frac{ \widehat{g}_{\hat{h}_{1},\hat{h}_{2}}\big(y,\widehat{\gamma}^{\top}_{\ell}z;\widehat{\mathcal{G}},\widehat{\gamma}\big)\widehat{\pi}_{\ell,\tilde{h}_{d}}\big(\widehat{A}^{\top}x;\widehat{\mathcal{G}},\widehat{A}\big)}{\sum_{\ell = 1}^{k}  \widehat{g}_{\hat{h}_{1},\hat{h}_{2}}\big(y,\widehat{\gamma}^{\top}_{\ell}z;\widehat{\mathcal{G}},\widehat{\gamma}\big) \widehat{\pi}_{\ell,\tilde{h}_{d}}\big(\widehat{A}^{\top}x; \widehat{\mathcal{G}},\widehat{A}\big)}, (x,y,\ell) \in \mathcal{X} \times \mathcal{Y} \times \mathcal{C}. \label{fn:posterior_estimate} 
\end{align}
For discrete responses, the posterior cluster probabilities are estimated by $\widehat{g}_{\tilde{h}}\big(y,v;\widehat{\mathcal{G}},\widehat{\gamma}\big)$, derived from $\widehat{G}_{\tilde{h}}\big(y,v;\widehat{\mathcal{G}},\widehat{\gamma}\big)$.
Given these estimators, the optimal clustering rule assigns $(X_{0},Y_{0})$ to Cluster $\ell_{0}$ if
$\ell_{0} \in \argmax_{\ell } \widehat{\pi}(\ell |X_{0}, Y_{0})$.
Applying this clustering rule to the sample $\{(X_{i},Y_{i})\}^{n}_{i=1}$ yields an updated clustering estimator $\widetilde{\mathcal{G}}$, leading to the refined SP method. 
The cluster index coefficient estimator $\widehat{\gamma}$ is updated to $\widetilde{\gamma}$, obtained by minimizing 
$\textsc{psis}\big(\gamma, \hat{h}; \widetilde{\mathcal{G}}\big)$.
Substituting $\widetilde{\mathcal{G}}$ for $\widehat{\mathcal{G}}$ in (\ref{fn:est1_GM})--(\ref{fn:PSS_d})
yields the refined estimator $\widetilde{A}$ of $\widehat{A}$. The posterior cluster probabilities are then re-estimated as

\vspace{-0.1in}
\begin{align}
&\widetilde{\pi}(\ell |x, y) = \frac{ \widehat{g}_{\hat{h}_{1},\hat{h}_{2}}\big(y,\widetilde{\gamma}^{\top}_{\ell}z;\widetilde{\mathcal{G}},\widetilde{\gamma}\big)\widehat{\pi}_{\ell,\tilde{h}_{d}}\Big(\widetilde{A}^{\top}x; \widetilde{\mathcal{G}},\widetilde{A}\Big)}{\sum_{\ell = 1}^{k}   \widehat{g}_{\hat{h}_{1},\hat{h}_{2}}\big(y,\widetilde{\gamma}^{\top}_{\ell}z;\widetilde{\mathcal{G}},\widetilde{\gamma}\big)\widehat{\pi}_{\ell,\tilde{h}_{d}}\Big(\widetilde{A}^{\top}x;\widetilde{\mathcal{G}},\widetilde{A}\Big)} \text{ for } (x,y,\ell) \in \mathcal{X} \times \mathcal{Y} \times \mathcal{C}. \label{fn:posterior_estimate1} 
\end{align}
The clustering estimator $\widetilde{\mathcal{G}}$ may be further refined by iteratively updating cluster assignments according to the optimal clustering rule, with posterior probabilities re-estimated at each iteration. The procedure is repeated until convergence.

\end{subsection}

\begin{subsection}{Selecting the Number of Clusters} \label{subsec:SIC}

By adopting the artificial likelihood approach of \cite{wang2009wilcoxon} for the binary process
$\{ I(Y \leq y): y \in \mathcal{Y}\}$, we construct a semiparametric information criterion for selecting the number of clusters:
\begin{align}
\textsc{spic}_{1}(k) = \frac{2}{n} \textsc{psis}\big(\widetilde{\gamma},\tilde{h};\widetilde{\mathcal{G}}\big) + \frac{k \log n }{2 n^{\frac{4}{5}}}, k \geq 1. \label{fn:sic1}
\end{align}
The penalty term in (\ref{fn:sic1}) reflects the discrepancy between each candidate model's estimated and limiting distribution functions.
Motivated by the Bayesian information criterion of \cite{S1978} for linear models with normally distributed errors, we propose an alternative criterion based on the integrated log pseudo-likelihood for the binary process $\{I(Y \leq y): y \in \mathcal{Y} \}$:
\begin{align}
\textsc{spic}_{2}(k) = & -  \frac{1}{n} \sum^{n}_{i=1}  \int \Bigg\{ \sum_{\ell = 1}^{k}I\big(i \in \widetilde{\mathcal{G}}_{\ell}\big) \bigg[ 
I(Y_i \leq y) \log \widehat{G}^{-i}_{0,\tilde{h}}\big(y, \widetilde{\gamma}^{\top}_{\ell} Z_i; \widetilde{\mathcal{G}},\widetilde{\gamma}\big)\nonumber \\  
& +I(Y_i > y) \log \Big(1-  \widehat{G}^{-i}_{1,\tilde{h}}\big(y, \widetilde{\gamma}^{\top}_{\ell} Z_i; \widetilde{\mathcal{G}},\widetilde{\gamma}\big)\Big) \bigg]  \Bigg\} d \widehat{F}(y) + \frac{k \log n }{ n^{\frac{4}{5}}}, k \geq 1,  \label{fn:sic2}
\end{align}
where
$\widehat{G}^{-i}_{s,\tilde{h}}\big(y, \widetilde{\gamma}^{\top}_{\ell} Z_i; \widetilde{\mathcal{G}},\widetilde{\gamma}\big) =\widehat{G}^{-i}_{\tilde{h}}\big(y, \widetilde{\gamma}^{\top}_{\ell} Z_i; \widetilde{\mathcal{G}},\widetilde{\gamma}\big) +(-1)^{s}I\big( \widehat{G}^{-i}_{\tilde{h}}\big(y, \widetilde{\gamma}^{\top}_{\ell} Z_i; \widetilde{\mathcal{G}},\widetilde{\gamma}\big) = s \big)$
for $i=1,\dots, n$ and $s=0,1$. The penalty term in (\ref{fn:sic2}) is chosen to mitigate the risk of overestimating the number of clusters, a tendency observed under the alternative penalty $k \log n/2n^{4/5}$. 

Assume that 
$\lim_{n\longrightarrow\infty}P\big(\widetilde{\mathcal{G}}=\mathcal{G}^{\text{o}}\big)=1$ for $k \geq 1$.
Let $k^*$ denote the true number of clusters and define $\mathcal{G}^{*\text{o}} = \{\mathcal{G}^{*\text{o}}_{1}, \dots, \mathcal{G}^{*\text{o}}_{k^*}\}$ as the true clusters of the individual-based index set $\{1,\dots, n\}$. Let $C^{*}$ be the latent cluster variable corresponding to $\mathcal{G}^{*\text{o}}$, taking values in $\mathcal{C}^{*} = \{1, \dots, k^{*}\}$, and
set $\gamma^{*\text{o}} =(\gamma^{*\text{o}}_{1}, \dots, \gamma^{*\text{o}}_{k^*})$, with $\gamma^{*\text{o}}_{\ell}=\gamma^{\text{o}}_{\ell}$ for all $\ell\in\mathcal{C}^{*}$.
The criteria (\ref{fn:sic1}) and (\ref{fn:sic2}) yield consistent cluster number selection, as the minimizer $\hat{k}^{*}_{s}$ of $\textsc{SPIC}_s$ consistently estimates $k^{*}$ for $s=1,2$.

\begin{thm} \label{Thm3.4}
Under assumptions {A1}--{A4} and {A5} $($with $\mathcal{G}^{\text{o}} \in \mathcal{C}_{\mathcal{G}}=\{ \mathcal{G}^{\text{o}}: \mathcal{G}_{\ell}^{*\text{o}} = \cup_{\{\ell_1:\mathcal{G}^{\text{o}}_{\ell_1} \subseteq \mathcal{G}_{\ell}^{*\text{o}},\gamma^{\text{o}}_{\ell_1}=\gamma^{*\text{o}}_{\ell}\}}\mathcal{G}^{\text{o}}_{\ell_1}$ $\forall \ell\in\mathcal{C}^{*} \}$$)$, 
$\hat{k}^{*}_{s} \stackrel{p}{\longrightarrow} k^{*}$ as $n {\longrightarrow} \infty$, $s=1, 2$. 
 \begin{proof}
See Appendix \ref{sec:A.4}.
\end{proof}
 \end{thm}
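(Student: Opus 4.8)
The plan is to establish consistency through the standard information-criterion dichotomy: show that both underestimation ($k < k^*$) and overestimation ($k > k^*$) of the number of clusters are asymptotically ruled out over the finite grid of candidate values, so that $\hat{k}^*_s$ concentrates on $k^*$. I would begin, for each candidate $k$, with the decomposition
\[
\textsc{spic}_1(k) - \textsc{spic}_1(k^*) = \frac{2}{n}\big(\textsc{psis}(\widetilde{\gamma}, \tilde{h}; \widetilde{\mathcal{G}}) - \textsc{psis}(\widetilde{\gamma}^*, \tilde{h}^*; \widetilde{\mathcal{G}}^*)\big) + \frac{(k-k^*)\log n}{2 n^{4/5}},
\]
where the starred quantities are evaluated at $k^*$. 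The assumed clustering consistency $\lim_{n}P(\widetilde{\mathcal{G}} = \mathcal{G}^{\text{o}}) = 1$, together with the asymptotic behavior of $\widehat{\gamma}$ in (\ref{normality:psispe}), lets me replace the data-driven $(\widetilde{\mathcal{G}}, \widetilde{\gamma})$ by the deterministic population-optimal $k$-cluster partition $(\mathcal{G}^{\text{o}}, \gamma^{\text{o}})$ up to $o_p$ terms on an event of probability tending to one. It then suffices to analyze the limit $\sigma^2(k)$ of $\frac{2}{n}\textsc{psis}(\gamma^{\text{o}}, h; \mathcal{G}^{\text{o}})$ and the rate at which the empirical criterion approaches it.

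For the underestimation regime $k < k^*$, the condition $\mathcal{G}^{\text{o}} \in \mathcal{C}_{\mathcal{G}}$ in the modified assumption A5 forces the best $k$-cluster partition to merge truly distinct sub-populations, i.e. clusters with $\gamma^{\text{o}}_{\ell_1} \neq \gamma^{\text{o}}_{\ell_2}$. By the identifiability requirement $g(y, \gamma_{\ell_1}^{\top} z) \neq g(y, \gamma_{\ell_2}^{\top} z)$ built into the CID model, pooling such clusters induces an irreducible approximation bias in the kernel-weighted estimator $\widehat{G}$, so that $\sigma^2(k) > \sigma^2(k^*)$ by a strictly positive constant. Consequently the goodness-of-fit difference converges to this positive constant while the penalty difference $\frac{(k-k^*)\log n}{2n^{4/5}}$ vanishes, giving $P(\textsc{spic}_1(k) > \textsc{spic}_1(k^*)) \to 1$.

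The overestimation regime $k > k^*$ is the crux. Here $\mathcal{C}_{\mathcal{G}}$ guarantees that the optimal $k$-cluster partition merely refines the true partition into sub-clusters sharing a common coefficient, so the true conditional distribution is still reproduced and $\sigma^2(k) = \sigma^2(k^*)$; the empirical fit can improve only through estimation noise. The main obstacle is to bound this improvement at the right rate. Using the MISE-optimal bandwidth $\tilde{h} \sim n^{-1/5}$ for the one-dimensional index smoothing, each of the squared bias and variance of $\widehat{G}$ is $O_p(n^{-4/5})$; splitting a homogeneous cluster leaves the leading bias unchanged (same $G$, same kernel order, so biases cancel in the difference) and only inflates the variance by the same order. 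Expanding $\frac{2}{n}\textsc{psis}(k) - \sigma^2(k^*)$ around the true $G$—with the leave-one-out construction removing the own-observation contribution—the squared estimation-difference term is $O_p(n^{-4/5})$ and the cross term is of smaller order $O_p(n^{-9/10})$, so that $\frac{2}{n}|\textsc{psis}(k) - \textsc{psis}(k^*)| = O_p(n^{-4/5}) = o_p(\log n / n^{4/5})$. Since the penalty difference is of exact order $\log n / n^{4/5}$ and positive, it dominates and $P(\textsc{spic}_1(k) > \textsc{spic}_1(k^*)) \to 1$.

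Combining the two regimes over the finite candidate set yields $\hat{k}^*_1 \stackrel{p}{\to} k^*$. For $\textsc{spic}_2$ I would run the parallel argument with the integrated log pseudo-likelihood in place of the integrated squared error: underestimation produces a strictly positive Kullback--Leibler-type gap (again by identifiability), making the negative log-pseudo-likelihood term strictly larger in the limit, whereas overestimation improves it only by an $O_p(n^{-4/5})$ estimation term. The $\textsc{spic}_2$ penalty $k\log n / n^{4/5}$—twice that of $\textsc{spic}_1$, matching the $-2\log L$ scaling and chosen to suppress overestimation as noted after (\ref{fn:sic2})—again dominates the overfitting gain, so $\hat{k}^*_2 \stackrel{p}{\to} k^*$. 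The single delicate point common to both criteria, and the one on which the entire consistency hinges, is the sharp $O_p(n^{-4/5})$ control of the overspecification fit-improvement, since it must be strictly smaller, by the $\log n$ factor, than the penalty increment.
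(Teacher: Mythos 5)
Your proposal is correct and follows essentially the same route as the paper's Appendix proof: a dichotomy in which underspecification (or any partition outside $\mathcal{C}_{\mathcal{G}}$) leaves a strictly positive asymptotic fit gap $b_s^2(k)>0$ by identifiability (Jensen's inequality for $\textsc{spic}_2$), while overspecification within $\mathcal{C}_{\mathcal{G}}$ improves the fit only by $o_p(\log n/n^{4/5})$, which the penalty increment of exact order $\log n/n^{4/5}$ dominates. The one point you assert rather than prove---the sharp control of the cross term in the overspecified case---is exactly where the paper invests its technical effort, establishing degeneracy of the relevant second-order $U$-processes (via $E[\varepsilon_y\mid W]=0$ and the leave-one-out construction) and applying Sherman's maximal inequality to obtain the $O_p(n^{-4/5})$ and smaller rates you claim.
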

\noindent Numerical results show that $\text{SPIC}_{2}$ tends to select more clusters than $\text{SPIC}_{1}$. While the log pseudo-likelihood may replace the integrated log pseudo-likelihood in (\ref{fn:sic2}) for the response variable, such a substitution leads to degraded performance due to instability from extreme values in the density estimates.
\end{subsection}

\end{section}

\begin{section}{Estimation Implementation}
\label{sec:estimation II}

This section introduces a heuristic initialization strategy, details the computational algorithm for the SP estimation, and describes the overall implementation procedure. Appendix \ref{spcode-flowcharts} provides the pseudocode for the SP estimation procedure and the flowcharts illustrating the proposed method and its variants.

\begin{subsection}{A Heuristic Initialization Strategy}

\label{subsec:hsmethod}

To minimize the objective function $\textsc{psis}_{\textsc{sp}}(\beta,\gamma;\lambda)$, we begin by fitting a single-index distribution model to approximate $F(y|x)$. Based on this model, we compute the residual processes
$\big\{ I(Y_i \leq y) - \widehat{G}^{-i}_{h}\big(y,\widehat{\gamma}_1^{\top} Z_i;\widehat{\gamma}_{1}\big): y \in \mathcal{Y}\big\}_{i = 1}^n$
or the residuals
$\big\{Y_i - \int y d_y\widehat{G}_{h}^{-i}\big(y, \widehat{\gamma}_1^{\top} Z_i;\widehat{\gamma}_{1}\big)\big\}_{i = 1}^n$,
 where $\widehat{G}_h^{-i}(y, v; \gamma_{1})$ is the leave-one-out version of 
 \begin{align}
\widehat{G}_{h}(y,v; \gamma_1) =\frac{\sum_{i=1}^n I(Y_i \leq y) K_{2,h}\big(\gamma^{\top}_{1}Z_{i} - v\big)}{\sum_{i=1}^n K_{2,h}\big(\gamma^{\top}_{1}Z_i - v\big)}, i=1,\dots,n.
 \label{fn:G_k_hat}
\end{align} 
The index coefficient vector $\gamma_{1}$ is estimated by a minimizer $\widehat{\gamma}_{1}$ of the objective function
\begin{align}
\textsc{psis}(\gamma_{1},h) = \frac{1}{2}\sum_{i=1}^n \int \Big(I(Y_i \leq y) - \widehat{G}^{-i}_h\big(y, \gamma^{\top}_{1}Z_i; \gamma_{1}\big) \Big)^2 d\widehat{F}(y).\label{fn:psis}
\end{align}
We then apply a distribution-free clustering method, such as  $k$-means or hierarchical clustering, to these quantities to construct an initial clustering estimator $\bar{\mathcal{G}}^{\ell} = \{\bar{\mathcal{G}}^{\ell}_{1}, \dots, \bar{\mathcal{G}}^{\ell}_{\ell} \}$ for $\ell$ ranging from 2 to $\bar{k}$, where $\bar{k}$ is set, for instance, to $\lfloor \sqrt{n/\log n} \rfloor$. Given this clustering estimator, we compute the corresponding pseudo least integrated squares (PLIS) estimator $\bar{\gamma}^{\ell}$ by minimizing the criterion in (\ref{fn:psis1}), with $\mathcal{G}$ replaced by $\bar{\mathcal{G}}^{\ell}$. The resulting subject index coefficient estimator is denoted by $\bar{\beta}^{\ell}$.

The PLIS estimator is computed through the following steps:
\begin{itemize}
\item[] Step 1. Obtain the preliminary estimator $\big(\bar{\gamma}_{1}, \bar{h}\big)$ by minimizing $\textsc{psis}(\gamma_{1},h)$.
\item[] Step 2. Set $\hat{h} = n^{3/80} \bar{h}$ and compute $\widehat{\gamma}_{1}$ as the minimizer of $\textsc{psis}\big(\gamma_{1},\hat{h}\big)$.
\end{itemize}
To enhance the asymptotic mean integrated squared error performance of $\widehat{G}_h\big(y,v; \widehat{\gamma}_{1}\big)$, $\hat{h}$  is updated to $\tilde{h}$, the minimizer of
$\textsc{psis}\big(\widehat{\gamma}_{1},h\big)$.
For numerical implementation, the initial value of $\gamma_{1}$ in Step 1 is obtained using inverse regression methods \citep[cf.][]{li1991sliced, cook1991sliced}. The PLIS estimator is subsequently computed via iterative score-based updates. 

Since $(\mathcal{G},\gamma)$ and $\beta$ contain equivalent information, the estimator $\widehat{G}_{h}(y,v;\beta)$ in (\ref{fn:G_n_hat}) can be written as $\widehat{G}_h\big(y, v; \mathcal{G},\gamma\big)$ in (\ref{fn:G_n_hat_subject}).
Although empirical results indicate that residual-based clustering (RC) estimates may serve as effective initial values for optimizing the objective function in the SP estimation, the RC procedure does not offer a direct benefit within the proposed model.
Instead, our method refines the preliminary clustering estimator $\bar{\mathcal{G}}^{\ell}$ and its associated cluster index coefficient estimator $\bar{\gamma}^{\ell}$, for $\ell=2,\dots,\bar{k}$, by reassigning data points according to their integrated squared deviations:
\begin{align*}
\textsc{pis}_i(\ell_1) = \int \big(I(Y_i \leq y) - \widehat{G}^{-i}_{\tilde{h}}\big(y,\bar{\gamma}^{\ell\top}_{\ell_{1}}Z_i; \bar{\mathcal{G}}^{\ell}, \bar{\gamma}^{\ell}\big) \big)^2 d\widehat{F}(y), i = 1, \dots, n, \ell_{1} = 1, \dots, \ell.
\end{align*}
A new clustering estimator $\check{\mathcal{G}}^{\ell}$ is obtained by assigning the $i$th data point to $\check{\mathcal{G}}^{\ell}_{\ell^\text{o}}$ if 
$\textsc{pis}_i(\ell^{\text{o}}) = \min_{\{1 \leq \ell_{1} \leq \ell\}}$ $\textsc{pis}_i(\ell_{1})$, $i = 1, \dots, n$.
The updated pair $\big(\check{\mathcal{G}}^{\ell}, \check{\gamma}^{\ell}\big)$ replaces
$\big(\bar{\mathcal{G}}^{\ell}, \bar{\gamma}^{\ell}\big)$ if
$\textsc{psis}\big(\check{\gamma}^{\ell},\hat{h}; \check{\mathcal{G}}^{\ell}\big) < \textsc{psis}\big(\bar{\gamma}^{\ell},\hat{h}; \bar{\mathcal{G}}^{\ell}\big)$;
otherwise, $\big(\check{\mathcal{G}}^{\ell},\check{\gamma}^{\ell}\big)$ is set to $\big(\bar{\mathcal{G}}^{\ell}, \bar{\gamma}^{\ell}\big)$, 
where $\check{\gamma}^{\ell}$ is obtained by minimizing $\textsc{psis}\big(\gamma^{\ell},\hat{h};\check{\mathcal{G}}^{\ell}\big)$ with respect to $\gamma^{\ell}$. 
This refinement procedure is iterated until the objective function attains its minimum or a specified convergence threshold is met. 
Based on the heuristic solution (HS) estimator $\big(\check{\mathcal{G}}^{\ell}, \check{\gamma}^{\ell}\big)$, the corresponding subject index coefficient estimators are denoted by $\check{\beta}^{\ell}$.

\begin{rem} To mitigate the computational burden of minimizing the objective function in the SP estimation, the refined estimation procedure in Section \ref{subsec:optimal_estimation} can be directly employed to update the HS estimator for each $\ell\in\{2,\dots,\bar{k}\}$. 
Although the resulting clustering estimator derived from the refined HS method may not consistently recover $\mathcal{G}^{\text{o}}$, our numerical experiments demonstrate that it substantially outperforms the SP method with RC-based initialization and performs comparably to, or only slightly worse than, the refined SP method under certain conditions.
\end{rem}
\end{subsection}

\begin{subsection}{Computational Algorithm for the SP Estimation}
 
\label{subsec:PSISSPmethod}

We recast the minimization of $\textsc{psis}_{\textsc{sp}}(\beta, \gamma_{(1)}; \lambda)$ as the constrained minimization of
\begin{align*}
\textsc{psis}_{\textsc{sp}}(\beta, \eta, \gamma_{(1)}; \lambda)= \frac{1}{2} \sum_{i=1}^n \int \big( I(Y_i \leq y)  - \widehat{G}_{\hat{h}}^{-i}\big(y,\beta_{i}^{\top} Z_i; \beta\big) \big)^2 d \widehat{F}(y) 
+ \lambda \sum^{n}_{i=1}\min_{\ell} \|\eta_{i} -\gamma_{(1)\ell}\|_{1} 
\end{align*}
subject to $\beta_{(1)} = \eta$. This formulation separates the objective function into two components: the first depends solely on $\beta$ and the second on $(\eta, \gamma_{(1)})$, enabling efficient optimization via the alternating direction method of multipliers (ADMM) algorithm \citep{boyd2011distributed}.
To implement this strategy, we adopt the augmented Lagrangian method, yielding the modified objective:
\begin{align}
\textsc{psis}_{\textsc{sp}}(\beta,\eta, \gamma_{(1)},\nu; \lambda) = \textsc{psis}_{\textsc{sp}}(\beta, \eta, \gamma_{(1)}; \lambda) + \nu^{\top} \big(\beta_{(1)} - \eta\big) + \frac{1}{2} \big\| \beta_{(1)} - \eta \big\|^2, \label{psisp_sep_2}
\end{align}
where $\nu$ is a vector of Lagrange multipliers. The minimization proceeds by iteratively updating the parameters $\beta$, $\eta$, 
$\gamma_{(1)}$, and $\nu$ for a given value of $\lambda$. At iteration $m\geq 0$, we denote the updated values by $\widehat{\beta}^{(m)}$, $\widehat{\eta}^{(m)}$, 
$\widehat{\gamma}^{(m)}$, and $\widehat{\nu}^{(m)}$. The algorithm terminates when the primal residual
$\big\|\widehat{\beta}^{(m+1)}_{(1)}-\widehat{\eta}^{(m+1)}\big\|$ falls below a predefined tolerance. 
Initialization is performed by setting $\widehat{\beta}^{(0)}$ to the HS estimates $\{\check{\beta}^{\ell}: \ell=k,\dots,\bar{k}\}$ in Section \ref{subsec:hsmethod}. The algorithm converges efficiently across a sequence of regularization parameters
$\lambda_{1} < \dots < \lambda_{J}$, where $\lambda_{1}>0$ and $\lambda_{J}=\max_{\{i,\ell_{1}\in\mathcal{C}\}}\|\check{\beta}^{\ell}_{(1)i}-\check{\gamma}^{k}_{(1)\ell_{1}}\|_{1}$.

To minimize (\ref{psisp_sep_2}) with respect to $(\eta, \gamma)$, we reformulate the objective function as a difference of convex functions in $\eta$ given $\{\beta_{(1)}, \nu, \gamma_{(1)}, \lambda\}$:
\begin{align}
\bigg( \frac{1}{2} \big\| \beta_{(1)} - \eta + \nu \big\|^2 + \lambda \sum^{n}_{i=1}\sum_{\ell = 1}^{k} \big\|\eta_{i} -\gamma_{(1)\ell}\big\|_{1} \bigg)  - \lambda \sum_{i = 1}^n \max_{\ell} \sum_{ \{ \ell_1: \ell_1 \neq \ell \}} \big\| {\eta}_i  - {\gamma}_{(1)\ell_1} \big\|_1 . \label{sep_eta_1}
\end{align}
Minimization of (\ref{sep_eta_1}) with respect to $\eta$ and $\gamma_{(1)}$ proceeds via a block coordinate descent algorithm,
wherein the update for $\eta$ employs a difference of convex functions programming \citep{an2005dc}.
Although it involves computational challenges due to the nonconvex structure of the problem, these difficulties are effectively addressed through a strategy similar to that used in the $\beta$-update step. 

The SP estimation procedure is implemented via the following iterative algorithm:
\begin{itemize}
\item \textbf{($\boldsymbol{\beta}$-minimization)} The update $\widehat{\beta}^{(m+1)}$ is obtained via the iterative scheme:
\begin{align}
 \beta_{(1)*}^{(s+1)}  = &\beta_{(1)*}^{(s)} - \Big( I_1\big(\beta^{(s)}\big) + I_{*} \Big)^{-1} \Big[ S_1\Big(\beta^{(s)}\Big) + \Big(\beta^{(s)}_{(1)*} - \widehat{\eta}^{(m)}_{*} + {\widehat{\nu}^{(m)}_{*}}\Big)\Big],  \nonumber\\
\beta_{(2)*}^{(s+1)} =& \beta_{(2)*}^{(s)} - I^{-1}_2 \Big(\beta^{(s)}\Big)S_2\Big(\beta^{(s)}\Big), s\geq 0,  \label{alg:SP_beta}
\end{align}
where $\beta^{(0)}= \widehat{\beta}^{(m)}$, $\widehat{\eta}^{(0)}= \widehat{\beta}^{(0)}_{(1)}$, $\widehat{\nu}^{(0)} = 0$, $I_{*}$ denotes the identity matrix of dimension $nq \times nq$ with rows and columns corresponding to known components in $\beta_{(1)}$ removed,
\begin{align*}
S_t(\beta)  = &- \sum_{i=1}^n \int \big( I(Y_i \leq y) - \widehat{G}_{\hat{h}}^{-i}\big(y,\beta_i^{\top}Z_i; \beta\big) \big)  \partial_{\beta_{(t)*}} \widehat{G}_{\hat{h}}^{-i}\big(y,\beta_i^{\top}Z_i; \beta\big)  d\widehat{F}(y),\\
I_t(\beta)  =& \sum_{i=1}^n \int \big( \partial_{\beta_{(t)*}} \widehat{G}_{\hat{h}}^{-i}\big(y,\beta_i^{\top}Z_i; \beta\big) \big)^{{\otimes}2} d\widehat{F}(y), t=1,2.
\end{align*}
\item \textbf{($\boldsymbol{(\eta,\gamma_{(1)})}$-minimization)} The updates $\widehat{\eta}^{(m+1)}$ and $\widehat{\gamma}^{(m+1)}_{(1)}$ are computed via the iterative updates 
\begin{align}
&\eta^{(s+1)}=   \argmin_{\eta} \Big[ \frac{1}{2} \Big\| \widehat{\beta}^{(m+1)}_{(1)} - \eta+ \widehat{\nu}^{(m)} \Big\|^2 + \lambda \sum^{n}_{i=1}\sum_{\ell = 1}^{k} \Big\|\eta_{i} - \gamma^{(s)}_{(1)\ell}\Big\|_{1}- \lambda \partial_{\eta}^{\top} S\Big(\eta^{(s)}, \gamma^{(s)}_{(1)} \Big) \big(\eta- \eta^{(s)}\big) \Big]  \nonumber \\
&\gamma^{(s+1)}_{(1) \ell} =  \bigg( \median_{\big\{i: \ell \in \argmin\limits_{\ell_1} \big\| {\eta}^{(s+1)}_{i} - {\gamma}^{(s)}_{(1)\ell_1}  \big\|_1 \big\} } {\eta}^{(s+1)}_{ij} \bigg),\, \ell \in \mathcal{C}, s\geq 0, \label{alg:SP_gamma}
\end{align}
with $\big(\eta^{(0)},\gamma^{(0)}_{(1)} \big) =\big(\widehat{\eta}^{(m)},\widehat{\gamma}^{(m)}_{(1)} \big)$ and
$\partial_{\eta_{ij}} S (\eta, \gamma_{(1)} ) = \sum_{\big\{\ell: \ell \neq \argmax\limits_{\ell_1} \sum\limits_{ \{ \ell_2: \ell_2 \neq \ell_1 \} }\| \eta_{i} - \gamma_{(1) \ell_2}\|_1 \big\} }\text{sgn} \big(\eta_{ij} - \gamma_{(1) \ell j} \big)$.

The update $\eta^{(s+1)}$ is computed via the following iterative subroutine, indexed by $s_1\geq 0$, with $i=1,\dots, n$ and $\ell\in\mathcal{C}$:\begin{itemize}
  \item \textbf{($\boldsymbol{\eta}$-minimization)} 
  \begin{eqnarray}
  \hspace{-0.1in}\bar{\eta}^{(s_1+1)}=  \frac{1}{k+1} \Big(\widehat{\beta}^{(m+1)}_{(1)} + \widehat{\nu}^{(m)}+ \sum_{\ell = 1}^{k} \Big(\gamma^{(s)}_{(1)\ell} \otimes 1_{n} +  \delta_{\ell}^{(s_1)} 
   - u_{\ell}^{(s_1)}\Big)+ \lambda \partial_{\eta}^{\top} S\Big( \eta^{(s)}, \gamma^{(s)}_{(1)} \Big) \Big),  \label{alg:SP_eta}
  \end{eqnarray}
  where $1_{n}$ denotes the $n$-dimensional vector of ones, $\delta_{i\ell}^{(0)} = \eta_{i}^{(s)} - \gamma_{(1)\ell}^{(s)}$, and $u^{(0)}_{\ell}=0$.
  \item \textbf{($\boldsymbol{\delta}$-minimization)} 
  
  \begin{align}
  &\delta_{i \ell }^{(s_1+1)} = \text{diag}\left(\max \Bigg\{0, 1 - \frac{\lambda}{\big\| \big(\bar{\eta}_i^{(s_1+1)} - \gamma_{(1)\ell}^{(s)} + u_{i \ell}^{(s_1)}\big)_j \big\|}\Bigg\}\right) \Big(\bar{\eta}_{i}^{(s_1+1)} - \gamma_{(1)\ell}^{(s)} + u_{i \ell}^{(s_1)}\Big). \label{alg:SP_delta}
  \end{align}
\item \textbf{(dual variable update)} 
  \begin{align}
  u^{(s_1+1)}_{i \ell } = u^{(s_1)}_{i \ell } + \bar{\eta}^{(s_1+1)}_{i} - \gamma^{(s)}_{(1)\ell} - \delta^{(s_1+1)}_{i \ell}.  \label{alg:SP_u}
  \end{align}
  \end{itemize}
\item \textbf{(dual variable update)} The update $\widehat{\nu}^{(m+1)}$ is given by
\begin{align}
\widehat{\nu}^{(m+1)}= \widehat{\nu}^{(m)} + \widehat{\beta}^{(m+1)}_{(1)} - \widehat{\eta}^{(m+1)}. \label{alg:SP_nu}
\end{align}
\end{itemize}
Since $I_1(\beta)$ is at least positive semi-definite, $I_1(\beta) + I_{*}$ is guaranteed to be positive definite, ensuring the numerical stability of the $\beta$-minimization step. Notably, the update for $\beta_{(2)}^{(s+1)}$ is omitted when $\beta = \beta_{(1)}$. 

\end{subsection}

\begin{subsection}{Computational Procedure}
\label{subsec:comp_procedure}
 
We implement the proposed estimation procedure in the following steps:
\begin{description}
\item Step 1. Fit a single-index distribution model $G(y,\gamma^{\top}_{1}z)$ to approximate $F(y|x)$, estimating $G(y,v)$ and $\gamma_{1}$ using (\ref{fn:G_k_hat}) 
and (\ref{fn:psis}), respectively. 
\item Step 2. Construct an initial clustering estimate $\bar{\mathcal{G}}^{\ell}$ by applying a clustering method to $\big\{ I(Y_i \leq y) - \widehat{G}^{-i}_{\tilde{h}}\big(y,\widehat{\gamma}_1^{\top} Z_i;\widehat{\gamma}_{1}\big): y \in \mathcal{Y}\big\}_{i = 1}^n$ or 
$\big\{Y_i - \int y\, d_y\widehat{G}_{\tilde{h}}^{-i}\big(y, \widehat{\gamma}_1^{\top} Z_i;\widehat{\gamma}_{1}\big)\big\}_{i = 1}^n$, 
and obtain the corresponding cluster index coefficient estimate $\bar{\gamma}^{\ell}$ using (\ref{fn:psis1}), with $\mathcal{G}$ replaced by $\bar{\mathcal{G}}^{\ell}$, for each $\ell\in\{2,\dots, \bar{k}\}$.
\item Step 3.  Refine the RC estimate $\big(\bar{\mathcal{G}}^{\ell}, \bar{\gamma}^{\ell}\big)$ to obtain the HS estimate $\big(\check{\mathcal{G}}^{\ell}, \check{\gamma}^{\ell}\big)$ by
iteratively applying the refinement procedure for each $\ell\in\{2,\dots, \bar{k}\}$.
\item Step 4. Obtain the SP estimate $\big(\widehat{\mathcal{G}}, \widehat{\gamma}\big)$ using the algorithm in Section \ref{subsec:PSISSPmethod}, with initial values provided by the HS estimates $\big\{\big(\check{\mathcal{G}}^{\ell}, \check{\gamma}^{\ell}\big): \ell =k ,\dots, \bar{k}\big\}$.
\item Step 5. Estimate the posterior cluster probabilities in (\ref{fn:posterior_estimate}) by fitting the SDR-based model using (\ref{fn:est1_GM})--(\ref{fn:PSS_d}) and the clusterwise index density model using (\ref{gestimate}).
\item Step 6. Update $\widehat{\mathcal{G}}$ to $\widetilde{\mathcal{G}}$ by applying the optimal clustering rule in Section \ref{subsec:GMestimation}, and obtain the corresponding cluster index coefficient estimate $\widetilde{\gamma}$ using (\ref{fn:psis1}), with $\mathcal{G}$ replaced by $\widetilde{\mathcal{G}}$.
\item Step 7. Select the number of clusters as $\hat{k}^{*}_{1}$ using $\text{SPIC}_{1}$ in (\ref{fn:sic1}), or as $\hat{k}^{*}_{2}$ 
using $\text{SPIC}_{2}$ in (\ref{fn:sic2}). 
\end{description} 

\end{subsection}

\end{section}

\begin{section}{Monte Carlo Simulations} 
\label{sec:simulation}
We conducted simulation studies with sample sizes of 100, 200, 300, and 400 to assess the proposed method's finite-sample performance compared to its variants. Each scenario was replicated 500 times to ensure result stability. Supplementary tables are provided in Appendix \ref{table-figure}.

\begin{subsection}{Simulation Design and Performance Metrics} \label{sec:simulation_settings}
Data were generated from the CID model to assess the proposed methodology.
The simulation employed the following bivariate function: 
\begin{align*}
G(y,v) = \Phi\bigg(\frac{y - (v+5)^\frac{3}{5}}{\sigma}\bigg),
\end{align*}
where $\Phi(u)$ denotes the standard normal cumulative distribution function, and $\sigma>0$ is a scale parameter. The cluster-invariant coefficients were set to $\gamma_{(2)} = (1,-1,1,-1,1)^{\top}/\sqrt{5}$ across all simulation settings. The first four components of $Z_{(2)}$ were drawn from a multivariate normal distribution $N_{4}(0_{4}, 0.8 I_4 + 0.2 1_4 1_4^{\top})$, and the fifth component was independently drawn from $\{-0.5,0.5\}$ with equal probability. 

The first simulation setting was designed to reflect characteristics observed in a real estate valuation study. The number of clusters was set to $k = 2$, and the covariate $Z_{(1)} = 1$ served as the baseline for cluster-specific intercepts. Three CID models were considered, differing in the cluster-specific coefficients and the scale parameter: 
\begin{align*}
\text{M1.} ~ \gamma_{(1)} = (0, 2.5)^{\top},\sigma = 0.1;~
\text{M2.} ~ \gamma_{(1)}= (0,1.5)^{\top},\sigma = 0.1; ~
\text{M3.}~  \gamma_{(1)}= (0, 2.5)^{\top},\sigma = 0.15.
\end{align*}
Relative to model {M1}, model {M2} induces smaller differences in the cluster-specific intercepts, whereas model {M3} introduces greater variability in the response variable. Consequently, {M1} yields the clearest separation between the conditional distributions of the two clusters. Two cluster membership mechanisms were considered:
\begin{align*}
&\text{Covariate-independent membership with } (\pi_1,\pi_2) = (0.7,0.3); \\
&\text{Covariate-dependent membership via } \pi(1|x)=0.25+\frac{2.5\phi\big(2.8\alpha^{\top}_{1}x\big)}{1+\phi\big(2.8\alpha^{\top}_{1}x\big)}, \text{ with }\alpha_1 = (1,0,0,0,0.29)^{\top}, 
\end{align*}
where $\phi(u)$ denotes the standard normal density function.

In the second simulation scenario, the structure from Scenario I was extended to accommodate $k = 3$ clusters. The cluster-specific intercepts and scale parameter in the CID model were set to
\begin{align*}
\text{M4.}~ \gamma_{(1)}=(0, 2.5, 5)^{\top},\sigma=0.125. 
\end{align*}
Two cluster membership models were considered:
\begin{align*}
&\text{Covariate-independent membership with } \pi_1 = \pi_2 = \pi_3; \\
&\text{Covariate-dependent membership via } \pi( \ell | x) = \frac{\exp\big(\alpha_{\ell}^{*\top}z\big)}{1+\sum_{\ell_1 = 1}^{2}\exp\big(\alpha_{\ell_1}^{*\top}z\big)},\, \ell=1,2,\text{ with}\\
& \alpha^{*}_1 = (0,-0.12,0.12,-0.12,0.12,-0.12)^{\top}\text{ and }\alpha^{*}_2 = \alpha^{*}_{1}/2.
\end{align*}

The simulated settings in the third scenario were motivated by data from the Cleveland Heart Disease study and the AIDS Clinical Trials Group study. The number of clusters was set to $k = 2$, and cluster-specific intercepts and treatment effects were incorporated. The covariates $Z_{(1)}$ represented treatment groups and were drawn from a trinomial distribution with equal probabilities over the vectors $(1, 0, 0)^\top$, $(1, 1, 0)^\top$, and $(1, 0, 1)^\top$.
The cluster-specific coefficients and scale parameter in the CID model were specified as
\begin{align*}
\text{M5.}~ \gamma_{(1)}=(0, 0.1, 0.1, 1.5,0.3,1)^{\top},\sigma=0.1. 
\end{align*}
Two cluster membership mechanisms were considered:
\begin{align*}
&\text{Covariate-independent membership with } (\pi_1 , \pi_2) = (0.5,0.5); \\
&\text{Covariate-dependent membership via } \pi(1| x) = \frac{\exp\big(\alpha_{1}^{*\top}z\big)}{1+\exp\big(\alpha_{1}^{*\top}z\big)},\text{ with }
\alpha^{*}_1 = (0,0.1,0.1,0,0,0,0,0)^{\top}.
\end{align*}

We assessed the performance of several estimation strategies, including the RC and HS methods in Section \ref{subsec:hsmethod}, the SP method in Section \ref{subsec:PSISP}, and the refined SP (RSP) methods in Section \ref{subsec:optimal_estimation}. Results for the SP method with RC-based initialization, the fusion penalty (FP) method---shown to yield comparable performance to the SP method---and the refined HS (RHS) method are presented in the supplementary tables. To quantify the accuracy of the recovered cluster assignments, we computed the Rand index (RI) \cite[]{rand1971objective} between the true clusters and those induced by each estimator.
The performance of a generic estimator $\breve{\theta}$ of the coefficients $\theta^{\text{o}}$ in the CID or SID model was assessed using the normalized root squared error (RSE)
$\sqrt{(\breve{\theta}-\theta^{\text{o}})^{\top}(\breve{\theta}-\theta^{\text{o}})/|\theta^{\text{o}} |}$. For a generic estimator $\breve{\pi}(x)$ of the cluster membership probabilities, performance was measured by $\sqrt{\sum^{n}_{i=1}\sum^{k-1}_{\ell=1}\big(\breve{\pi}_{\ell}(X_{i}) - \pi(\ell|X_{i})\big)^{2}/n}$.

\end{subsection}

\begin{subsection}{Estimator and Cluster Selection Performance} \label{sec: assessment}

Tables \ref{tab:RI} and \ref{tab:S_RI} summarize the clustering performance of the methods under consideration. The RC method does not directly improve the estimation of the true clusters. In Scenario I, where cluster membership is independent of covariates, the HS method achieves a higher RI than the SP method with RC-based initialization when $n=100$; however, the difference diminishes with increasing sample size, and the RI values become comparable or slightly lower for $n \geq 200$. When cluster membership depends on covariates, the HS method exhibits better performance. Across all settings, the SP method with HS-based initialization consistently outperforms the HS method and the SP method with RC-based initialization, though the gains are modest for $n \geq 200$. In Scenario II, the HS method substantially outperforms the RC method and the SP method with RC-based initialization. It achieves performance comparable to the SP method with HS-based initialization. A similar pattern emerges in Scenario III, where the RI values associated with the RC method and the SP method with RC-based initialization are markedly lower than those of the HS-based methods. The SP method with HS-based initialization slightly exceeds the HS method in RI value. Overall, these findings highlight the effectiveness of the SP method with HS-based initialization.

\begin{table}[htbp]
  \centering
  \caption{Means (standard deviations) of 500 RI values of the true clusters and clustering estimates from various methods under different CID and cluster membership models.}
      {\small
      \begin{adjustbox}{max width=\linewidth}
    \begin{tabular}{ccccccccccc}
    \toprule
    \multicolumn{2}{c}{Method} & RC    & HS    & SP   & RSP & & RC    & HS    & SP    & RSP \\
        \cmidrule(rl){3-6} \cmidrule(rl){8-11}  
    CID   & $n$     & \multicolumn{4}{c}{Covariate-independence} & & \multicolumn{4}{c}{Covariate-dependence}\\
    \midrule          
    M1    & 100   & 0.920 (0.0714) & 0.934 (0.0693) & 0.940 (0.0638) & 0.962 (0.0564) &       & 0.873 (0.1017) & 0.942 (0.0947) & 0.955 (0.0849) & 0.967  (0.0810)\\
          & 200   & 0.970 (0.0332) & 0.971 (0.0308) & 0.975 (0.0247) & 0.987 (0.0170) &       & 0.939 (0.0682) & 0.980 (0.0497) & 0.986 (0.0447)& 0.990 (0.0379) \\
          & 300   & 0.981 (0.0182) & 0.982 (0.0176)& 0.983 (0.0141)& 0.990 (0.0107)&       & 0.965 (0.0500)& 0.989 (0.0331)& 0.992 (0.0323)& 0.994 (0.0258)\\
          & 400   & 0.986 (0.0126)& 0.987 (0.0122)& 0.987 (0.0112)& 0.992 (0.0079)&       & 0.972 (0.0480)& 0.991 (0.0360)& 0.992 (0.0409)& 0.995 (0.0276)\\
           \midrule
    M2    & 100   & 0.847 (0.0909)& 0.862 (0.0876)& 0.871 (0.0829)& 0.895 (0.0784)&       & 0.809 (0.0981)& 0.892 (0.0813)& 0.903 (0.0700)& 0.918 (0.0683)\\
          & 200   & 0.911 (0.0537)& 0.915 (0.0486)& 0.921 (0.0436)& 0.942 (0.0335)&       & 0.885 (0.0638)& 0.938 (0.0371)& 0.944 (0.0313)& 0.952 (0.0287)\\
          & 300   & 0.930 (0.0407)& 0.931 (0.0393)& 0.938 (0.0328)& 0.955 (0.0219)&       & 0.912 (0.0529)& 0.951 (0.0252)& 0.957 (0.0202)& 0.960 (0.0200) \\
          & 400   & 0.939 (0.0361)& 0.940 (0.0354)& 0.946 (0.0221)& 0.958 (0.0174)&       & 0.921 (0.0432)& 0.955 (0.0203)& 0.959 (0.0164)& 0.961 (0.0157)\\
           \midrule
    M3    & 100   & 0.880 (0.0849)& 0.891 (0.0833)& 0.897 (0.0813)& 0.920 (0.0759)&       & 0.820 (0.1056) & 0.891 (0.1043)& 0.906 (0.0939)& 0.925 (0.0931)\\
          & 200   & 0.931 (0.0499)& 0.933 (0.0492)& 0.938 (0.0417)& 0.960 (0.0306)&       & 0.883 (0.0842)& 0.938 (0.0720)& 0.946 (0.0647) & 0.956 (0.0607)\\
          & 300   & 0.946 (0.0390)& 0.947 (0.0386)& 0.953 (0.0270)& 0.968 (0.0178) &       & 0.907 (0.0786)& 0.952 (0.0604)& 0.955 (0.0630)& 0.965 (0.0553)\\
          & 400   & 0.956 (0.0323)& 0.956 (0.0318)& 0.960 (0.0207)& 0.971 (0.0146)&       & 0.922 (0.0695)& 0.961 (0.0513) & 0.964 (0.0525)& 0.969 (0.0448)\\
          \midrule
    M4 & 100   & 0.823 (0.0817)& 0.885 (0.0879) & 0.889 (0.0850)& 0.902 (0.0885)&       & 0.821 (0.0832)& 0.889 (0.0838)& 0.894 (0.0805)& 0.921 (0.0806)\\
          & 200   & 0.890 (0.0616)& 0.953 (0.0474)& 0.956 (0.0428)& 0.973 (0.0336)&       & 0.894 (0.0609)& 0.957 (0.0445)& 0.961 (0.0414)& 0.977 (0.0311)\\
          & 300   & 0.924 (0.0443)& 0.972 (0.0270)& 0.976 (0.0214)& 0.983 (0.0142)&       & 0.923 (0.0526)& 0.976 (0.0291)& 0.976 (0.0245)& 0.985 (0.0157)\\
          & 400   & 0.933 (0.0481)& 0.980 (0.0161)& 0.980 (0.0164)& 0.986 (0.0086)&  & 0.936 (0.0356)& 0.981 (0.0151)& 0.982 (0.0136)& 0.987 (0.0085)\\
          \midrule
   M5 & 100   & 0.753 (0.1131) & 0.800 (0.1572)& 0.808 (0.1568)& 0.824 (0.1566)&       & 0.766 (0.1046)& 0.805 (0.1529)& 0.806 (0.1592)& 0.817 (0.1579)\\
          & 200   & 0.883 (0.0707)& 0.949 (0.0601)& 0.954 (0.0462)& 0.962 (0.0572)&       & 0.867 (0.0767) & 0.940 (0.0688)& 0.944 (0.0666)& 0.957 (0.0635)\\
          & 300   & 0.913 (0.0610)& 0.967 (0.0253)& 0.971 (0.0184)& 0.975 (0.0146)&       & 0.904 (0.0616)& 0.963 (0.0351)& 0.968 (0.0228)& 0.972 (0.0193)\\
          & 400   & 0.931 (0.0464)& 0.974 (0.0178)& 0.977 (0.0134)& 0.979 (0.0126)&  & 0.928 (0.0467)& 0.975 (0.0151)& 0.977 (0.0126)& 0.979 (0.0100)\\
          \bottomrule
    \end{tabular}%
    \end{adjustbox}
    }
  \label{tab:RI}%
\end{table}%

In Scenario I, the RHS method yields RI values slightly lower than or comparable to the RSP method, with both refined methods substantially outperforming the alternatives. Across all methods, higher RI values are consistently observed under the CID model {M1} compared to models {M2} and {M3}. In Scenario II, the RHS and RSP methods yield similar RI values and outperform the SP method with HS-based initialization. While differences in RI values are more evident for smaller sample sizes ($n\leq 200$), they diminish as $n\geq 300$, reflecting increased estimator stability with larger samples. For $n=400$, the clustering estimators obtained by the HS method, the SP method with HS-based initialization, and the RSP method closely approximate the true clusters. In Scenario III, the RHS method yields modest improvements over the SP method initialized with HS, yet it continues to underperform relative to the RSP method. These results highlight the incremental gains achieved through refinement procedures and their comparative advantage under specific settings. Overall, the decline in RI values across models is attributable to reduced separation between conditional distributions, increased response variability, and a higher number of cluster index coefficients, all of which have important practical
implications for clustering estimation. Tables \ref{tab:RI} and  \ref{tab:S_RI} also reveal a clear upward trend in RI values as sample size increases.

Tables \ref{tab:RMSE_b} and \ref{tab:S_RMSE_b} present the performance of the subject-specific coefficient estimators. The results demonstrate that the HS estimator consistently attains a smaller RSE than the RC estimator and the SP estimator with RC-based initialization. The patterns in RSE observed across methods in Scenario I are consistent with the corresponding RI values. In Scenario II, the SP estimator with HS-based initialization yields a slightly smaller RSE than the HS estimator, and its refined version achieves a marginal further reduction in RSE relative to the RHS estimator. The RSP estimator improves upon the SP estimator with HS-based initialization for $n\leq 300$, with a modest advantage persisting at $n=400$. In Scenario III, the SP method with HS-based initialization exhibits a slight improvement over the HS method at $n=100$, with the reduction in RSE becoming more pronounced as the sample size increases to $n\geq 200$. The RSE of the SP estimator with HS-based initialization exceeds that of the RHS estimator, which in turn is marginally larger than that of the RSP estimator. Across all scenarios, the RSE of each estimator consistently decreases as the sample size increases.

\begin{table}[htbp]
  \centering
  \caption{Means of 500 RSEs of subject-specific coefficient estimates across subjectwise representations corresponding to CID models {M1}--{M5} from various methods under different cluster membership models.}
    {\small
     \begin{adjustbox}{max width=0.6\linewidth}
         \begin{tabular}{ccccccccccc}
    \toprule
    \multicolumn{2}{c}{Method} & RC    & HS    & SP   & RSP & & RC    & HS    & SP    & RSP \\
        \cmidrule(rl){3-6} \cmidrule(rl){8-11} 
    CID   & $n$     & \multicolumn{4}{c}{Covariate-independence} & & \multicolumn{4}{c}{Covariate-dependence}\\
    \midrule  
    M1    & 100   & 0.559 & 0.509 & 0.477 & 0.394 &       & 0.848 & 0.658 & 0.594 & 0.554 \\
          & 200   & 0.340 & 0.330 & 0.310 & 0.241 &       & 0.693 & 0.571 & 0.543 & 0.460 \\
          & 300   & 0.266 & 0.260 & 0.239 & 0.189 &       & 0.520 & 0.412 & 0.393 & 0.366 \\
          & 400   & 0.220 & 0.215 & 0.209 & 0.166 &       & 0.489 & 0.410 & 0.381 & 0.308 \\
          \midrule
    M2    & 100   & 0.450 & 0.427 & 0.410 & 0.372 &       & 0.589 & 0.465 & 0.445 & 0.391 \\
          & 200   & 0.327 & 0.320 & 0.313 & 0.265 &       & 0.429 & 0.329 & 0.327 & 0.286 \\
          & 300   & 0.290 & 0.287 & 0.277 & 0.239 &       & 0.333 & 0.249 & 0.238 & 0.225 \\
          & 400   & 0.268 & 0.257 & 0.253 & 0.227 &       & 0.312 & 0.237 & 0.227 & 0.219 \\
          \midrule
    M3    & 100   & 0.714 & 0.687 & 0.672 & 0.602 &       & 1.018 & 0.853 & 0.807 & 0.712 \\
          & 200   & 0.498 & 0.492 & 0.478 & 0.390 &       & 0.853 & 0.712 & 0.689 & 0.639 \\
          & 300   & 0.426 & 0.423 & 0.414 & 0.340 &       & 0.707 & 0.558 & 0.546 & 0.477 \\
          & 400   & 0.385 & 0.385 & 0.370 & 0.324 &       & 0.703 & 0.554 & 0.468 & 0.398 \\
          \midrule
    M4    & 100   & 1.487 & 1.335 & 1.212 & 1.096 &  & 1.358 & 1.164 & 1.140 & 1.073 \\
          & 200   & 1.220 & 1.028 & 0.983 & 0.871 &       & 1.044 & 0.814 & 0.805 & 0.742 \\
          & 300   & 0.954 & 0.798 & 0.760 & 0.720 &  & 0.914 & 0.774 & 0.751 & 0.700 \\
          & 400   & 0.908 & 0.738 & 0.728 & 0.701 &       & 0.896 & 0.732 & 0.720 & 0.675 \\
          \midrule
    M5    & 100   & 0.992 & 0.918 & 0.910 & 0.886 &  & 0.788 & 0.731 & 0.717 & 0.704 \\
          & 200   & 0.433 & 0.351 & 0.266 & 0.250 &       & 0.458 & 0.382 & 0.327 & 0.299 \\
          & 300   & 0.353 & 0.295 & 0.211 & 0.202 &  & 0.342 & 0.266 & 0.222 & 0.210 \\
          & 400   & 0.325 & 0.274 & 0.170 & 0.156 &       & 0.281 & 0.216 & 0.150 & 0.141 \\
          \bottomrule
    \end{tabular}%
      \end{adjustbox}
   }
  \label{tab:RMSE_b}%
\end{table}%

Table \ref{tab:RMSE_g} compares the proposed estimators and their oracle counterparts for the cluster index coefficients and cluster membership probabilities. The RSEs of both estimators decrease consistently with increasing sample size, with the proposed and oracle estimators exhibiting convergence in magnitude. For each CID model, the RSE of the RSP estimator of the cluster index coefficients approaches that of the oracle estimator as the sample size increases. An exception arises in Scenario III, where substantially larger samples are needed for the RSEs of the proposed estimators to attain comparable levels. In contrast, the RSE of the refined pseudo least squares (RPLS) estimator for the cluster membership probabilities remains consistently close to that of the oracle estimator across all sample sizes.

\begin{table}[htbp]
  \centering
   \caption{Means of 500 RSEs of proposed and oracle estimates of cluster index coefficients and cluster membership probabilities across various CID and cluster membership models.}
  \setlength{\tabcolsep}{4pt} 
  \begin{subfigure}[t]{0.48\textwidth}
    \centering
    \caption*{Cluster Index Coefficients}
    \begin{adjustbox}{max width=0.9\linewidth}
      \begin{tabular}{cc cc c cc}
        \toprule
        \multicolumn{2}{c}{Method} & RSP & Oracle & & RSP & Oracle \\
        \cmidrule(lr){3-4} \cmidrule(lr){6-7}
        CID & $n$ & \multicolumn{2}{c}{Covariate-independence} & & \multicolumn{2}{c}{Covariate-dependence} \\
        \midrule
        M1 &  100 & 0.171 & 0.110 & & 0.292 & 0.198 \\
           &  200 & 0.110 & 0.072 & & 0.181 & 0.091 \\
           &  300 & 0.087 & 0.064 & & 0.160 & 0.090 \\
           &  400 & 0.053 & 0.054 & & 0.117 & 0.085 \\
        \midrule
        M2 &  100 & 0.129 & 0.106 & & 0.164 & 0.109 \\
           &  200 & 0.068 & 0.052 & & 0.119 & 0.052 \\
           &  300 & 0.045 & 0.043 & & 0.043 & 0.042 \\
           &  400 & 0.041 & 0.037 & & 0.042 & 0.037 \\
        \midrule
        M3 &  100 & 0.285 & 0.150 & & 0.400 & 0.215 \\
           &  200 & 0.144 & 0.104 & & 0.295 & 0.126 \\
           &  300 & 0.098 & 0.085 & & 0.190 & 0.120 \\
           &  400 & 0.091 & 0.080 & & 0.130 & 0.087 \\
        \midrule
        M4 &  100 & 0.735 & 0.563 & & 0.705 & 0.486 \\
           &  200 & 0.678 & 0.590 & & 0.534 & 0.449 \\
           &  300 & 0.648 & 0.554 & & 0.521 & 0.426 \\
           &  400 & 0.621 & 0.533 & & 0.498 & 0.401 \\
        \midrule
        M5 &  100 & 0.555 & 0.455 & & 0.579 & 0.449 \\
           &  200 & 0.294 & 0.224 & & 0.281 & 0.193 \\
           &  300 & 0.197 & 0.109 & & 0.196 & 0.097 \\
           &  400 & 0.165 & 0.093 & & 0.114 & 0.085 \\
        \bottomrule
      \end{tabular}
    \end{adjustbox}
  \end{subfigure}
  \hfill
  \begin{subfigure}[t]{0.48\textwidth}
    \centering
    \caption*{Cluster Membership Probabilities}
    \begin{adjustbox}{max width=0.9\linewidth}
      \begin{tabular}{cc cc c cc}
        \toprule
        \multicolumn{2}{c}{Method} & RPLS & Oracle & & RPLS & Oracle \\
        \cmidrule(lr){3-4} \cmidrule(lr){6-7}
        CID & $n$ & \multicolumn{2}{c}{Covariate-independence} & & \multicolumn{2}{c}{Covariate-dependence} \\
        \midrule
        M1 &  100 & 0.192 & 0.186 & & 0.284 & 0.280 \\
           &  200 & 0.138 & 0.136 & & 0.249 & 0.247 \\
           &  300 & 0.107 & 0.110 & & 0.225 & 0.225 \\
           &  400 & 0.099 & 0.098 & & 0.217 & 0.220 \\
        \midrule
        M2 &  100 & 0.216 & 0.186 & & 0.293 & 0.280 \\
           &  200 & 0.147 & 0.136 & & 0.255 & 0.247 \\
           &  300 & 0.112 & 0.110 & & 0.238 & 0.225 \\
           &  400 & 0.101 & 0.098 & & 0.224 & 0.220 \\
        \midrule
        M3 &  100 & 0.202 & 0.186 & & 0.292 & 0.280 \\
           &  200 & 0.143 & 0.136 & & 0.261 & 0.247 \\
           &  300 & 0.117 & 0.110 & & 0.240 & 0.225 \\
           &  400 & 0.101 & 0.098 & & 0.230 & 0.220 \\
        \midrule
        M4 &  100 & 0.244 & 0.236 & & 0.243 & 0.238 \\
           &  200 & 0.168 & 0.168 & & 0.170 & 0.165 \\
           &  300 & 0.142 & 0.139 & & 0.143 & 0.141 \\
           &  400 & 0.119 & 0.113 & & 0.123 & 0.122 \\
        \midrule
        M5 &  100 & 0.210 & 0.200 & & 0.207 & 0.203 \\
           &  200 & 0.158 & 0.158 & & 0.165 & 0.164 \\
           &  300 & 0.138 & 0.131 & & 0.137 & 0.142 \\
           &  400 & 0.118 & 0.117 & & 0.121 & 0.114 \\
        \bottomrule
      \end{tabular}
    \end{adjustbox}
  \end{subfigure}
  \label{tab:RMSE_g}
\end{table}

Tables \ref{tab:est_k} and \ref{tab:S_est_k} compare $\textsc{SPIC}_1$ and $\textsc{SPIC}_2$ in selecting the number of clusters. Overall, $\textsc{SPIC}_1$ tends to select fewer or comparable numbers of clusters than $\textsc{SPIC}_2$. For a given criterion, the HS method, the SP method with HS-based initialization, and their refined counterparts exhibit similar performance, all outperforming the RC method and the SP method with RC-based initialization. In Scenario I, $\textsc{SPIC}_2$ systematically overestimates the number of clusters in models M2 and M3. Although the refined estimation methods partially mitigate this overestimation, $\textsc{SPIC}_1$ generally provides more accurate cluster number estimates. When $n=100$, $\textsc{SPIC}_1$ displays a slight tendency toward underestimation, which is alleviated by the refined methods compared to other methods. As the sample size increases, $\textsc{SPIC}_1$ shows substantial improvement in estimation accuracy.
Nevertheless, under the CID model {M3}, $\textsc{SPIC}_2$ continues to overestimate the number of clusters even as the sample size increases. When conditional distributions exhibit limited separation, $\textsc{SPIC}_1$ is more susceptible to underestimating the number of clusters. The overestimation observed with $\textsc{SPIC}_2$ is especially pronounced when conditional distributions are only marginally distinct, the response variable exhibits high variability, or considerable heterogeneity exists within clusters, underscoring the primary factors driving this tendency.
In Scenario II, $\textsc{SPIC}_1$ notably underestimates the number of clusters when $n=100$, improves at $n=200$, and accurately estimates the number of clusters for $n\geq 300$. In contrast, $\textsc{SPIC}_2$ exhibits slight underestimation at $n=200$ across the HS method, the SP method with HS-based initialization, and the RSP method, whereas it achieves accurate estimation for $n\geq 300$. In Scenario III, results emphasize the distinction between $\textsc{SPIC}_1$ and $\textsc{SPIC}_2$ when $n=100$, where both criteria underestimate the number of clusters. This contrast is particularly informative for understanding the behavior of the selection criteria in small-sample settings. For $n\geq 200$, both criteria accurately estimate the number of clusters, indicating that their performance stabilizes with larger samples. Among the methods considered for estimating the number of clusters, the RSP method consistently demonstrates superior performance.

\begin{table}[htbp]
  \centering
  \caption{Means of 500 estimated number of clusters using distinct criteria across various methods under different CID and cluster membership models.}
 \setlength{\tabcolsep}{4pt} 
  \begin{subfigure}[t]{0.48\textwidth}
    \centering
    \caption*{$\textsc{SPIC}_1$}
    \begin{adjustbox}{max width=0.9\linewidth}
      \begin{tabular}{ccccccccccc}
        \toprule
      \multicolumn{2}{c}{Method} & RC    & HS    & SP   & RSP & & RC    & HS    & SP    & RSP \\
      \cmidrule(rl){3-6} \cmidrule(rl){8-11}  
       CID   & $n$     & \multicolumn{4}{c}{Covariate-independence} & & \multicolumn{4}{c}{Covariate-dependence}\\
      \midrule  
    M1    & 100   & 1.80  & 1.85  & 1.87  & 1.92  &       & 1.83  & 1.94  & 1.96  & 1.97 \\
          & 200   & 2.00  & 2.00  & 2.00  & 2.00  &       & 1.99  & 2.00  & 2.00  & 2.00 \\
          & 300   & 2.00  & 2.00  & 2.00  & 2.00  &       & 2.00  & 2.00  & 2.00  & 2.00 \\
          & 400   & 2.00  & 2.00  & 2.00  & 2.00  &       & 2.00  & 2.00  & 2.00  & 2.00 \\
          \midrule
    M2    & 100   & 1.38  & 1.45  & 1.47  & 1.58  &       & 1.54  & 1.84  & 1.88  & 1.91 \\
          & 200   & 1.99  & 1.99  & 1.99  & 2.00  &       & 1.99  & 2.00  & 2.00  & 2.00 \\
          & 300   & 2.00  & 2.00  & 2.00  & 2.00  &       & 2.00  & 2.00  & 2.00  & 2.00 \\
          & 400   & 2.00  & 2.00  & 2.00  & 2.00  &       & 2.00  & 2.00  & 2.00  & 2.00 \\
          \midrule
    M3    & 100   & 1.67  & 1.74  & 1.77  & 1.85  &       & 1.76  & 1.91  & 1.95  & 1.96 \\
          & 200   & 2.00  & 2.00  & 2.00  & 2.00  &       & 2.00  & 2.00  & 2.00  & 2.00 \\
          & 300   & 2.00  & 2.00  & 2.01  & 2.00  &       & 2.01  & 2.00  & 2.00  & 2.00 \\
          & 400   & 2.00  & 2.00  & 2.00  & 2.00  &       & 2.02  & 2.01  & 2.01  & 2.00 \\
          \midrule
           \multicolumn{1}{c}{M4} & 100   & 1.78  & 1.94  & 1.98  & 1.99  &       & 1.74  & 1.91  & 1.96  & 1.99 \\
                 & 200   & 2.13  & 2.44  & 2.44  & 2.48  &       & 2.22  & 2.35  & 2.35  & 2.35 \\
                 & 300   & 2.83  & 3.00  & 3.00  & 3.00  &       & 2.86  & 2.94  & 2.95  & 2.98 \\
                 & 400   & 2.97  & 3.00  & 3.00  & 3.00  &       & 2.99  & 3.00  & 3.00  & 3.00 \\
          \midrule
           \multicolumn{1}{c}{M5} & 100   & 1.55  & 1.71  & 1.74  & 1.77  &       & 1.52  & 1.71  & 1.74  & 1.77 \\
                 & 200   & 1.98  & 1.99  & 2.00  & 2.00  &       & 1.97  & 2.00  & 2.00  & 2.00 \\
                 & 300   & 2.00  & 2.00  & 2.00  & 2.00  &       & 2.00  & 2.00  & 2.00  & 2.00 \\
                 & 400   & 2.00  & 2.00  & 2.00  & 2.00  &       & 2.00  & 2.00  & 2.00  & 2.00 \\
        \bottomrule
      \end{tabular}
    \end{adjustbox}
  \end{subfigure}
  \hfill
  \begin{subfigure}[t]{0.48\textwidth}
    \centering
    \caption*{$\textsc{SPIC}_2$}
    \begin{adjustbox}{max width=0.9\linewidth}
           \begin{tabular}{ccccccccccc}
        \toprule
      \multicolumn{2}{c}{Method} & RC    & HS    & SP   & RSP & & RC    & HS    & SP    & RSP \\
      \cmidrule(rl){3-6} \cmidrule(rl){8-11}    
       CID   & $n$     & \multicolumn{4}{c}{Covariate-independence} & & \multicolumn{4}{c}{Covariate-dependence}\\
        \midrule
    M1    & 100   & 1.92  & 1.95  & 1.92  & 1.97  &       & 1.94  & 2.01  & 1.99  & 1.99 \\
          & 200   & 2.02  & 2.02  & 2.01  & 2.00  &       & 2.03  & 2.02  & 2.01  & 2.00 \\
          & 300   & 2.03  & 2.02  & 2.03  & 2.00  &       & 2.02  & 2.01  & 2.01  & 2.00 \\
          & 400   & 2.02  & 2.02  & 2.03  & 2.00  &       & 2.02  & 2.01  & 2.02  & 2.01 \\
          \midrule
    M2    & 100   & 1.83  & 1.89  & 1.85  & 1.91  &       & 1.88  & 1.97  & 1.96  & 1.98 \\
          & 200   & 2.01  & 2.01  & 2.02  & 2.00  &       & 2.02  & 2.00  & 2.00  & 2.00 \\
          & 300   & 2.03  & 2.03  & 2.01  & 2.00  &       & 2.02  & 2.00  & 2.00  & 2.00 \\
          & 400   & 2.10  & 2.10  & 2.03  & 2.03  &       & 2.04  & 2.00  & 2.00  & 2.00 \\
          \midrule
    M3    & 100   & 1.95  & 1.97  & 1.89  & 1.95  &       & 1.95  & 2.02  & 1.99  & 1.99 \\
          & 200   & 2.06  & 2.06  & 2.03  & 2.02  &       & 2.04  & 2.03  & 2.01  & 2.00 \\
          & 300   & 2.15  & 2.15  & 2.11  & 2.07  &       & 2.08  & 2.04  & 2.05  & 2.03 \\
          & 400   & 2.39  & 2.40  & 2.29  & 2.34  &       & 2.10  & 2.06  & 2.07  & 2.02 \\
          \midrule
           \multicolumn{1}{c}{M4} & 100   & 2.07  & 2.21  & 2.21  & 2.23  &  & 2.10  & 2.21  & 2.21  & 2.21 \\
                 & 200   & 2.68  & 2.94  & 2.93  & 2.98  &  & 2.74  & 2.92  & 2.90  & 2.98 \\
                 & 300   & 2.95  & 3.00  & 3.00  & 3.00  &  & 2.97  & 2.99  & 2.99  & 3.00 \\
                 & 400   & 3.02  & 3.01  & 3.02  & 3.00  &  & 3.00  & 3.00  & 3.00  & 3.00 \\
          \midrule
           \multicolumn{1}{c}{M5} & 100   & 1.72  & 1.88  & 1.93  & 1.97  &  & 1.72  & 1.85  & 1.86  & 1.90 \\
                 & 200   & 2.01  & 2.02  & 2.01  & 2.01  &  & 2.02  & 2.03  & 2.02  & 2.01 \\
                 & 300   & 2.03  & 2.01  & 2.00  & 2.00  &  & 2.05  & 2.01  & 2.02  & 2.01 \\
                 & 400   & 2.00  & 2.00  & 2.01  & 2.00  &  & 2.04  & 2.01  & 2.00  & 2.00 \\
        \bottomrule
      \end{tabular}
    \end{adjustbox}
  \end{subfigure}
  \label{tab:est_k}
\end{table}

\end{subsection}

\end{section}

\begin{section}{Applications} 
\label{sec:applications}

This section demonstrates the methodological applicability using three benchmark datasets from the UCI Machine Learning Repository: a real estate valuation study, the Cleveland Heart Disease study, and the AIDS Clinical Trials Group study (see Appendix \ref{subsec:ACTG}). All continuous variables were standardized to have a mean of zero and a standard deviation of one to facilitate comparability across analyses. 
Supplementary figures are provided in Appendix \ref{table-figure}.

\begin{subsection}{Application in Real Estate Valuation: A Case Study in New Taipei City} \label{subsec:Sindian}

The dataset used in this study, obtained from the Taiwan Ministry of the Interior’s Public Database, comprises 414 residential property transactions in the Sindian District of New Taipei City, recorded 
between June 2012 and May 2013. Each observation includes the unit-area transaction price (\textit{price}), transaction date, property age (\textit{age}), walking distance to the nearest Mass Rapid Transit (MRT) station (\textit{dist}), number of nearby convenience stores (\textit{store}), and geographic coordinates (latitude (\textit{lat}) and longitude (\textit{long})). Further details on the dataset and its construction are provided in \cite{yeh2018building}.

Building on the unsupervised learning framework of \cite{li2023nonparametric}, which identifies latent heterogeneity in house prices between the left and right riverbanks, we model \textit{price} as the response variable.
The covariates considered are \textit{age}, \textit{dist}, \textit{store}, \textit{lat}, \textit{long}, and a binary riverbank indicator (\textit{bank}; 0 = right bank, 1 = left bank). To improve numerical stability, we excluded seven observations that exhibited extreme values in either the response variable or the covariates, resulting in a final sample of 407 properties. The CID model, incorporating cluster-specific intercepts and riverbank effects, was applied in this data analysis. The number of clusters was selected using $\textsc{SPIC}_1$ and $\textsc{SPIC}_2$, consistently favoring two clusters across the RC, HS, SP, and RSP methods. The corresponding clustering estimates exhibited a high degree of consistency across methods: the RI value between the RC and RSP methods was 0.830, while all pairwise comparisons among the HS, SP, and RSP methods achieved an RI value of 1.000, indicating the same clustering and parameter estimates among these three methods.

Using the RSP method, we estimate the cluster index coefficients for the CID model incorporating cluster-specific intercepts and riverbank effects as follows:
\begin{table}[H]
  \centering
  \footnotesize
    \begin{tabular}{cccccccc}
    \toprule
    Cluster & intercept& \textit{bank} & \textit{age}   & \textit{dist}  & \textit{store}   & \textit{lat}   & \textit{long}   \\
    \midrule
    1     & 0    & 4.39 (1.483)& 1  & 3.27 (0.944)  & -0.74 (0.061) & -1.58 (0.338) & 0.28 (0.079) \\
    2     & -3.68 (0.482)      & 4.27 (0.619)& 1   & 3.27 (0.944) & -0.74 (0.061)& -1.58 (0.338) & 0.28 (0.079)  \\
          \bottomrule
    \end{tabular}%
\end{table}%
\noindent Standard errors, reported in parentheses, are computed using the asymptotic variance matrix estimator in Section \ref{subsec:PSISP}. Significant differences in cluster-specific intercepts and significant adverse effects associated with \textit{bank} underscore the effectiveness of the cluster number selection criteria. The results further suggest that latent clusters and \textit{bank} do not exhibit significant interaction effects on \textit{price}. Figure \ref{fig:Sindian_map} indicates that the selected clusters are not solely driven by geographic location.
Among the cluster-invariant coefficients, \textit{store} and \textit{lat} exhibit significant associations with \textit{price}, whereas \textit{age}, \textit{dist}, and \textit{long} are adversely associated. A second set of estimates, obtained using the RSP method for the CID model with cluster-specific intercepts only, is presented below.
\begin{table}[H]
  \centering
  \footnotesize
    \begin{tabular}{cccccccc}
    \toprule
    Cluster & intercept& \textit{bank} & \textit{age}   & \textit{dist}  & \textit{store}   & \textit{lat}   & \textit{long}   \\
    \midrule
    1     & 0   & 4.32 (0.811)& 1 & 3.27 (0.807) & -0.74 (0.073)& -1.58 (0.308) & 0.28 (0.070) \\
    2     & -3.70 (0.589)    & 4.32 (0.811) & 1& 3.27 (0.807) & -0.74 (0.073)& -1.58 (0.308) & 0.28 (0.070) \\
          \bottomrule
    \end{tabular}%
\end{table}%
\noindent 
The estimated coefficients are nearly identical to those from the initial fitting, demonstrating the stability of the model across refinements. Figure \ref{fig:Sindian_mean_D} depicts the fitted response function based on the clusterwise index mean model. A declining trend is observed across most of the index support, except for the interval $(11,16)$, where the curve levels off. This pattern indicates that Cluster 1 is associated with lower \textit{price} than Cluster 2. Moreover, properties located on the right bank of the Sindian River tend to command higher \textit{price} than those on the left bank. These findings are further corroborated in Figure \ref{fig:Sindian_pdf_D}, which presents the estimated density functions stratified by cluster and riverbank side (left vs. right).

Based on the estimated number of clusters and the corresponding clustering estimate, the structural dimension of the SDR-based cluster membership model was identified as one, with Cluster 2 serving as the reference group. The table below presents the estimated coefficients for the CS direction, accompanied by their standard errors (in parentheses).
\begin{table}[H]
  \centering
  \footnotesize
    \begin{tabular}{cccccc}
    \toprule
    \textit{bank} & \textit{age}   & \textit{dist}  & \textit{store}   & \textit{lat}   & \textit{long}   \\
    \midrule
    -1.13 (0.391)& 1 & 2.02 (0.177)& -0.18 (0.099)& -0.11 (0.068) & 2.17 (0.089)\\
          \bottomrule
    \end{tabular}%
\end{table}%
\noindent The significant effects of $\textit{age}$, $\textit{dist}$, and $\textit{long}$ in one direction and $\textit{bank}$ in the opposite direction on the probability of being in Cluster 1 support the suitability of a covariate-dependent cluster membership model. For comparison, we also estimated a logistic cluster membership model. Figure 
\ref{fig:Sindian_prob} illustrates that the logistic model does not adequately capture the structure of the cluster membership probabilities, underscoring its limitations compared to the SDR-based method.

\begin{figure}[htbp]
\centering
\begin{subfigure}[t]{0.3\linewidth} 
    \centering
    \includegraphics[width=\linewidth]{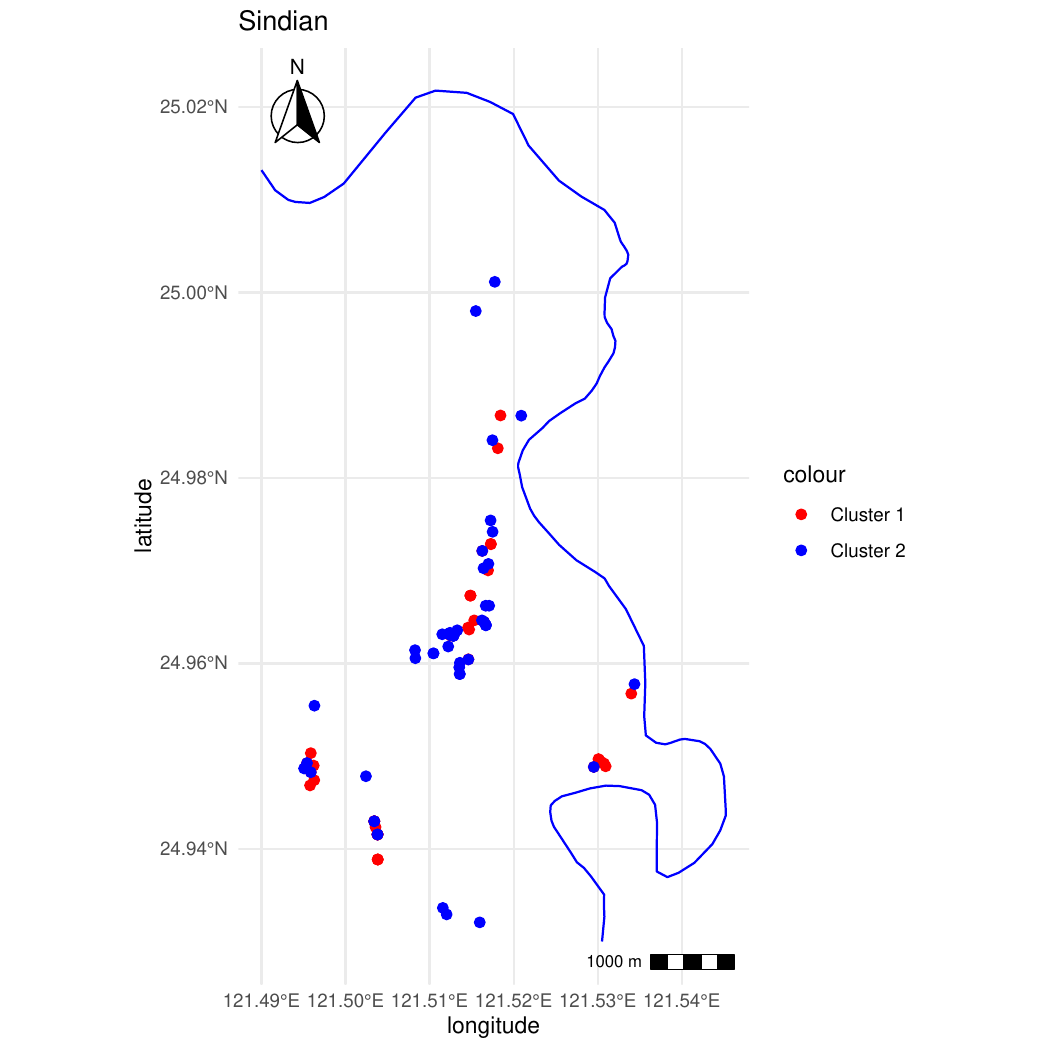} 
    \caption{{\footnotesize Left Bank}}
\end{subfigure}
\begin{subfigure}[t]{0.3\linewidth} 
    \centering
    \includegraphics[width=\linewidth]{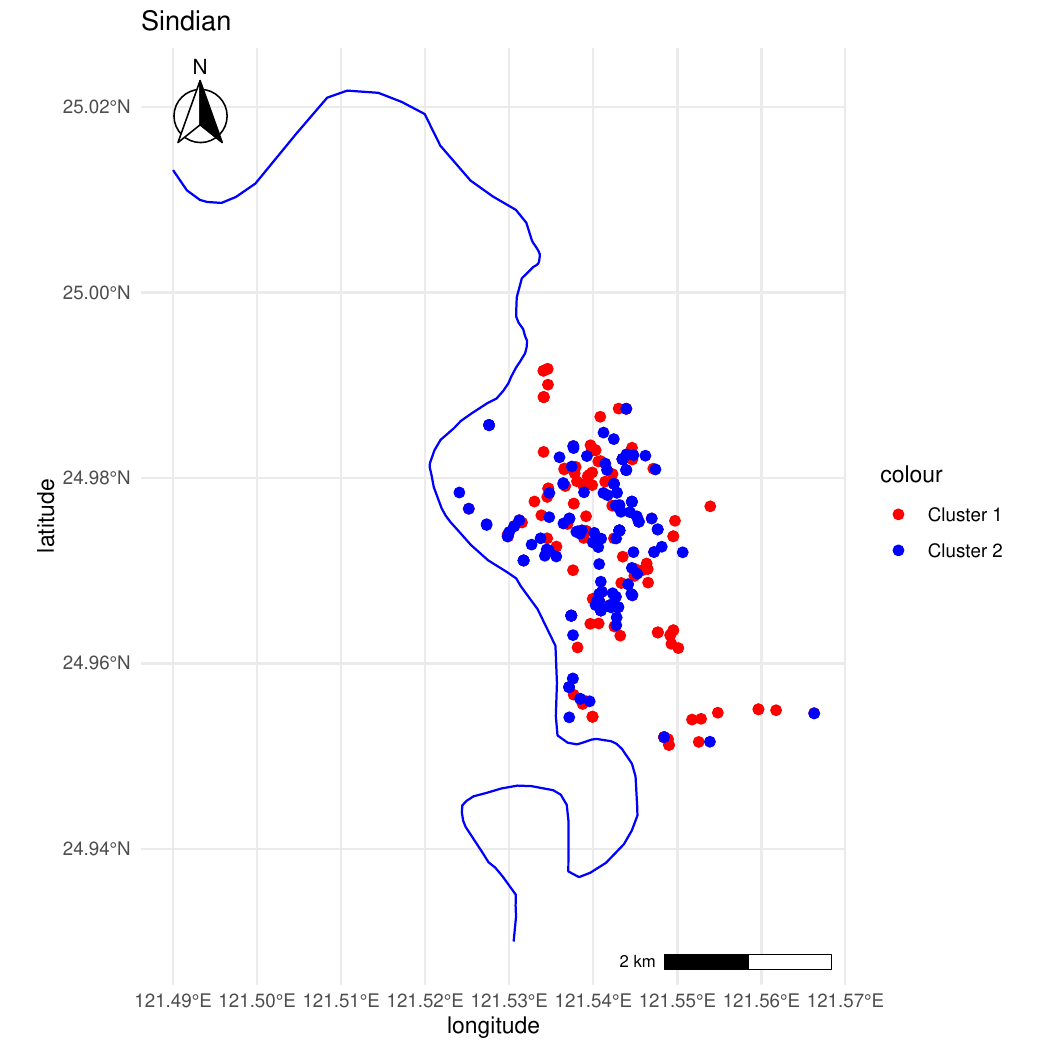} 
    \caption{{\footnotesize Right Bank}}
\end{subfigure}

\caption{{\footnotesize Scatter plots of houses in Clusters 1 and 2, grouped by the left and right banks of the Sindian River.}}
\label{fig:Sindian_map}
\end{figure}

\begin{figure}[htbp]
\centering
 \includegraphics[width=12cm,height=6cm]{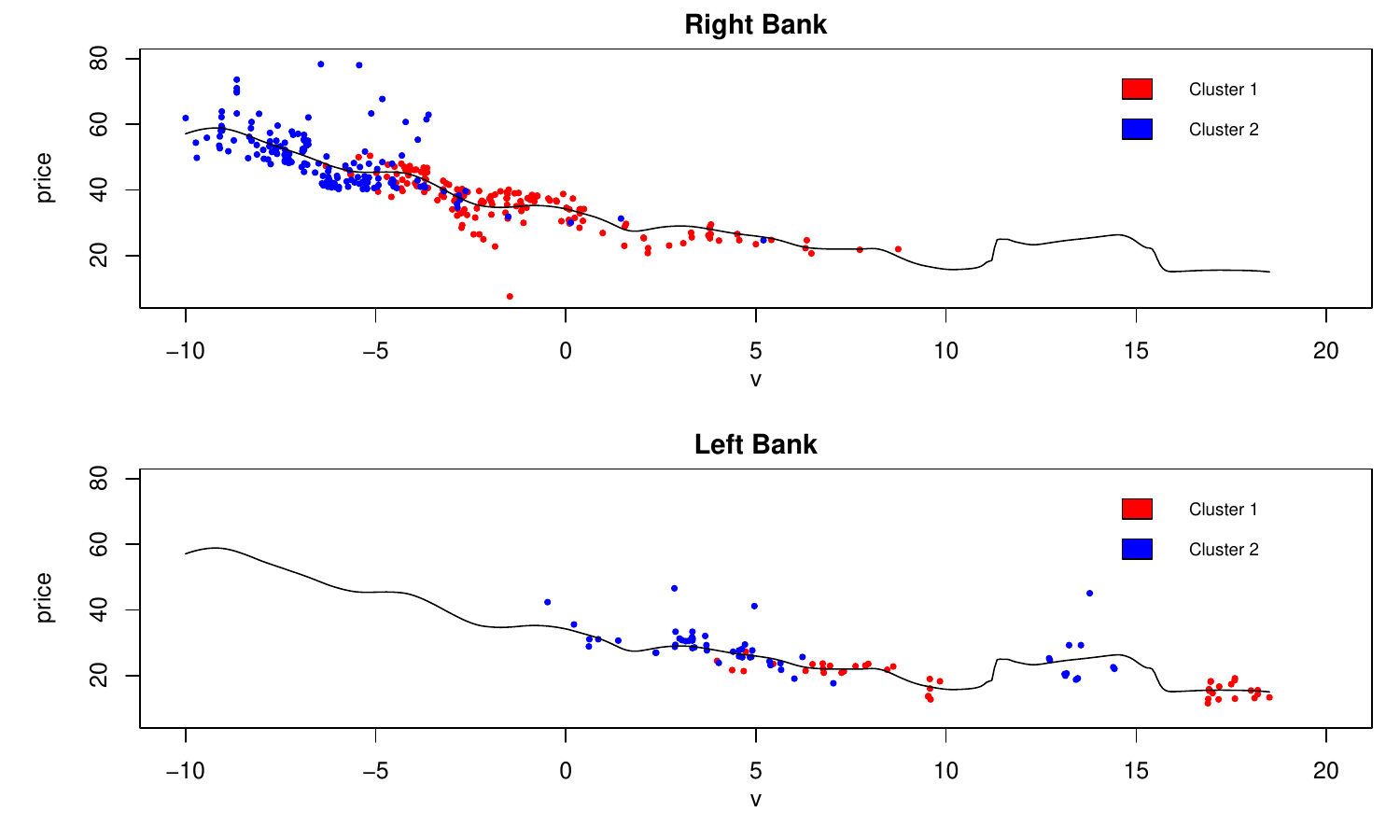}
 \caption{{\footnotesize Scatter plots of house prices from a real estate valuation study overlaid with the estimated mean function from the clusterwise index mean model incorporating cluster-specific intercepts.}} \label{fig:Sindian_mean_D} 
\end{figure}

\end{subsection}

\begin{subsection}{Application in Heart Disease Research} \label{subsec:CLE}

The Cleveland Clinic Foundation Heart Disease dataset comprises measurements on 13 clinical variables from 303 patients, originally compiled to aid in diagnosing heart disease. \citet{lauer1999impaired} identified the maximum heart rate achieved during exercise as a prognostic indicator of cardiac mortality, and \citet{wei2013latent} subsequently proposed its use as a surrogate endpoint for heart disease status. This study models maximum heart rate (\textit{mhr}) as the response variable, using the same set of covariates as \citet{ma2017concave}---age (\textit{age}), gender (\textit{gender}; 0 = female, 1 = male), resting blood pressure (\textit{rbp}), serum cholesterol (\textit{sc}), fasting blood sugar (\textit{fbs}), and resting electrocardiographic results (\textit{re})---with heart disease status (\textit{hds}; 0 = no heart disease, 1 = heart disease) included as an additional covariate. The analysis examines whether latent heterogeneity affects the association between \textit{mhr} and \textit{hds} or whether the relationship is consistent across the population. Further details on the dataset are available in \citet{detrano1989international}.

Seven individuals with extreme values in the response variable or covariates were excluded to improve numerical stability, resulting in a final sample of 296 participants. The CID model, incorporating cluster-specific intercepts and heart disease effects, was fitted to the data. According to $\textsc{SPIC}_1$ and $\textsc{SPIC}_2$, the RC method consistently identified more clusters than the HS, SP, and RSP methods, each of which selected two clusters. The resulting RI values were 0.547 between the RC and RSP methods, 0.624 between the HS and RSP methods, and 1.000 between the SP and RSP methods, indicating that the clustering estimates from the RC and HS methods differ substantially from that of the RSP method. In contrast, the SP and RSP methods produce the same clustering and parameter estimates.

In the RSP method, the estimated cluster index coefficients for the CID model incorporating cluster-specific intercepts and disease effects are reported below.
\begin{table}[H]
  \centering
  {\footnotesize
   \begin{adjustbox}{max width=\linewidth}
    \begin{tabular}{ccccccccc}
    \toprule
    Cluster & intercept &\textit{hds}  & \textit{age}   & \textit{gender} & \textit{rbp}   & \textit{sc}   & \textit{fbs}   & \textit{re} \\
    \midrule
    1     &   0    & 2.38  (0.140)& 1  & -0.16 (0.063)& 0.02 (0.024) & 0.00  (0.027) & -0.51 (0.079)& 0.17 (0.033)\\
    2     & -4.37 (0.120)& 1.51 (0.171) & 1 & -0.16 (0.063)& 0.02 (0.024) & 0.00  (0.027) & -0.51 (0.079)& 0.17 (0.033)\\
          \bottomrule
    \end{tabular}%
     \end{adjustbox}
     }
\end{table}%
\noindent The significant variation in cluster-specific intercepts, coupled with the significant adverse effects of \textit{hds}, highlights the effectiveness of the proposed cluster selection strategy in capturing meaningful heterogeneity. These results indicate a significant interaction between the latent cluster structure and disease status in shaping \textit{mhr}.
Within the CID model, the cluster-invariant coefficients of \textit{gender} and \textit{fbs} are significantly associated with \textit{mhr}, while \textit{age} and \textit{re} exhibit significant adverse effects. In Figure \ref{fig:CLE_mean_D}, the estimated mean function displays a clear downward trend. Notably, the index-\textit{mhr} relationship appears approximately linear within Cluster 2, while Cluster 1 exhibits a nonlinear and decreasing trajectory. These patterns indicate that individuals in Cluster 2 tend to achieve higher \textit{mhr} than those in Cluster 1. Moreover, within each cluster, individuals diagnosed with heart disease tend to have lower \textit{mhr} than their non-diseased counterparts. Figure \ref{fig:CLE_pdf_D} further illustrates substantial heterogeneity in distributional shapes across subpopulations, providing visual evidence of the model’s ability to capture distributional differences attributable to latent cluster structure and disease status.

The structural dimension of the SDR-based cluster membership model was identified as one, with Cluster 2 serving as the reference group. The coefficients for the CS direction are estimated below.
\begin{table}[H]
  \centering
  \footnotesize
    \begin{tabular}{ccccccc}
    \toprule
     \textit{hds}  & \textit{age}   & \textit{gender} & \textit{rbp}   & \textit{sc}   & \textit{fbs}   & \textit{re} \\
    \midrule
    -2.39 (0.379)& -0.82 (0.079)& -0.29 (0.129)& 1 & 1.03 (0.110) & 1.47 (0.200)  & 0.03 (0.076)  \\
          \bottomrule
    \end{tabular}%
\end{table}%
\noindent The results indicate that $(\textit{rbp},\textit{sc},\textit{fbs})$ and $(\textit{hds},\textit{age},\textit{gender})$ exert significant effects in opposite directions on the probability of assignment to Cluster 1. These findings suggest that the proposed covariate-dependent cluster membership model successfully captures latent heterogeneity associated with the observed covariates.
Figure \ref{fig:CLE_prob} shows that the logistic model fails to account for the complex nonlinear structure underlying the data.

\begin{figure}[htbp]
\centering
 \includegraphics[width=12cm,height=6cm]{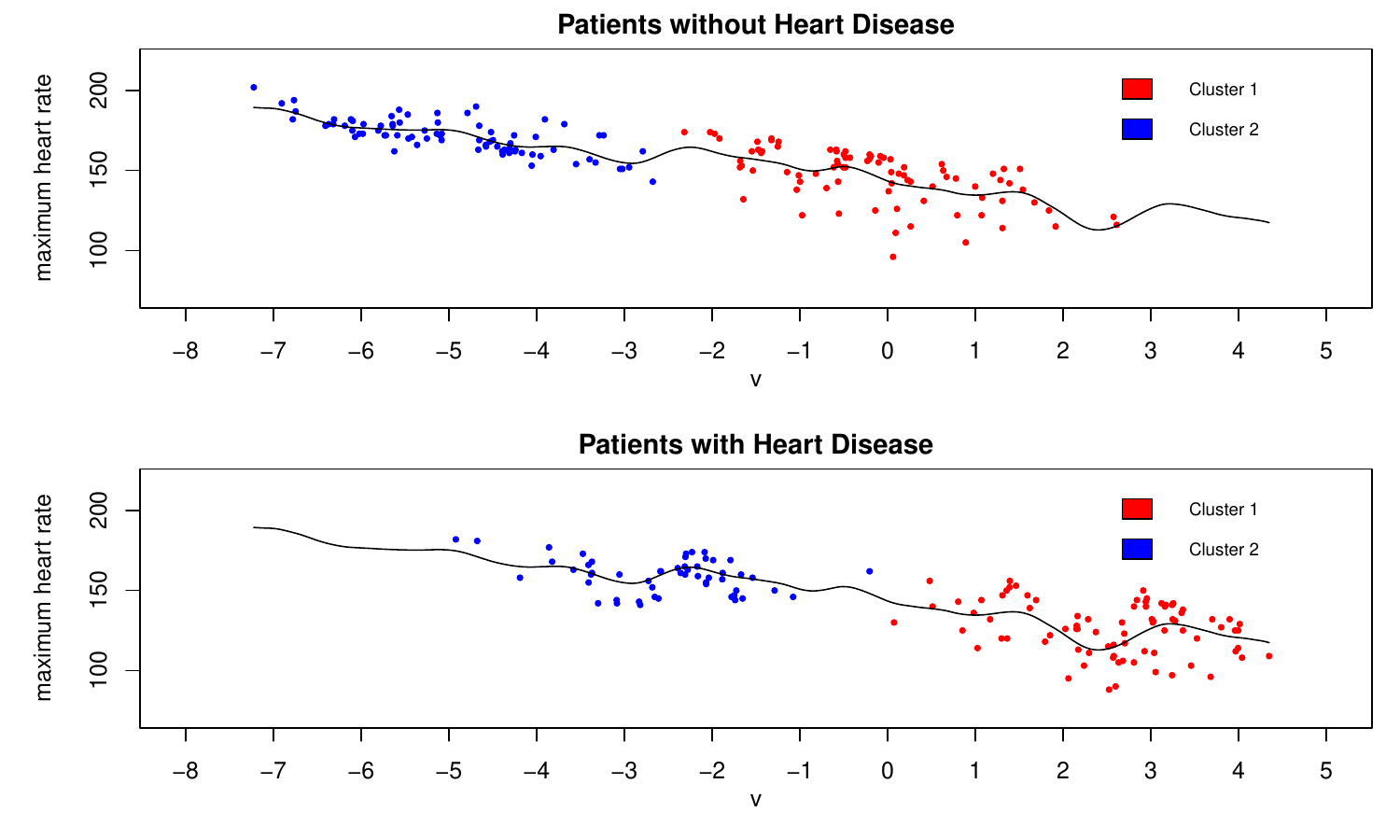}
 \caption{{\footnotesize Scatter plots of maximum heart rates of patients with and without heart disease from the Cleveland Heart Disease study overlaid with the estimated mean function from the clusterwise index mean model incorporating cluster-specific intercepts and disease effects.}} \label{fig:CLE_mean_D} 
\end{figure}

\end{subsection}

\end{section}

\begin{section}{Discussion}

\label{sec:discussion}

Estimation in unsupervised mixture regression models presents considerable challenges within semiparametric distribution and cluster membership frameworks, particularly in contrast to their parametric counterparts. Addressing these challenges necessitates the imposition of substantial structural conditions on both the cluster indices governing the distributional model and the functional form of the cluster membership mechanism. To this end, we employ the SID model as an alternative representation of the CID model to construct clustering and parameter estimators through the SP estimation method. A heuristic initialization strategy is proposed to enhance computational efficiency by providing effective values. A principal contribution of this work is developing an optimal clustering rule that sharpens clustering and parameter estimation. To characterize the covariate structure underlying cluster membership more effectively, we incorporate a SDR technique and propose a novel estimator for the central subspace. Concluding the methodological framework, we propose two semiparametric information criteria for selecting the number of clusters, emphasizing their empirical performance and practical relevance.

The proposed semiparametric CID model offers considerable flexibility and is a powerful tool for exploratory data analysis, enabling the investigation of complex underlying distributional forms. Our pseudo estimation method, developed in parallel with parametric counterparts, constitutes a crucial advancement and underpins constructing a model-checking rule. Assessing the independence of each covariate from the cluster variable enables us to determine whether the cluster membership model is covariate-independent or covariate-dependent. However, this aspect merits further theoretical and empirical study. When the covariate-dependent structure is supported, the proposed SDR method provides an effective estimator of the CS, facilitating a model selection strategy. Practical data scenarios, such as those encountered in medical image recognition and web page classification, often involve labeled and unlabeled observations. A central challenge remains the effective integration of information from labeled and unlabeled data to enhance the clustering estimator's accuracy and improve the parameter estimators' precision.

Consider the clusterwise linear model $Y=\sum^{k}_{\ell=1}I(C=\ell)\gamma^{\top}_{\ell}Z+\varepsilon$, where the error term $\varepsilon$ 
is independent of $(X, C)$ and 
follows an unknown density $g_{0}(\cdot)$ with mean zero and variance $\sigma^{2}$. The proposed estimation procedure can refine the clustering and parameter estimators previously established in the literature. The optimal clustering rule is constructed by integrating the clusterwise index density model $g_{0}(y-\gamma^{\top}_{\ell}z)$ with the SDR-based cluster membership model. For the clusterwise index model $Y=\sum^{k}_{\ell=1}I(C=\ell)m(\gamma^{\top}_{\ell}Z)+\varepsilon$, which specifies an unknown smooth response function 
 $m(\cdot)$, the separation penalty estimation is adapted by replacing the pseudo sum of integrated squares with the pseudo sum of squares. The resulting density estimator is subsequently used to construct the optimal clustering rule. By extending the generalized regression framework of \cite{han1987non} to accommodate latent cluster structures, clustering data points via penalized rank correlation remains computationally challenging.
Our methodology enables direct parameter estimation without imposing monotonicity constraints on the bivariate function involving the cluster indices and the error term.
     
When the response variable denotes the failure time of an event, the proposed CID framework naturally extends to a general semiparametric clusterwise index hazards model. Due to loss to follow-up or events preceding study entry, censoring or truncation poses considerable challenges for statistical analysis. The methodological developments presented herein lay the groundwork for advancing inference procedures under such complexities. Longitudinal data present distinct challenges, particularly in biomedicine, epidemiology, marketing, and the social sciences, where time-varying covariates and irregular or intermittent measurement schedules complicate modeling and inference. To address this, we extend the clusterwise index mean model to accommodate time-varying covariate effects. A central methodological difficulty in this extension is properly handling within-subject correlation to achieve efficient estimation.

\end{section}
\end{spacing}

\bibliographystyle{biom}
\bibliography{reference}{}

\appendix

\section*{}

\renewcommand{\thesection}{A}
\renewcommand{\theequation}{\thesection.\arabic{equation}}
\renewcommand{\thetable}{S\arabic{table}}
\renewcommand{\thefigure}{S\arabic{figure}}
\renewcommand{\thelem}{\thesection.\arabic{aplemma}}

\setcounter{table}{0}
\setcounter{figure}{0}

\subsection{Notation}
\label{sec:A.Notations}

Let $W_{*}$ denote the subvector of $W$ obtained by excluding the components corresponding to the known entries of $\gamma$, and define ${X}_{d*}=(X_{d+1},\dots,X_{p})^{\top}$.
Denote by $\mathcal{W}_{*}$ the support of $W_{*}$ and $\mathcal{A}_{d} = \mathcal{A}_{d1} \times \cdots \times \mathcal{A}_{dd}$ the parameter space of $A_{d*}$.
The following notation is used throughout the proofs:
\begin{align*}
N_{s,h}(y,v;\gamma)  =&\frac{1}{n}\sum_{i=1}^n \big(I(Y_i \leq y)\big)^s K_{2,h}\big(\gamma^{\top}W_i - v\big),\\
N_{s,\ell,h_d}(u;A_{d})  =& \frac{1}{n}\sum_{i=1}^n \big(I(C_i = \ell)\big)^s \mathcal{K}_{r,h_d}\big(A_{d}^{\top}X_i - u\big), \\
M_{s,m}(y, w, v)  =& (-1)^m\E\big[\big(G\big(y, \gamma^{\text{o} \top}W\big)\big)^s (W_{*} - w_{*})^{\otimes m}| \gamma^{\top}W = v\big], \\
M_{s,m,\ell}(x, u)  = &(-1)^m\E\big[\big(\pi_{\ell}\big(A^{\text{o} \top}X\big)\big)^s ({X}_{d*} - {x}_{d*})^{\otimes m}| A_{d}^{\top}X = u\big],
\end{align*}
where $s=0, 1$, $\ell =1, \dots, k-1$, and $m=0, 1, 2$.

The proof of Lemma \ref{A:Lemma_1} requires the uniform convergence of the derivatives $ \partial_{\gamma_{*}}^m N_{s,h}(y, \gamma^{\top}w; \gamma)$ and $\partial_{\vvec{(A_{d*})}}^m N_{s,\ell,h_d}(A_{d}^{\top}x; A_{d})$, for $s=0, 1$, $\ell = 1, \dots, k-1$, and $m=0,1,2$. 
These derivatives play a central role in analyzing the smooth functionals
\begin{align*}
\widehat{G}_{h}(y,\gamma^{\top}w; \gamma) = & \frac{N_{1,h}(y,\gamma^{\top}w;\gamma)}{N_{0,h}(y,\gamma^{\top}w;\gamma)}, ~\widehat{\pi}_{\ell,h_{d}}(A_{d}^{\top}x; A_{d}) = \frac{N_{1,\ell,h_{d}}(A_{d}^{\top}x; A_{d})}{N_{0,\ell,h_{d}}(A_{d}^{\top}x; A_{d})},\\
\partial_{\gamma_{*}} \widehat{G}_{h}(y,\gamma^{\top}w; \gamma) = & \sum_{s = 0}^1 \big( - \widehat{G}_h\big(y,\gamma^{\top}w;\gamma \big)\big)^{1-s} \frac{\partial_{\gamma_{*}}N_{s,h}(y,\gamma^{\top}w;\gamma)}{N_{0,h}(y,\gamma^{\top}w;\gamma)},\\
\partial_{\vvec{(A_{d*})}} \widehat{\pi}_{\ell,h_{d}}(A_{d}^{\top}x; A_{d}) = & \sum_{s = 0}^1 \big( - \widehat{\pi}_{\ell,h_{d}}(A_{d}^{\top}x; A_{d})\big)^{1-s} \frac{\partial_{\vvec{(A_{d*})}}N_{s,\ell,h_{d}}(A_{d}^{\top}x; A_{d})}{N_{0,\ell,h_{d}}(A_{d}^{\top}x; A_{d})}, \\
\partial_{\gamma_{*}}^2 \widehat{G}_{h}(y,\gamma^{\top}w; \gamma) 
=& \sum_{s = 0}^1 \big( - \widehat{G}_h\big(y,\gamma^{\top}w;\gamma \big)\big)^{1-s} \frac{\partial_{\gamma_{*}}^2N_{s,h}(y,\gamma^{\top}w;\gamma)}{N_{0,h}(y,\gamma^{\top}w;\gamma)} -\frac{\partial_{\gamma_{*}}N_{0,h}(y,\gamma^{\top}w;\gamma)}{N_{0,h}(y,\gamma^{\top}w;\gamma)} \\
&\big(\partial_{\gamma_{*}} \widehat{G}_h \big(y, \gamma^{\top}w; \gamma \big) \big)^{\top} - \partial_{\gamma_{*}} \widehat{G}_h \big(y, \gamma^{\top}w; \gamma \big) \bigg( \frac{\partial_{\gamma_{*}}N_{0,h}(y,\gamma^{\top}w;\gamma)}{N_{0,h}(y,\gamma^{\top}w;\gamma)}\bigg)^{\top},\\
\partial_{\vvec{(A_{d*})}}^2 \widehat{\pi}_{\ell,h_{d}}(A_{d}^{\top}x; A_{d})=&
\sum_{s = 0}^1 \big( - \widehat{\pi}_{\ell,h_{d}}(A_{d}^{\top}x; A_{d}) \big)^{1-s} \frac{\partial_{\vvec{(A_{d*})}}^2N_{s,\ell,h_{d}}(A_{d}^{\top}x; A_{d})}{N_{0,\ell,h_{d}}(A_{d}^{\top}x; A_{d})} \\
&-\frac{\partial_{\vvec{(A_{d*})}}N_{0,\ell,h_{d}}(A_{d}^{\top}x; A_{d})}{N_{0,\ell,h_{d}}(A_{d}^{\top}x; A_{d})} 
\big(\partial_{\gamma_{*}} \widehat{\pi}_{\ell,h_{d}}(A_{d}^{\top}x; A_{d})  \big)^{\top} \\
&- \partial_{\vvec{(A_{d*})}} \widehat{\pi}_{\ell,h_{d}}(A_{d}^{\top}x; A_{d})  \bigg( \frac{\partial_{\vvec{(A_{d*})}}N_{0,\ell,h_{d}}(A_{d}^{\top}x; A_{d})}{N_ {0,\ell,h_{d}}(A_{d}^{\top}x; A_{d})}\bigg)^{\top}.
\end{align*}

\subsection{Assumptions}
\label{sec:A.Assumptions}

Define $D_{\gamma}=\min_{\ell_{1} \neq \ell_{2}} \| \gamma^{\text{o}}_{\ell_{1}} - \gamma^{\text{o}}_{\ell_{2}}\|$ and $\phi_{n}=\phi_0 \sqrt{\frac{\log n }{n}}$, where $\|\cdot\|$ denotes the Frobenius norm of a matrix and $\phi_{0}>0$ is a constant that increases as the variation in the response variable grows. Let $f_{\gamma}(v)$ denote the density function of $\gamma^{\top}W$.
The following assumptions are imposed to establish the oracle property of $(\widehat{\beta}^{\lambda},\widehat{\gamma}^{\lambda})$:
\begin{itemize}
\item[A1.] $h \in \big(h_{\ell}n^{-1/5},h_{u}n^{-1/8}\big)$ for some positive constants $h_{\ell}$ and $h_u$.
  \item[A2.] $\mathcal{X}$, $\mathcal{Y}$, and $\Gamma$ are compact.
  \item[A3.] $d^{4}_{v}f_{\gamma}(v)$ and $\partial^{m+2}_v E\big[(G(y,\gamma^{\text{o}\top}W))^{s}(W_{*}-w_{*})^{\otimes m}\big|\gamma^{\top}W = v\big]$, $s=0,1$, $m = 1, 2$, are Lipschitz continuous in $(v, \gamma_{*})$ with Lipschitz constants independent of $(y,x)$.
  \item[A4.] $\inf_{\{v,\gamma_{*}\}}f_{\gamma}(v) > 0$.
  \item[A5.] $V_{1}$ and $V_{2}$ are positive definite.
  \item[A6.] $B_{n}=\Big\{ \beta: \sup_{\{1\leq i\leq n\}} \| \beta_i - \beta^{\text{o}}_i \|_{1} \leq \phi_{n} \Big\}$ and $\Gamma_{n}=\Big\{\gamma: \sup_{\{\ell\in\mathcal{C}\}}\|\gamma_{\ell}-\gamma^{\text{o}}_{\ell}\|_{1}\leq\phi_{n}\Big\}$.
\item[A7.] $D_{\gamma}>4\phi_{n}$ and $\lambda \geq c_{0}$ for some constant $c_{0}>0$.
\end{itemize}

Let $f_{A_{d}}(v)$ denote the density function of $A^{\top}_{d}X$. For each fixed $d$, define 
\begin{align*}
A^{\text{o}}_{d}=&\arg\min_{A_{d}}\sum_{\ell =1}^{k-1} \E\big[\big(I(C = \ell) - \pi^{[0]}_{\ell}\big({X}, A^{\top}_{d}X\big) \big)^{2}\big],\\
S_{A_{d}}=&\sum^{k-1}_{\ell=1}\big(I(C=\ell)-\pi^{[0]}_{\ell}\big({X}, A^{\top}_{d}X\big)\big)\pi^{[1]}_{\ell}\big({X},A^{\top}_{d}X\big), 
\end{align*}
and
\begin{align*}
 V_{A_{d}}=&\sum^{k-1}_{\ell=1}E\big[(\pi^{[1]}_{\ell}\big({X}, A^{\top}_{d}X)\big)^{\otimes 2}-(I(C=\ell)-\pi^{[0]}_{\ell}\big({X}, A^{\top}_{d}X)\big)\pi^{[2]}_{\ell}\big({X}, A^{\top}_{d}X\big)\big]. 
 \end{align*}
 By construction, we have
$E[S_{A^{\text{o}}_{d}}]=0$ and $V_{A^{\text{o}}_{d}}
=\sum^{k-1}_{\ell=1}E\big[ \big(\pi^{[1]}_{\ell}({X}, A^{\text{o} \top}_{d}X)\big)^{\otimes 2}\big]$, $d = d^{\text{o}}, \dots, p$.
The following assumptions are made for the asymptotic properties of $\hat{d}$ and $\widehat{A}$:
\begin{itemize}
\item[B1.] $h_{d} \in \big(h_{\ell}n^{-1/\max\{2d+2,d+4\}},h_{u}n^{-1/4r}\big)$ for some positive constants $h_{\ell}$ and $h_{u}$.
\item[B2.] $\mathcal{X}$ and $\mathcal{A}_d$ are compact.   
\item[B3.] $\partial^{r+2}_{u} f_{A_{d}}(u)$ and $\partial^{m+r}_{u} E\big[ \pi_{\ell}(A^{\text{o} \top}X) ({X}_{d*} - {x}_{d*})^{\otimes m} \big | A^{\top}_d X = u\big]$, $m = 1, 2$,  $\ell=1,\dots,k-1$, are Lipschitz continuous in $(u, A_{d*})$ with the Lipschitz constants independent of $(\ell, x)$.
\item[B4.] $\inf_{\{ u, A_{d*}\}} f_{A_{d}}(u) > 0$.
\item[B5.] $\E[(S_{A^{\text{o}}})^{\otimes 2}]$ and $V_{A^{\text{o}}}$ are positive definite.
\end{itemize}

The compact support conditions in {A2} and {B2}, frequently encountered in application, can be substituted by appropriate moment conditions. The uniform convergence of the estimated functions to their targets relies on the smoothness assumptions in {A3} and {B3}. The estimators’ asymptotic behavior is characterized under assumptions {A4}--{A5} and {B4}--{B5}. Assumptions {A6} and {A7} are further imposed to ensure the oracle property of the SP estimator.

\subsection{Technical Lemma}
\label{sec:A.lemma}

\begin{aplemma}
Under assumptions {A2}--{A4},
\begin{align*}
\sup_{\{y, w, \gamma_{*}\}} \bigg\| \partial_{\gamma_{*}}^m\widehat{G}_h(y, \gamma^{\top}w; \gamma) - G^{[m]}(y, w, \gamma^{\top}w) - \frac{1}{n}\sum_{i=1}^n \xi_i^{[m]}(y,w,\gamma) \bigg\| = o_p \bigg({\frac{\log n}{nh^{2m+1}}}\bigg) + O \big(h^4 \big)
\end{align*}
and
\begin{align*}
\sup_{\{y,w,\gamma_{*}\}} \Big\|\frac{1}{n} \sum_{i=1}^n \xi_i^{[m]}(y,w,\gamma)\Big\| = & o_p \Bigg(\sqrt{\frac{\log n}{nh^{2m+1}}}\Bigg) + O \big(h^2 \big),
\end{align*}
and under assumptions {B2}--{B4},
\begin{align*}
\sup_{\{\ell, x, A_{d*}\}} \bigg\| \partial_{\vvec{(A_{d*})}}^m\widehat{\pi}_{\ell,h_d}( A_{d}^{\top}x; A_{d}) - \pi_{\ell}^{[m]}( x,A_{d}^{\top}x) - \frac{1}{n}\sum_{i=1}^n \xi_{i,\ell}^{[m]}(x, A_{d}) \bigg\| = o_p \bigg({\frac{\log n}{nh_{d}^{2m+d}}}\bigg) + O \big(h_{d}^{2r} \big)
\end{align*}
and
\begin{align*}
\sup_{\{\ell, x, A_{d*} \}} \Big\|\frac{1}{n} \sum_{i=1}^n \xi_{i,\ell}^{[m]}(x,A_{d})\Big\| =& o_p \Bigg(\sqrt{\frac{\log n}{nh_{d}^{2m+d}}}\Bigg) + O \big(h_{d}^{r} \big),
\end{align*}
where $G^{[m]}(y, w, v)$, $\pi_{\ell}^{[m]}(x, u)$, $\xi_i^{[m]}(y,w,\gamma)$, and $\xi_{i,\ell}^{[m]}(x,A_{d})$ are defined in the proof for $i = 1, \dots, n$, $m = 0, 1, 2$, and $\ell = 1, \dots, k-1$.
\label{A:Lemma_1}
\end{aplemma}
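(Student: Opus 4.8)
The plan is to handle both halves of the lemma by the same two-stage device and then reduce the second to the first. The estimators $\widehat{G}_h$ and $\widehat{\pi}_{\ell,h_d}$ are structurally identical---a smooth kernel weighting of indicator variables, normalized by its own total mass---so I would carry out the argument in full for $\widehat{G}_h$ and transcribe it to $\widehat{\pi}_{\ell,h_d}$ by replacing the univariate second-order kernel $K_{2,h}$ with the $d$-dimensional product kernel $\mathcal{K}_{r,h_d}$ and tracking the change in effective dimension. The starting point is the explicit quotient formulas for $\widehat{G}_h$, $\partial_{\gamma_*}\widehat{G}_h$, and $\partial_{\gamma_*}^2\widehat{G}_h$ recorded in Appendix~\ref{sec:A.Notations}, which express every derivative as a rational function of the averages $\partial^{j}_{\gamma_*}N_{s,h}$, $j\le m$, $s=0,1$. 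The proof thus reduces to (i) a uniform expansion of these averages and (ii) propagation through the quotient, keeping the leading linear-plus-bias term and bounding the quadratic remainder.

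First I would establish the stochastic size of the centered averages, namely
\[
\sup_{\{y,w,\gamma_*\}}\big\|\partial^m_{\gamma_*}N_{s,h}(y,\gamma^\top w;\gamma)-\E\big[\partial^m_{\gamma_*}N_{s,h}(y,\gamma^\top w;\gamma)\big]\big\|=o_p\Big(\sqrt{\tfrac{\log n}{nh^{2m+1}}}\Big),\quad s=0,1,\ m=0,1,2.
\]
This is the technical engine. The suprema range over the joint index $(y,w,\gamma_*)$, so pointwise large-deviation bounds are insufficient and I would invoke a maximal inequality for kernel-type empirical processes (of Einmahl--Mason / Gin\'e--Guillou type). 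The relevant function classes are manageable: the responses $I(Y_i\le y)$ indexed by $y$ form a VC class, while $K^{(m)}_2\big((\gamma^\top w_i-v)/h\big)$ indexed by $(v,\gamma_*)$ over the compact parameter sets of \textbf{A2} is a bounded, Lipschitz, compactly supported class with polynomial covering numbers; products of the two remain VC-subgraph. The bandwidth window in \textbf{A1} is exactly what makes the resulting rate $o_p(1)$ and balances it against the bias below.

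Next I would compute the bias $\E[\partial^m_{\gamma_*}N_{s,h}]$ by a Taylor expansion of the conditional moments $M_{s,m}(y,w,v)$ in $v$, using the Lipschitz-smoothness of \textbf{A3} and the kernel moment conditions ($\int K_2=1$, symmetry, and the specified $\int K^{(2)}_r=0$). For a second-order kernel this yields a leading bias of order $h^2$ together with a next term of order $h^4$; the lower bound $\inf f_\gamma>0$ in \textbf{A4} keeps the denominator $N_{0,h}$ uniformly bounded away from zero, so the quotient is differentiated without loss of rate. Assembling, I define $\xi^{[m]}_i(y,w,\gamma)$ to be the leading linearization of the quotient in the deviations $(N_{s,h}-\E N_{s,h})$ plus the $O(h^2)$ bias; its stochastic part inherits the rate $\sqrt{\log n/(nh^{2m+1})}$, giving the second displayed bound. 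The remainder from the quotient expansion is quadratic in the centered averages, hence of order $\big(\sqrt{\log n/(nh^{2m+1})}\big)^2=\log n/(nh^{2m+1})$, while the residual bias is $O(h^4)$, giving the first displayed bound. The $\widehat{\pi}_{\ell,h_d}$ statements follow identically with $M_{s,m,\ell}$ in place of $M_{s,m}$: the $r$th-order kernel drives the bias to $O(h^{r}_d)$ (next term $O(h^{2r}_d)$), the variance denominator becomes $h^{2m+d}_d$, and the range \textbf{B1} with $r>\max\{(4+d)/4,(1+d)/2\}$ balances the two.

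The main obstacle will be the uniform control of the second-derivative processes ($m=2$) jointly over $(y,w,\gamma_*)$ and over $(\ell,x,A_{d*})$: these involve products and quotients of three empirical averages, so the maximal inequality must be applied to the richer classes generated by $K^{(2)}_2$ (respectively second partials of $\mathcal{K}_{r,h_d}$), and the quadratic remainder must be shown uniformly negligible at the sharpened rate $\log n/(nh^{2m+1})$ rather than merely at the first-order rate. Verifying the entropy bounds for these higher-order derivative classes, and confirming that \textbf{A1}/\textbf{B1} make every cross term subdominant, is the delicate part of the argument.
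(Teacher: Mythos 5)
Your proposal is correct and follows essentially the same route as the paper: uniform control of the centered averages $\partial^m_{\gamma_*}N_{s,h}$ over VC/Euclidean-type classes, a Taylor expansion of the conditional moments for the bias ($O(h^2)$ leading, $O(h^4)$ residual; $O(h_d^r)$ and $O(h_d^{2r})$ for the $r$th-order kernel), and linearization of the quotient with a quadratic remainder, exactly as in Appendix A.3. The only cosmetic difference is the maximal inequality invoked — you cite Einmahl--Mason/Gin\'e--Guillou, the paper uses Pakes--Pollard Euclidean classes with Pollard's Theorem II.37 — and note that the lemma itself is stated under A2--A4/B2--B4 only, so A1/B1 are not needed until the rates are balanced downstream.
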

\begin{proof}
Given the classes 
\begin{align*}
&\{ I(Y \leq y): y \in \mathcal{Y} \}, ~  \{ ({X}_{d*}-{x}_{d*})^{\otimes m}: {x}_{d*} \in {\mathcal{X}}_{d*}\},~\{ (W_{*}-w_{*})^{\otimes m}: w_{*} \in \mathcal{W}_{*}\},\\
&\{ c_1 K^{(m)}_2(\gamma^{\top}W + c_2): c_1, c_2 \in \mathbb{R}, \gamma_{*} \in \Gamma\}, \text{ and }
\{ c_1 K^{(m)}_{r}(\alpha_{j}^{\top}X + c_2): c_1, c_2 \in \mathbb{R}, \alpha_{j} \in \mathcal{A}_{j}\},
\end{align*}
with VC-indices 2, 1, 1, at most $kp-2$, and at most $p-d$, respectively, for $m = 0, 1, 2$, $j = 1, \dots, d$, Lemma 2.12 in \cite{pakes1989simulation} ensures that each class is Euclidean. Moreover, Lemma 2.14 therein implies that the following classes are also Euclidean:
\begin{align*}
&\mathcal{F}_{1m} = \big\{ I(Y\leq y) c_1 K^{(m)}_2(\gamma^{\top}W + c_2) (W_{*}-w_{*})^{\otimes m}: y \in \mathcal{Y}, w_{*} \in \mathcal{W}_{*}, \gamma_{*} \in \Gamma, c_1, c_2, \in \mathbb{R}\big\}\\
\text{and}&\\
&\mathcal{F}_{2m} = \Bigg\{ I(C = \ell) \prod_{\{\sum_{j=1}^d m_j = m\}} (c_{1j} K^{(m_j)}_r(\alpha_{j}^{\top}X + c_{2j})) ({X}_{d*}-{x}_{d*})^{\otimes m}: \ell \in \mathcal{C}, {x}_{d*} \in {\mathcal{X}}_{d*}, A_{d*}\\
&\hspace{0.6in} \in \mathcal{A}_{d}, c_{1j}, c_{2j}, \in \mathbb{R}\Bigg\}.
\end{align*}
Under Assumptions {A2}, {A3}, {B2}, and {B3}, it holds that
\begin{eqnarray}
\sup_{\{y, w, \gamma_{*}\}} \bigg\|\E \bigg[\bigg( \frac{1}{h^{m+1}} (I(Y \leq y))^s K_2^{(m)} \bigg( \frac{\gamma^{\top}(W-w)}{h} \bigg) \bigg)^2 \big(  \vvec(W_{*} - w_{*}) \big)^{\otimes 2m} \bigg] \bigg\| 
 = O \big( h^{-(2m+1)} \big)~\label{eq:lemma1_1}
\end{eqnarray}
and
\begin{eqnarray}
\sup_{\{\ell,x, A_{d*}\}} \bigg\|\E \bigg[\bigg( \frac{1}{h_{d}^{m+d}} (I(C = \ell))^s \mathcal{K}_r^{(m)} \bigg( \frac{A_{d}^{\top}(X-x)}{h_d} \bigg) \bigg)^2  \big( \vvec({X}_{d*} - {x}_{d*}) \big)^{\otimes 2m} \bigg] \bigg\| 
= O \big( h_{d}^{-(2m+d)} \big)~\label{eq:lemma2_1}
\end{eqnarray}
for $s = 0, 1$ and $m = 0, 1, 2$. Combining (\ref{eq:lemma1_1}) and (\ref{eq:lemma2_1}) with Theorem II.37 in \cite{pollard1984convergence}, we obtain
\begin{align}
&\sup_{\{y, w, \gamma_{*}\}} \Big\| \partial_{\gamma_{*}}^m N_{s,h}(y, \gamma^{\top}w; \gamma) - \E\big[\partial_{\gamma_{*}}^m \big( I(Y \leq y)^s K_{2,h}(\gamma^{\top}W - \gamma^{\top}w) \big)\big] \Big\| = o_p \bigg( \sqrt{\frac{\log n}{nh^{2m+1}}} \bigg)\label{eq:lemma1_2}
\end{align}
and
\begin{align}
&\sup_{\{ \ell, x, A_{d*}\}} \Big\| \partial_{\vvec{(A_{d*})}}^m N_{s,\ell,h_{d}}( A_{d}^{\top}x; A_{d}) - \E\big[\partial_{\vvec{(A_{d*})}}^m \big( I(C = \ell)^s \mathcal{K}_{r,h_d}(A_{d}^{\top}X - A_{d}^{\top}x) \big)\big] \Big\| \nonumber\\
&= o_p \bigg( \sqrt{\frac{\log n}{nh_{d}^{2m+d}}} \bigg), s = 0, 1, m = 0, 1, 2.\hspace{1.5in} \label{eq:lemma2_2}
\end{align}
Using the identities $\E [I(Y \leq y)|W = w] = G(y, \gamma^{\text{o} \top}w)$ and $\E [I(C = \ell)|X = x] = \pi_{\ell}(A^{\text{o} \top}x)$, the expectations in \eqref{eq:lemma1_2} and \eqref{eq:lemma2_2} can be rewritten as
\begin{align}
&\E\big[\partial_{\gamma_{*}}^m \big( \big(I(Y \leq y)\big)^s K_{2,h}(\gamma^{\top}W - \gamma^{\top}w) \big)\big]\nonumber \\
&= \frac{1}{h^{m+1}}\E\bigg[ \big(I(Y\leq y)\big)^s K^{(m)}_2 \bigg( \frac{\gamma^{\top}(W-w)}{h} \bigg) (W_{*}-w_{*})^{\otimes m}\bigg] \nonumber\\
&=\frac{1}{h^{m+1}} \E \bigg[K^{(m)}_2 \bigg( \frac{\gamma^{\top}(W-w)}{h} \bigg) M_{s,m} \big(y,w,\gamma^{\top}W \big) \bigg] \nonumber\\
&= \frac{1}{h^{m}} \int K^{(m)}_2 (\nu) M_{s,m}\big(y,w,\gamma^{\top}w + \nu h\big) f_{\gamma_{*}}\big(\gamma^{\top}w + \nu h\big)d \nu \nonumber\\
&= \partial_{v}^m \big(M_{s,m}\big(y,w,\gamma^{\top}w \big)f_{\gamma_{*}}\big(\gamma^{\top}w\big)\big) + r_{s,m}\big( y,w,\gamma \big) \label{eq:lemma1_3}
\end{align}
and
\begin{align}
&\E\Big[\partial_{\vvec{(A_{d*})}}^m \big( \big(I(C = \ell)\big)^s \mathcal{K}_{r,h_{d}}(A_{d}^{\top}X - A_{d}^{\top}x) \big)\Big]\nonumber 
\end{align}
\begin{align}
&= \frac{1}{h_{d}^{m+d}}\E\Bigg[ \big(I(C = \ell)\big)^s \prod_{\{\sum_{j=1}^d m_j = m\}} K^{(m_j)}_r \bigg( \frac{\alpha_{j}^{\top}(X-x)}{h_d} \bigg) ({X}_{d*}-{x}_{d*})^{\otimes m}\Bigg] \nonumber\\
&=\frac{1}{h_{d}^{m+d}} \E \Bigg[\prod_{\{\sum_{j=1}^d m_j = m\}} K^{(m_j)}_r \bigg( \frac{\alpha_{j}^{\top}(X-x)}{h_d} \bigg) M_{s,m,\ell} \big(x,A_{d}^{\top}X \big) \Bigg] \hspace{0.7in}\nonumber\\
&= \partial_{u}^m \big(M_{s,m,\ell}\big(x,A_{d}^{\top}x \big)f_{A_{d*}}\big(A_{d}^{\top}x\big)\big) + r_{s,m,\ell}\big(x,A_{d} \big), \label{eq:lemma2_3}
\end{align}
where
$\sup_{\{y, w, \gamma_{*}\}} \big\|r_{s,m}\big(y,w,\gamma \big) \big\| = O(h^2)$ and $\sup_{\{\ell, x, A_{d*}\}} \big\|r_{s,m,\ell}\big(x,A_{d} \big) \big\| = O(h_{d}^r)$,
$s = 0, 1$, $m = 0, 1, 2$.
Substituting (\ref{eq:lemma1_3}) and (\ref{eq:lemma2_3}) into (\ref{eq:lemma1_2}) and (\ref{eq:lemma2_2}), respectively, yields \begin{align}
&\sup_{\{y, w, \gamma_{*}\}} \Big\| \partial_{\gamma_{*}}^m N_{s,h}(y, \gamma^{\top}w; \gamma) - \partial_{v}^m \big(M_{s,m}\big(y,w,\gamma^{\top}w \big)f_{\gamma_{*}}\big(\gamma^{\top}w\big) \big)\Big\| \nonumber \\
&= \sup_{\{y, w, \gamma_{*}\}} \bigg\| \frac{1}{n} \sum_{i=1}^n \xi_{i,s,m}(y,w,\gamma) \bigg\|= o_p \Bigg( \sqrt{\frac{\log n}{nh^{2m+1}}} \Bigg) + O(h^2) \label{eq:lemma1_4}
\end{align}
and
\begin{align}
&\sup_{\{\ell,x, A_{d*}\}} \Big\| \partial_{\vvec{(A_{d*})}}^m N_{s,\ell,h_{d}}(A_{d}^{\top}x; A_{d}) - \partial_{u}^m \big(M_{s,m,\ell}\big(x,A_{d}^{\top}x \big)f_{A_{d*}}\big(A_{d}^{\top}x\big) \big)\Big\| \nonumber \\
& = \sup_{\{\ell, x, A_{d*}\}} \bigg\| \frac{1}{n} \sum_{i=1}^n \xi_{i,s,m,\ell}(x,A_{d})\bigg\| = o_p \Bigg( \sqrt{\frac{\log n}{nh_{d}^{2m+d}}} \Bigg) + O(h_{d}^r),\label{eq:lemma2_4}
\end{align} 
where
\begin{align*}
&\xi_{i,s,m}(y,w,\gamma)  =  \partial_{\gamma_{*}}^m \big( \big( I(Y_i \leq y) \big)^s K_{2,h} \big(\gamma^{\top}W_i -\gamma^{\top}w  \big) \big) - \partial_{v}^m \big(M_{s,m}\big(y,w,\gamma^{\top}w \big)f_{\gamma_{*}}\big(\gamma^{\top}w\big) \big)\\
\text{and}&\\
&\xi_{i,s,m,\ell}(x,A_{d})  =  \partial_{\vvec{(A_{d*})}}^m \big( \big( I(C_i = \ell) \big)^s \mathcal{K}_{r,h_{d}} \big(A_{d}^{\top}X_i -A_{d}^{\top}x  \big) \big) - \partial_{u}^m \big(M_{s,m,\ell}\big(x,A_{d}^{\top}x \big)f_{A_{d*}}\big(A_{d}^{\top}x\big) \big)
\end{align*}
for $i = 1, \dots, n$, $s = 0, 1$, $m = 0, 1, 2$, and $\ell = 1, \dots, k-1$.

By Taylor expansion and the properties in (\ref{eq:lemma1_4}) and (\ref{eq:lemma2_4}), we obtain
\begin{align}
&\frac{\partial_{\gamma_{*}}^m N_{s,h}(y, \gamma^{\top}w; \gamma)}{N_{0,h}\big( y, \gamma^{\top}w; \gamma\big)} = \frac{\partial_{v}^m \big(M_{s,m}\big(y,w,\gamma^{\top}w \big)f_{\gamma_{*}}\big(\gamma^{\top}w\big) \big)}{f_{\gamma_{*}}(\gamma^{\top}w)} + \frac{1}{n} \sum_{i=1}^n \xi_{i,s,m}^{*}(y,w,\gamma) + r^{*}_{s,m}(y,w,\gamma)  \label{eq:lemma1_5}
\end{align}
and
\begin{align}
&\frac{\partial_{\vvec{(A_{d*})}}^m N_{s,\ell,h_{d}}\big(A_{d}^{\top}x; A_{d}\big)}{N_{0,\ell,h_{d}}\big(A_{d}^{\top}x; A_{d}\big)} =  \frac{\partial_{u}^m \big(M_{s,m,\ell}\big(A_{d}^{\top}x; A_{d} \big)f_{A_{d*}}\big(A_{d}^{\top}x\big) \big)}{f_{A_{d*}}(A_{d}^{\top}x)} + \frac{1}{n} \sum_{i=1}^n \xi_{i,s,m,\ell}^{*}(x, A_{d})\nonumber\\
&\hspace{1.85in}+ r^{*}_{s,m,\ell}(x,A_{d}),  \label{eq:lemma2_5}
\end{align}
where
\begin{align*}
\xi_{i,s,m}^{*}(y,w,\gamma) = \frac{\xi_{i,s,m}(y,w,\gamma)}{f_{\gamma_{*}}(\gamma^{\top}w)} - \frac{\partial_{v}^m \big(M_{s,m}\big(y,w,\gamma^{\top}w \big)f_{\gamma_{*}}\big(\gamma^{\top}w\big) \big) \xi_{i,0,0}(y,w,\gamma)}{f_{\gamma_{*}}^2(\gamma^{\top}w)}
\end{align*}
and
\begin{align*}
\xi_{i,s,m,\ell}^{*}(x,A_{d}) = \frac{\xi_{i,s,m,\ell}(x,A_{d})}{f_{A_{d*}}(A_{d}^{\top}x)} - \frac{\partial_{u}^m \big(M_{s,m,\ell}\big(A_{d}^{\top}x; A_{d} \big)f_{A_{d*}}\big(A_{d}^{\top}x\big) \big) \xi_{i,0,0,\ell}(x,A_{d})}{f_{A_{d*}}^2(A_{d}^{\top}x)}
\end{align*}
for $i = 1, \dots, n$, $s = 0,1$, $m = 0, 1, 2$, and $\ell = 1, \dots, k-1$, with
\begin{align*}
&\sup_{\{y, w, \gamma_{*}\}} \bigg\| \frac{1}{n} \sum_{i=1}^n \xi_{i,s,m}^{*}(y,w,\gamma) \bigg\| = o_p \Bigg( \sqrt{\frac{\log n}{nh^{2m+1}}} \Bigg) + O(h^2),\\
&\sup_{\{\ell,x, A_{d*}\}} \bigg\| \frac{1}{n} \sum_{i=1}^n \xi_{i,s,m,\ell}^{*}(x,A_{d}) \bigg\| = o_p \Bigg( \sqrt{\frac{\log n}{nh_{d}^{2m+d}}} \Bigg) + O(h_{d}^r),\\
&\sup_{\{y, w, \gamma_{*}\}} \|r^{*}_{s,m}(y,w,\gamma)\| = o_p \bigg( {\frac{\log n}{nh^{2m+1}}} \bigg) + O(h^4),\\
&\sup_{\{\ell,x, A_{d*}\}} \|r^{*}_{s,m,\ell}(x,A_{d})\| = o_p \bigg( {\frac{\log n}{nh_{d}^{2m+d}}} \bigg) + O(h_{d}^{2r}).
\end{align*}
Combining (\ref{eq:lemma1_5}) and (\ref{eq:lemma2_5}) with the expressions of $\partial_{\gamma_{*}}^m \widehat{G}_h(y, \gamma^{\top}w; \gamma)$
and $\partial_{\vvec{(A_{d*})}}^m \widehat{\pi}_{\ell,h_{d}}(A_{d}^{\top}x; A_{d})$ in Appendix \ref{sec:A.Notations}, we establish the following results for $m=0,1,2$:
\begin{align}
\widehat{G}_h(y, \gamma^{\top}w; \gamma)  =& \frac{M_{1,0}\big(y,w,\gamma^{\top}w \big)f_{\gamma_{*}}\big(\gamma^{\top}w\big)}{f_{\gamma_{*}}(\gamma^{\top}w)} + \frac{1}{n} \sum_{i=1}^n \xi_{i,1,0}^{*}(y,w,\gamma) + r^{*}_{1,0}(y,w,\gamma) \nonumber \\
\stackrel{\triangle}{=} &G^{[0]}(y, w, \gamma^{\top}w) + \frac{1}{n}\sum_{i=1}^n \xi^{[0]}_i(y,w,\gamma) + r^{[0]}(y,w,\gamma), \label{eq:lemma1_6}\\
\partial_{\gamma_{*}}\widehat{G}_h(y, \gamma^{\top}w; \gamma) =& \sum_{s=0}^1 \bigg( - \bigg( G^{[0]}(y, w, \gamma^{\top}w) + \frac{1}{n}\sum_{i=1}^n \xi^{[0]}_i(y,w,\gamma) + r^{[0]}(y,w,\gamma) \bigg) \bigg)^{1-s} \nonumber\\
& \bigg( \frac{\partial_v \big( M_{s,1}(y,w,\gamma^{\top}w) f_{\gamma_{*}}(\gamma^{\top}w)\big)}{f_{\gamma_{*}}(\gamma^{\top}w)} + \frac{1}{n} \sum_{i=1}^n \xi_{i,s,1}^*(y,w,\gamma) + r_{s,1}^*(y,w,\gamma) \bigg) \nonumber\\
\stackrel{\triangle}{=}& G^{[1]}(y, w, \gamma^{\top}w) + \frac{1}{n}\sum_{i=1}^n \xi^{[1]}_i(y,w,\gamma) + r^{[1]}(y,w,\gamma), \label{eq:lemma1_7}\\
\partial_{\gamma_{*}}^2\widehat{G}_h(y, \gamma^{\top}w; \gamma) =& \sum_{s=0}^1 \bigg( - \bigg( G^{[0]}(y, w, \gamma^{\top}w) + \frac{1}{n}\sum_{i=1}^n \xi^{[0]}_i(y,w,\gamma) + r^{[0]}(y,w,\gamma) \bigg) \bigg)^{1-s} \nonumber\\
& \bigg( \frac{\partial_v \big( M_{s,2}(y,w,\gamma^{\top}w) f_{\gamma_{*}}(\gamma^{\top}w)\big)}{f_{\gamma_{*}}(\gamma^{\top}w)} + \frac{1}{n} \sum_{i=1}^n  \xi_{i,s,2}^*(y,w,\gamma) + r_{s,2}^*(y,w,\gamma) \bigg) \nonumber\\
&-\bigg( \frac{\partial_v \big( M_{0,1}(y,w,\gamma^{\top}w) f_{\gamma_{*}}(\gamma^{\top}w)\big)}{f_{\gamma_{*}}(\gamma^{\top}w)} + \frac{1}{n}\sum_{i=1}^n \xi_{i,0,1}^*(y,w,\gamma) + r_{0,1}^*(y,w,\gamma)  \bigg) \nonumber\\
& \bigg(G^{[1]}(y, w, \gamma^{\top}w) + \frac{1}{n}\sum_{i=1}^n \xi^{[1]}_i(y,w,\gamma) + r^{[1]}(y,w,\gamma)\bigg)^{\top} \nonumber\\
&- \bigg(G^{[1]}(y, w, \gamma^{\top}w) + \frac{1}{n}\sum_{i=1}^n \xi^{[1]}_i(y,w,\gamma) + r^{[1]}(y,w,\gamma)\bigg) \nonumber
\end{align}
\begin{align}
&\bigg( \frac{\partial_v \big( M_{0,1}(y,w,\gamma^{\top}w) f_{\gamma_{*}}(\gamma^{\top}w)\big)}{f_{\gamma_{*}}(\gamma^{\top}w)} + \frac{1}{n}\sum_{i=1}^n \xi_{i,0,1}^*(y,w,\gamma) + r_{0,1}^*(y,w,\gamma)  \bigg)^{\top} \nonumber \\
\stackrel{\triangle}{=}& G^{[2]}(y, w, \gamma^{\top}w) + \frac{1}{n}\sum_{i=1}^n \xi^{[2]}_i(y,w,\gamma) + r^{[2]}(y,w,\gamma), \label{eq:lemma1_8}\\
\widehat{\pi}_{\ell,h_{d}}( A_{d}^{\top}x; A_{d})  =& \frac{M_{1,0,\ell}\big(x, A_{d}^{\top}x \big)f_{A_{d*}}\big(A_{d}^{\top}x\big)}{f_{A_{d*}}(A_{d}^{\top}x)} + \frac{1}{n} \sum_{i=1}^n \xi_{i,1,0,\ell}^{*}(x,A_{d}) + r^{*}_{1,0,\ell}(x,A_{d}) \nonumber \\
\stackrel{\triangle}{=} &\pi^{[0]}_{\ell}(x, A_{d}^{\top}x) + \frac{1}{n}\sum_{i=1}^n \xi^{[0]}_{i,\ell}(x,A_{d}) + r^{[0]}_{\ell}(x,A_{d}), \label{eq:lemma2_6}\\
\partial_{\vvec{(A_{d*})}}\widehat{\pi}_{\ell,h_{d}}( A_{d}^{\top}x; A_{d}) =& \sum_{s=0}^1 \bigg( - \bigg( \pi^{[0]}_{\ell}(x, A_{d}^{\top}x) + \frac{1}{n}\sum_{i=1}^n \xi^{[0]}_{i,\ell}(x,A_{d}) + r^{[0]}_{\ell}(x,A_{d}) \bigg) \bigg)^{1-s} \nonumber\\
& \bigg( \frac{\partial_u \big( M_{s,1,\ell}(x, A_{d}^{\top}x) f_{A_{d*}}(A_{d}^{\top}x)\big)}{f_{A_{d*}}(A_{d}^{\top}x)} + \frac{1}{n} \sum_{i=1}^n \xi_{i,s,1,\ell}^*(x,A_{d}) + r_{s,1,\ell}^*(x,A_{d}) \bigg) \nonumber\\
\stackrel{\triangle}{=} &\pi_{\ell}^{[1]}(x, A_{d}^{\top}x) + \frac{1}{n}\sum_{i=1}^n \xi^{[1]}_{i,\ell}(x,A_{d}) + r^{[1]}_{\ell}(x,A_{d}), \label{eq:lemma2_7}
\end{align}
and
\begin{align}
\partial_{\vvec{(A_{d*})}}^2\widehat{\pi}_{\ell,h_{d}}( A_{d}^{\top}x; A_{d}) = &\sum_{s=0}^1 \bigg( - \bigg( \pi^{[0]}_{\ell}(x, A_{d}^{\top}x) + \frac{1}{n}\sum_{i=1}^n \xi^{[0]}_{i,\ell}(x,A_{d}) + r^{[0]}_{\ell}(x,A_{d}) \bigg) \bigg)^{1-s} \nonumber\\
& \bigg( \frac{\partial_u \big( M_{s,2,\ell}(x, A_{d}^{\top}x) f_{A_{d*}}(A_{d}^{\top}x)\big)}{f_{A_{d*}}(A_{d}^{\top}x)} + \frac{1}{n} \sum_{i=1}^n  \xi_{i,s,2,\ell}^*(x,A_{d}) + r_{s,2,\ell}^*(x,A_{d}) \bigg) \nonumber\\
\hspace{1.2in}&-\bigg( \frac{\partial_u \big( M_{0,1,\ell}(x, A_{d}^{\top}x) f_{A_{d*}}(A_{d}^{\top}x)\big)}{f_{A_{d*}}(A_{d}^{\top}x)} + \frac{1}{n}\sum_{i=1}^n \xi_{i,0,1,\ell}^*(x,A_{d}) + r_{0,1,\ell}^*(x,A_{d})  \bigg) \nonumber\\
&  \bigg(\pi_{\ell}^{[1]}(x, A_{d}^{\top}x) + \frac{1}{n}\sum_{i=1}^n \xi^{[1]}_{i,\ell}(x,A_{d}) + r^{[1]}_{\ell}(x,A_{d})\bigg)^{\top} \nonumber\\
&- \bigg(\pi_{\ell}^{[1]}(x, A_{d}^{\top}x) + \frac{1}{n}\sum_{i=1}^n \xi^{[1]}_{i,\ell}(x,A_{d}) + r^{[1]}_{\ell}(x,A_{d})\bigg) \nonumber\\
&\bigg( \frac{\partial_v \big( M_{0,1,\ell}(x, A_{d}^{\top}x) f_{A_{d*}}(A_{d}^{\top}x)\big)}{f_{A_{d*}}(A_{d}^{\top}x)} + \frac{1}{n}\sum_{i=1}^n \xi_{i,0,1,\ell}^*(x,A_{d}) + r_{0,1,\ell}^*(x,A_{d})  \bigg)^{\top} \nonumber \\
\stackrel{\triangle}{=} &\pi^{[2]}_{\ell}(x, A_{d}^{\top}x) + \frac{1}{n}\sum_{i=1}^n \xi^{[2]}_{i,\ell}(x,A_{d}) + r^{[2]}_{\ell}(x,A_{d}), \label{eq:lemma2_8}
\end{align}
where
\begin{align*}
G^{[0]}(y,w,v) =& M_{1,0}(y,w,v), ~G^{[1]}(y,w,v) = \sum_{s = 0}^1 \big( - G^{[0]}(y,w,v)\big)^{1-s} \frac{\partial_v(M_{s,1}(y,w,v)f_{\gamma_{*}}(v))}{f_{\gamma_{*}}(v)}, \\
G^{[2]}(y,w,v) =&\sum_{s = 0}^1 \big( - G^{[0]}(y,w,v)\big)^{1-s} \frac{\partial_v(M_{s,2}(y,w,v)f_{\gamma_{*}}(v))}{f_{\gamma_{*}}(v)} 
- \frac{\partial_v(M_{0,1}(y,w,v)f_{\gamma_{*}}(v))}{f_{\gamma_{*}}(v)}\\
&\big(G^{[1]}(y,w,v)\big)^{\top} - { G^{[1]}(y,w,v)} \bigg( \frac{\partial_v(M_{0,1}(y,w,v)f_{\gamma_{*}}(v))}{f_{\gamma_{*}}(v)} \bigg)^{\top},
\end{align*}
\begin{align*}
\pi_{\ell}^{[0]}(x,u) = &M_{1,0,\ell}(x,u), ~\pi_{\ell}^{[1]}(x,u) = \sum_{s = 0}^1 \big( - \pi_{\ell}^{[0]}(x,u)\big)^{1-s} \frac{\partial_u(M_{s,1,\ell}(x,u)f_{A_{d*}}(u))}{f_{A_{d*}}(u)},
\end{align*}
and
\begin{align*}
\pi_{\ell}^{[2]}(x,u) =&\sum_{s = 0}^1 \big( - \pi_{\ell}^{[0]}(x,u)\big)^{1-s} \frac{\partial_u(M_{s,2,\ell}(x,u)f_{A_{d*}}(u))}{f_{A_{d*}}(u)} 
- \frac{\partial_u(M_{0,1,\ell}(x,u)f_{A_{d*}}(u))}{f_{A_{d*}}(u)}\\
&\big(\pi_{\ell}^{[1]}(x,u)\big)^{\top} - { \pi_{\ell}^{[1]}(x,u)} \bigg( \frac{\partial_u(M_{0,1,\ell}(x,u)f_{A_{d*}}(u))}{f_{A_{d*}}(u)} \bigg)^{\top},\ell=0,\dots,k-1,
\end{align*}
with
\begin{align*}
\sup_{\{y, w, \gamma_{*}\}} \Big\| r^{[m]}(y,w,\gamma) \Big\| = o_p \bigg( {\frac{\log n}{nh^{2m+1}}} \bigg) + O(h^4)
\text{ and }\sup_{\{\ell , A_{d*}\}} \Big\| r^{[m]}_{\ell}(x,A_{d}) \Big\| = o_p \bigg( {\frac{\log n}{nh_{d}^{2m+d}}} \bigg) + O(h_{d}^{2r}).
\end{align*}
The proof is completed.

\end{proof}

\begin{aplemma} 
\label{Thm2.1}
Under assumptions {A1} -- {A5}, 
\begin{align*}
\widehat{\gamma}_{*}\stackrel{p}{\longrightarrow} \gamma^{\text{o}}_{*} \text{ and }\sqrt{n}\big(\widehat{\gamma}_{*} - \gamma^{\text{o}}_{*}\big)\stackrel{d}{\longrightarrow} N_{kp-2}(0,V_{2}^{-1}V_{1}V_{2}^{-1}) \text{ as }n \longrightarrow \infty.
\end{align*}
\end{aplemma}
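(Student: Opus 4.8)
The plan is to treat $\widehat{\gamma}_*$ as a Z-estimator solving the first-order condition of $\textsc{psis}(\gamma,h;\mathcal{G}^{\text{o}})$ and to follow the classical consistency-then-linearization route, with Lemma \ref{A:Lemma_1} supplying the uniform stochastic expansions that make the kernel-based score tractable. Writing $W=(I(C=1),\dots,I(C=k))^{\top}\otimes Z$ so that $\gamma^{\top}W=\gamma_\ell^{\top}Z$ on $\{C=\ell\}$, the objective reads $\frac12\sum_i\int(I(Y_i\le y)-\widehat{G}^{-i}_h(y,\gamma^{\top}W_i;\gamma))^2\,d\widehat F(y)$. First I would establish consistency: by the $m=0$ part of Lemma \ref{A:Lemma_1}, $\widehat{G}^{-i}_h(y,\gamma^{\top}W_i;\gamma)$ converges uniformly in $(y,w,\gamma_*)$ to the index regression $G^{[0]}(y,w,\gamma^{\top}w)$, which reduces to the true conditional distribution $G(y,\gamma^{\text{o}\top}w)$ at $\gamma=\gamma^{\text{o}}$. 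Together with $\widehat F\longrightarrow F$ this gives uniform convergence of $n^{-1}\textsc{psis}(\gamma,h;\mathcal{G}^{\text{o}})$ to the population criterion $Q(\gamma_*)=\frac12\E[\int(I(Y\le y)-G^{[0]}(y,W,\gamma^{\top}W))^2\,dF(y)]$, which the CID identifiability condition on $g$ forces to have a unique, well-separated minimum at $\gamma^{\text{o}}_*$. Compactness of $\Gamma$ (A2) and the standard argmin theorem then yield $\widehat{\gamma}_*\stackrel{p}{\longrightarrow}\gamma^{\text{o}}_*$.

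Next I would set up the asymptotic normality by a one-term Taylor expansion of the score about $\gamma^{\text{o}}_*$. Define the score $S_n(\gamma_*)=\partial_{\gamma_*}(n^{-1}\textsc{psis})$ and the Hessian $H_n(\gamma_*)=\partial^2_{\gamma_*}(n^{-1}\textsc{psis})$; the first-order condition $S_n(\widehat{\gamma}_*)=0$ gives $\sqrt n(\widehat{\gamma}_*-\gamma^{\text{o}}_*)=-H_n(\bar\gamma_*)^{-1}\sqrt n\,S_n(\gamma^{\text{o}}_*)$ for an intermediate $\bar\gamma_*$. For the Hessian I would substitute the $m=0,1,2$ expansions from Lemma \ref{A:Lemma_1}: the outer-product term $(\partial_{\gamma_*}\widehat G^{-i}_h)^{\otimes2}$ converges by the law of large numbers to $\E[\int(G^{[1]})^{\otimes2}\,dF]=V_2$, while the term carrying the residual $I(Y_i\le y)-\widehat G^{-i}_h$ times $\partial^2_{\gamma_*}\widehat G^{-i}_h$ vanishes in probability because $\E[I(Y\le y)-G(y,\gamma^{\text{o}\top}W)\mid W]=0$. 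Consistency of $\bar\gamma_*$ and continuity then give $H_n(\bar\gamma_*)\stackrel{p}{\longrightarrow}V_2$, which is invertible by A5.

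The crux is the central limit theorem for $\sqrt n\,S_n(\gamma^{\text{o}}_*)$, and this is where the leave-one-out kernel structure makes the object a degenerate-type $U$-statistic rather than a clean iid sum. Plugging the expansions $\widehat G^{-i}_h=G^{[0]}+n^{-1}\sum_{j\ne i}\xi^{[0]}_j+r^{[0]}$ and $\partial_{\gamma_*}\widehat G^{-i}_h=G^{[1]}+n^{-1}\sum_{j\ne i}\xi^{[1]}_j+r^{[1]}$ into $S_n$, the leading contribution is the iid average $n^{-1}\sum_i\int(I(Y_i\le y)-G(y,\gamma^{\text{o}\top}W_i))G^{[1]}(y,W_i,\gamma^{\text{o}\top}W_i)\,dF(y)$, whose summands are mean-zero (again by conditional unbiasedness of the residual) with covariance $V_1$, so the ordinary CLT delivers the $N(0,V_1)$ limit. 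The hard part of the plan is to show the remaining pieces are $o_p(n^{-1/2})$ after scaling: the cross terms involving $n^{-1}\sum_j\xi^{[m]}_j$ form second-order $U$-statistics whose Hoeffding projections again vanish through the residual's conditional mean-zero property, leaving degenerate remainders of smaller order; the substitution error from $\widehat F-F$ is handled by its $\sqrt n$-rate and the mean-zero structure; and the smoothing remainders $r^{[m]}$, together with the $O(h^4)$ and $o_p(\sqrt{\log n/(nh^{3})})$ terms of Lemma \ref{A:Lemma_1}, are forced to be negligible precisely by the bandwidth window A1. Combining the Hessian limit with the score CLT via Slutsky yields $\sqrt n(\widehat{\gamma}_*-\gamma^{\text{o}}_*)\stackrel{d}{\longrightarrow}N(0,V_2^{-1}V_1V_2^{-1})$.
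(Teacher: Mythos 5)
Your proposal is correct and follows essentially the same route as the paper: uniform convergence of the criterion (via the empirical-process/Euclidean-class machinery behind Lemma \ref{A:Lemma_1}) plus an identification condition for consistency, then a Taylor expansion of the score in which the Hessian converges to $V_2$, the cross terms are degenerate second-order $U$-statistics killed by the conditional mean-zero structure and the bandwidth window A1, and the leading term yields the $N(0,V_1)$ limit (the paper treats it as a degenerate $U$-statistic over $d\widehat F$ via Hoeffding's theorem, which is the same object you describe after accounting for the $\widehat F-F$ substitution).
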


\begin{proof}

A direct decomposition of $\textsc{psis}(\gamma,h)$ yields
\begin{align}
\frac{2}{n} \textsc{psis}(\gamma,h) = & \frac{1}{n} \sum_{i=1}^n \int \varepsilon^2_y\big(Y_i, \gamma^{\top}W_i\big) d \widehat{F}(y) 
 + \frac{1}{n} \sum_{i=1}^n \int \Big(G^{[0]}\big(y, W_i,\gamma^{\top}W_i\big) - \widehat{G}_h^{-i}\big(y, \gamma^{\top}W_i; \gamma\big) \Big)^2 d\widehat{F}(y)\nonumber \\
& + \frac{2}{n} \sum_{i=1}^n \int \varepsilon_y\big(Y_i, \gamma^{\top}W_i\big) \Big(G^{[0]}\big(y, W_i, \gamma^{\top}W_i\big) - \widehat{G}_h^{-i}\big(y, \gamma^{\top}W_i; \gamma\big) \Big) d \widehat{F}(y) \nonumber \\
\stackrel{\triangle}{=} & I_{1n}(\gamma) + I_{2n}(\gamma) + I_{3n}(\gamma),  \label{eq:thm1_1}
\end{align}
where $\varepsilon_y\big(Y, \gamma^{\top}W\big) = I(Y \leq y) - G^{[0]}\big(y, W, \gamma^{\top}W\big)$. Since $\varepsilon_y\big(y, \gamma^{\top}w\big)$ is of bounded variation in $(y,w)$ uniformly over $\gamma_{*} \in \Gamma_n$,
$\{ \varepsilon^2_{Y_j}\big(Y_i, \gamma^{\top}W_i\big): \gamma_{*} \in \Gamma_n \}$ is a Euclidean class by Lemma 22 in \cite{nolan1987u}. Moreover, noting that $I_{1n}(\gamma) - \int \E\big[\varepsilon^2_y\big(Y,$ $\gamma^{\top}W\big)\big] dF(y)$ coincides with the zero-mean $U$-process $\sum_{i = 1}^n \sum_{j = 1}^n \varepsilon^2_{Y_j}\big(Y_i, \gamma^{\top}W_i\big) / n^2- \int \E\big[\varepsilon^2_y\big(Y,\gamma^{\top}W\big)\big] dF(y)$, Corollary 7 in \cite{sherman1994maximal} implies that 
\begin{align}
\sup_{\gamma_{*}} \Big| I_{1n}(\gamma) - \int \E\big[\varepsilon^2_y\big(Y, \gamma^{\top}W\big)\big] dF(y) \Big| =  O_p\Big(\frac{1}{\sqrt{n}} \Big). \label{eq:thm1_2}
\end{align}
By assumption {A1}, the uniform boundedness of $\varepsilon_y\big(Y, \gamma^{\top}W\big)$, and the first assertion in Lemma \ref{A:Lemma_1}, we obtain 
\begin{align}
\sup_{\gamma_{*}} |I_{2n}(\gamma)| = o_p(1) \text{ and } \sup_{\gamma_{*}} |I_{3n}(\gamma)| = o_p(1). \label{eq:thm1_3}
\end{align}
Combining (\ref{eq:thm1_1})--(\ref{eq:thm1_3}) and invoking the triangle inequality yield
\begin{align}
\sup_{\gamma_{*}} \bigg| \frac{2}{n} \textsc{psis}(\gamma,h)  - \int \E\Big[\varepsilon^2_y\big(Y, \gamma^{\top}W\big)\Big]dF(y) \bigg| = o_p (1). \label{eq:thm1_4}
\end{align}
Furthermore, under the condition that
\begin{align*}
\int\E\big[\varepsilon^2_y\big(Y, \gamma^{\top}W\big)\big]dF(y) > \int\E\big[\varepsilon^2_y\big(Y, \gamma^{\text{o} \top}W\big)\big]dF(y) \text{ for } \gamma \neq \gamma^{\text{o}},
\end{align*}
it follows that $\gamma^{\text{o}}$ uniquely minimizes $\int\E[\varepsilon^2_y(Y, \gamma^{\top}W)]dF(y)$.
Thus, (\ref{eq:thm1_4}) implies that $2 \textsc{psis}(\widehat{\gamma},h) /n$ $>2 \textsc{psis}({\gamma^{\text{o}}},h) / n$ with probability converging to one.
Combined with $\textsc{psis}(\gamma^{\text{o}},h) \geq \textsc{psis}(\widehat{\gamma},h)$, the continuity of $\textsc{psis}({\gamma},h)$ in $\gamma$ ensures the consistency of $\widehat{\gamma}_{*}$ to $\gamma^{\text{o}}_{*}$.

Define $\textsc{ps}(\gamma)=\partial_{\gamma_{*}} 2 \textsc{psis}({\gamma}, h)/n$ and $\textsc{pi}(\gamma)=\partial_{\gamma_{*}}^2 2 \textsc{psis}({\gamma}, h)/n$.
For an arbitrary constant vector $\nu\in \mathbb{R}^{kp - 2}$, a first-order Taylor expansion of $\nu^{\top}\textsc{ps}(\widehat{\gamma})$ around $\gamma^{\text{o}}_{*}$ yields
\begin{align}
\nu^{\top} (\sqrt{n} \textsc{ps}(\gamma^{\text{o}})) + \nu^{\top} (\textsc{pi}(\bar{\gamma})) \sqrt{n}(\widehat{\gamma}_{*} - \gamma^{\text{o}}_{*}) = 0, \label{eq:thm1_5}
\end{align}
where $\bar{\gamma}$ lies on the line segment between $\widehat{\gamma}$ and $\gamma^{\text{o}}$.
The pseudo score vector $\textsc{ps} (\gamma^\text{o})$ in (\ref{eq:thm1_5}) admits the decomposition 
\begin{align}
\textsc{ps} (\gamma^\text{o}) =& \sum_{\ell_1=0}^1 \sum_{\ell_2=0}^1 \frac{2}{n} \sum_{i=1}^n \int (-1)^{\ell_1}\big( \varepsilon_y\big(Y_i, \gamma^{\text{o} \top}W_i\big) \big)^{1-\ell_1} \Big(\widehat{G}^{-i}_h \big(y, \gamma^{\text{o} \top} W_i; \gamma^{\text{o}}\big)  -  G\big(y, \gamma^{\text{o} \top} W_i \big)\Big)^{\ell_1} \hspace{0.3in} \nonumber \\
& \Big( G^{[1]} \big(y, W_i, \gamma^{\text{o} \top} W_i \big) \Big)^{1-\ell_2} \Big( \partial_{\gamma^{\text{o}}_{*}} \widehat{G}^{-i}_h \big(y, \gamma^{\text{o} \top} W_i; \gamma^{\text{o}}\big) - G^{[1]}\big(y, W_i, \gamma^{\text{o} \top} W_i \big) \Big)^{\ell_2} d \widehat{F}(y) \nonumber \\
\stackrel{\triangle}{=}& \sum_{\ell_1=0}^1 \sum_{\ell_2=0}^1 \textsc{ps}_{\ell_1 \ell_2}(\gamma^{\text{o}}). \label{eq:thm1_6}
\end{align}
Under assumption {A1}, together with the uniform boundedness of $\varepsilon_y\big(Y, \gamma^{\text{o} \top}W\big)$ and $G^{[1]}\big(y, \gamma^{\text{o} \top} w\big)$, and by invoking the first assertion in Lemma \ref{A:Lemma_1}, it follows that
\begin{align}
\sqrt{n} \|\textsc{ps}_{11}(\gamma^{\text{o}})\| & = \sqrt{n}\bigg\| \frac{2}{n} \sum_{i=1}^n \int \prod_{m=0}^1\Big( \partial_{\gamma^{\text{o}}_{*}}^m \widehat{G}^{-i}_h \big(y, \gamma^{\text{o} \top} W_i; \gamma^{\text{o}}\big) - G^{[m]}\big(y, W_i, \gamma^{\text{o} \top} W_i \big) \Big) d \widehat{F}(y) \bigg\|\nonumber \\
& \leq 2\sqrt{n} \prod_{m=0}^1 \sup_{\{y,w\}} \Big\| \partial_{\gamma^{\text{o}}_{*}}^m \widehat{G}^{-i}_h \big(y, \gamma^{\text{o} \top} w; \gamma^{\text{o}}\big) - G^{[m]}\big(y, w, \gamma^{\text{o} \top} w \big)  \Big\|  \nonumber \\
&= 2\sqrt{n} \Bigg(o_p\Bigg(\sqrt{\frac{\log n}{nh}}\Bigg) + O\big(h^2\big)  \Bigg) \Bigg( o_p\Bigg(\sqrt{\frac{\log n}{nh^3}}\Bigg) + O\big(h^2\big) \Bigg)= o_p(1), \label{eq:thm1_7} \\
\sqrt{n} \textsc{ps}_{01}(\gamma^{\text{o}}) = & \sqrt{n} \int \frac{2}{n} \sum_{i = 1}^n \varepsilon_y\big(Y_i, \gamma^{\text{o} \top}W_i\big) \Bigg( \frac{1}{n-1} \sum_{\{j: j \neq i\}} \xi^{[1]}_j(y, W_i, \gamma^{\text{o}}) + r^{[1]}(y, W_i, \gamma^{\text{o}}) \Bigg) d\widehat{F}(y) \nonumber \\
= & \sqrt{n} \int \frac{2}{n(n-1)} \sum_{i \neq j} \varepsilon_y\big(Y_i, \gamma^{\text{o} \top}W_i\big) \bar{\xi}^{[1]}_j(y, W_i, \gamma^{\text{o}}) d\widehat{F}(y) \nonumber\\
&+ \sqrt{n} \int \frac{2}{n(n-1)} \sum_{i \neq j} \varepsilon_y\big(Y_i, \gamma^{\text{o} \top}W_i\big) \E\Big[{\xi}^{[1]}_j(y, W_i, \gamma^{\text{o}})|W_i\Big] d\widehat{F}(y) + o_p(1),\label{eq:thm1_7.5} 
\end{align}
and
\begin{align}
\sqrt{n} \textsc{ps}_{10}(\gamma^{\text{o}}) = & \sqrt{n} \int -\frac{2}{n} \sum_{i = 1}^n G^{[1]}\big(y, W_i, \gamma^{\text{o} \top} W_i\big) \Bigg( \frac{1}{n-1} \sum_{\{j: j \neq i\}} \xi^{[0]}_j(y, W_i, \gamma^{\text{o}}) + r^{[0]}(y, W_i, \gamma^{\text{o}}) \Bigg) d\widehat{F}(y) \nonumber \\
= & \sqrt{n} \int -\frac{2}{n(n-1)} \sum_{i \neq j} G^{[1]}\big(y, W_i, \gamma^{\text{o} \top} W_i\big) \bar{\xi}^{[0]}_j(y, W_i, \gamma^{\text{o}}) d\widehat{F}(y) \nonumber 
\end{align}
\begin{align}
&- \sqrt{n} \int \frac{2}{n(n-1)} \sum_{i \neq j} G^{[1]}\big(y, W_i, \gamma^{\text{o} \top} W_i\big) \E\Big[{\xi}^{[0]}_j(y, W_i, \gamma^{\text{o}})|W_i\Big] d\widehat{F}(y)  + o_p(1), \label{eq:thm1_8}
\end{align}
where $\bar{\xi}^{[m]}_j(y, W_i, \gamma^{\text{o}}) = {\xi}^{[m]}_j(y, W_i, \gamma^{\text{o}}) - \E[{\xi}^{[m]}_j(y, W_i, \gamma^{\text{o}}) | W_i]$, $i, j = 1, \dots, n$, $m = 0, 1$.

A direct calculation yields the following expressions:
\begin{align*}
&\xi^{[0]}_{j}(y,w,\gamma^{\text{o}}) = \frac{\big( I(Y_j \leq y) - G\big( y, \gamma^{\text{o}\top} w\big) \big) K_{2,h}(\gamma^{\text{o}\top} (W_{j} - w)) }{f_{\gamma^{\text{o}}_{*}(\gamma^{\text{o}\top}W)}}, j = 1, \dots, n,\\
&G^{[1]}\big(y, w, \gamma^{\text{o} \top} w \big) = \big( \partial_v G \big(y, \gamma^{\text{o} \top} w \big)\big) \big( H(\gamma^{\text{o} \top} w) - w_{*} \big) \text{ with } H(v) = \E[W_{*}|\gamma^{\text{o} \top}W = v], \\
&\E\Big[ \varepsilon_y\big(Y_i, \gamma^{\text{o} \top}W_i\big) \bar{\xi}^{[1]}_j(y, W_i, \gamma^{\text{o}}) | (W_i, Y_i) \Big] = 0, \E\Big[ G^{[1]}\big(y, W_i, \gamma^{\text{o} \top} W_i \big) \bar{\xi}^{[0]}_j(y, W_i, \gamma^{\text{o}}) | (W_i, Y_i) \Big] = 0,\\
&\E\Big[ \varepsilon_y\big(Y_i, \gamma^{\text{o} \top}W_i\big) \bar{\xi}^{[1]}_j(y, W_i, \gamma^{\text{o}}) | (W_j, Y_j) \Big] = \E\Big[ \E\big[ \varepsilon_y\big(Y_i, \gamma^{\text{o} \top}W_i\big) | W_i\big] \bar{\xi}^{[1]}_j(y, W_i, \gamma^{\text{o}}) | (W_j, Y_j) \Big] = 0,
\end{align*}
\begin{align*}
\text{and}&\\
&\E \Big[ G^{[1]}\big(y, W_i, \gamma^{\text{o} \top} W_i \big) \bar{\xi}^{[0]}_j(y, W_i, \gamma^{\text{o}}) | (W_j, Y_j) \Big] \hspace{5in}\\
&=\E \Big[ \big( \partial_v G \big(y, \gamma^{\text{o} \top} W_i \big)\big) \big( H(\gamma^{\text{o} \top} W_i) - \E[W_{*i} | \gamma^{\text{o} \top} W_i] \big) \bar{\xi}^{[0]}_j(y, W_i, \gamma^{\text{o}}) | (W_j, Y_j) \Big] = 0.
\end{align*}
These identities imply that the second-order $U$-processes in  (\ref{eq:thm1_7.5}) and (\ref{eq:thm1_8}) are degenerate.
Together with the Euclidean classes
\begin{align*}
\big\{ \varepsilon_y(Y_i, \gamma^{\text{o} \top}W_i) \bar{\xi}^{[1]}_j(y, W_i, \gamma^{\text{o}}): y \in \mathcal{Y} \big\} 
\text{ and } \big\{ G^{[1]}(y, W_i, \gamma^{\text{o} \top} W_i)\bar{\xi}^{[0]}_j(y, W_i, \gamma^{\text{o}}): y \in \mathcal{Y} \big\},
\end{align*}
which are ensured by Lemma 22 in \cite{nolan1987u}, $\sup_{y}\E[(\varepsilon_y(Y_i, \gamma^{\text{o} \top}W_i)\bar{\xi}^{[1]}_j(y, W_i, \gamma^{\text{o}}) )^2]$ $= O(h^{-3})$, and $\sup_{y}\E[(G^{[1]}(y, W_i, \gamma^{\text{o} \top} W_i) \bar{\xi}^{[0]}_j(y, W_i, \gamma^{\text{o}}) )^2] = O(h^{-1})$, it follows from Corollary 4 in \cite{sherman1994maximal} that
\begin{align}
&\sup_y \Bigg\| \frac{2}{n(n-1)}\sum_{i \neq j} \varepsilon_y\big(Y_i, \gamma^{\text{o} \top}W_i\big) \bar{\xi}^{[1]}_j(y, W_i, \gamma^{\text{o}}) \Bigg\| = O_p\bigg(\frac{1}{n\sqrt{h^{3}}}\bigg) \label{eq:thm1_9.0}
\end{align}
and
\begin{align}
&\sup_y \Bigg\| \frac{2}{n(n-1)}\sum_{i \neq j} G^{[1]}\big(y, W_i, \gamma^{\text{o} \top} W_i \big) \bar{\xi}^{[0]}_j(y, W_i, \gamma^{\text{o}}) \Bigg\| = O_p\bigg(\frac{1}{n\sqrt{h}}\bigg).  \label{eq:thm1_9}
\end{align}
Moreover, Theorem 2.5 in \cite{Kosorok2008} implies that
\begin{align}
 \sup_{y} \bigg\| \frac{1}{n} \sum_{i=1}^n  \varepsilon_y\big(Y_i, \gamma^{\text{o} \top}W_i\big) \E\Big[{\xi}^{[1]}_j(y, W_i, \gamma^{\text{o}}) | W_i\Big] \bigg\| = O\bigg( \frac{h^2}{\sqrt{n}} \bigg) \label{eq:thm1_9.5.0}
\end{align}
and
\begin{align}
 \sup_{y} \bigg\| \frac{1}{n} \sum_{i=1}^n  G^{[1]}\big(y, W_i, \gamma^{\text{o} \top} W_i \big) \E\Big[{\xi}^{[0]}_j(y, W_i, \gamma^{\text{o}}) | W_i\Big] \bigg\| = O\bigg( \frac{h^2}{\sqrt{n}} \bigg). \label{eq:thm1_9.5}
\end{align}
Under assumption {A1}, substituting  (\ref{eq:thm1_9.0}) and (\ref{eq:thm1_9.5.0}) into (\ref{eq:thm1_7.5}), and substituting (\ref{eq:thm1_9}) and (\ref{eq:thm1_9.5}) into (\ref{eq:thm1_8}), we conclude that
\begin{align}
\sqrt{n} \textsc{ps}_{01}(\gamma^{\text{o}}) = o_p(1) \text{ and } \sqrt{n} \textsc{ps}_{10}(\gamma^{\text{o}}) = o_p(1). \label{eq:thm1_10}
\end{align}
Since $\textsc{ps}_{00}(\gamma^{\text{o}})$ is a degenerate $U$-statistic and $\E \big[ \big( \varepsilon_{Y_2}\big(Y_1, \gamma^{\text{o} \top}W_1\big) G^{[1]}\big(Y_2, W_1, \gamma^{\text{o} \top}W_1 \big) \big)^{\otimes 2}\big]$ $= V_1$, the limiting distribution follows from Theorem 8.1 in \cite{hoeffding1948probability}:
\begin{align}
\sqrt{n} \nu^{\top} \textsc{ps}_{00}(\gamma^{\text{o}}) \stackrel{d}{\longrightarrow} N_{kp-2}(0, \nu^{\top}V_1 \nu). \label{eq:thm1_11}
\end{align}
Substituting (\ref{eq:thm1_7}), (\ref{eq:thm1_10}), and (\ref{eq:thm1_11}) into (\ref{eq:thm1_6}) yields
\begin{align}
\sqrt{n} \textsc{ps}(\gamma^{\text{o}})\stackrel{d}{\longrightarrow} N_{kp-2}(0, V_1). \label{eq:thm1_12}
\end{align}

The first assertion in Lemma \ref{A:Lemma_1}, together with the uniform boundedness of $G^{[m]}(y, \gamma^{\top}w)$ for $m = 0, 1, 2$, 
enables the derivation of the pseudo information matrix in (\ref{eq:thm1_5}) as
\begin{align}
\textsc{pi}(\gamma) = & \frac{2}{n} \int  \sum_{i=1}^n \Big( G^{[1]}\big(y, W_i, \gamma^{\top}W_i\big) \Big)^{\otimes 2} d\widehat{F}(y) + o_p\Bigg( \sqrt{\frac{\log n}{nh^3}} \Bigg) + O(h^2)  \nonumber\\
& - \frac{2}{n} \int \sum_{i=1}^n \varepsilon_y\big(Y_i, \gamma^{\top}W_i\big) G^{[2]} \big(y, W_i, \gamma^{\top}W_i\big)d\widehat{F}(y) + o_p\Bigg( \sqrt{\frac{\log n}{nh^5}} \Bigg) + O(h^2). \label{eq:thm1_13}
\end{align}
Under assumption {A1}, (\ref{eq:thm1_13}) and the consistency of $U$-statistics in \cite{hoeffding1961strong} imply
\begin{align}
\textsc{pi}(\gamma)  \stackrel{p}{\longrightarrow} E\bigg[\int \bigg(\big(G^{[1]}\big(y, W, \gamma^{\top}W\big)\big)^{\otimes 2} - \varepsilon_y\big(Y, \gamma^{\top}W\big) G^{[2]} \big(y, W, \gamma^{\top}W\big) \bigg) d F(y) \bigg].  \label{eq:thm1_14}
\end{align}
The continuity of $\textsc{pi}(\gamma)$ in $\gamma$, the consistency of $\widehat{\gamma}_{*}$ to $\gamma^{\text{o}}_*$, and the fact that the second term in (\ref{eq:thm1_14}) has zero expectation when $\gamma = \gamma^{\text{o}}$, together yield
\begin{align}
\textsc{pi}(\bar{\gamma}) \stackrel{p}{\longrightarrow} V_2. \label{eq:thm1_15}
\end{align}
Substituting (\ref{eq:thm1_12}) and (\ref{eq:thm1_15}) into (\ref{eq:thm1_5}), applying Slutsky's theorem, and invoking assumption {A5}, we 
establish the asymptotic normality of $\widehat{\gamma}_{*}$.
\end{proof}

\subsection{Proof of Theorem \ref{Thm3.1}}
\label{sec:A.1}
 
Define 
$\beta^{*}=(\beta^{*\top}_{1},\dots,\beta^{*\top}_{n})^{\top}$ and $\gamma^{*}=(\gamma^{*\top}_{1},\dots,\gamma^{*\top}_{k})^{\top}$,
where $\beta^{*}_{i}=\sum^{k}_{\ell=1}\gamma^{*}_{\ell}I(i\in\mathcal{G}_{\ell}^{\text{o}})$ and $\gamma^{*}_{\ell}=\sum_{\{i\in\mathcal{G}_{\ell}^{\text{o}}\}}\beta_{i}/|\mathcal{G}_{\ell}^{\text{o}}|$ for $i=1,\dots, n$ and $\ell\in\mathcal{C}$.
The oracle property of the SP estimator $\widehat{\beta}^{\lambda}$ holds if 
\begin{align*}
\textsc{psis}_{\textsc{sp}}\big(\beta,\gamma; \lambda\big) > \textsc{psis}_{\textsc{sp}}\big(\widehat{\beta}^{or},\widehat{\gamma}; \lambda\big)
\end{align*}
for all $\beta \in B_{n}$ and $\gamma\in\Gamma_{n}$ such that $\beta \neq \widehat{\beta}^{or}$ or $\gamma\neq \widehat{\gamma}$, with probability converging to one. To establish this, it suffices to show that with probability converging to one, 
\begin{align*}
\textsc{psis}_{\textsc{sp}}\big(\beta^{*},\gamma^{*}; \lambda\big)  >\textsc{psis}_{\textsc{sp}}\big(\widehat{\beta}^{or},\widehat{\gamma}; \lambda\big)\text{ and }\textsc{psis}_{\textsc{sp}}\big(\beta,\gamma; \lambda\big) \geq \textsc{psis}_{\textsc{sp}}\big(\beta^{*}
,\gamma^{*}; \lambda\big).
\end{align*} 
for all $\beta \in B_{n}$ and $\gamma\in\Gamma_{n}$ with $\beta \neq \widehat{\beta}^{or}$ or $\gamma\neq \widehat{\gamma}$.

By definition of the pair $(\beta^{*},\gamma^{*})$, the penalty term $\sum^{n}_{i=1}\min_{\ell}\big\|\beta^{*}_{(1)i}-\gamma^{*}_{(1)\ell}\big\|_{1}=0$. Thus, we have the identity
\begin{align}
\textsc{psis}_{\textsc{sp}}\big(\beta^{*},\gamma^{*}; \lambda\big)=
\textsc{psis}\big(\gamma^{*},h;\mathcal{G}^{\text{o}} \big). 
\label{eq:A.2.1}
\end{align}
Invoking the asymptotic normality of $\widehat{\gamma}_*$, $\gamma^{*} \in \Gamma_{n}$, and the convergence property in (\ref{eq:thm1_14}), a second-order Taylor expansion of $\textsc{psis} \big(\gamma^{*},h; \mathcal{G}^{\text{o}}\big)$ around $\widehat{\gamma}_*$ yields
\begin{align}
\frac{1}{n} \textsc{psis} \big(\gamma^{*},h; \mathcal{G}^{\text{o}}\big) - \frac{1}{n} \textsc{psis} \big(\widehat{\gamma}, h; \mathcal{G}^{\text{o}}\big) = \frac{1}{2} \big(\gamma^{*}_{*} - \widehat{\gamma}_*\big)^{\top} V_{2} \big(\gamma^{*}_{*} - \widehat{\gamma}_*\big) (1 + o_p(1)). \label{eq:A.2.2}
\end{align}
Combining (\ref{eq:A.2.1}) and (\ref{eq:A.2.2}) with assumption {A5}, we obtain
\begin{align}
\frac{1}{n} \textsc{psis}_{\textsc{sp}}\big (\beta^{*},\gamma^{*}; \lambda\big) > \frac{1}{n}\textsc{psis}_{\textsc{sp}} \big(\widehat{\beta}^{or},\widehat{\gamma}; \lambda\big) (1 + o_p(1)) \text{ for } \beta^{*} \in B_{n} \text{ with } \beta^{*} \neq \widehat{\beta}^{or}.
\label{eq:A.2.3}
\end{align}
A first-order Taylor expansion of the first term in $\textsc{psis}_{\textsc{sp}} \big(\beta,\gamma; \lambda\big)$ around $\beta = \beta^*$ gives
\begin{align}
\textsc{psis}_{\textsc{sp}}\big(\beta,\gamma;\lambda\big) - \textsc{psis}_{\textsc{sp}}\big(\beta^{*},\gamma^{*};\lambda\big)
&= \sum_{i = 1}^n \big(\textsc{ps}_i \big(\bar{\beta}\big)\big)^{\top} \big(\beta_i - \beta^*_i\big) + \lambda \sum^{n}_{i=1}\min_{\ell}\big\|\beta_{(1)i}-\gamma_{(1)\ell}\big\|_{1}\nonumber\\
&\stackrel{\triangle}{=}I_{1}+I_{2},  \label{eq:A.2.4}
\end{align}
where
\begin{align*}
\textsc{ps}_i(\beta)=- \int \big(I(Y_i \leq y) - \widehat{G}^{-i}_h \big(y, \beta^{\top}_i Z_i; \beta\big)\big) \big( \partial_{\beta_i} \widehat{G}^{-i}_h (y, \beta^{\top}_i Z_i; \beta) \big) d\widehat{F}(y)
\end{align*} 
and $\bar{\beta}_i = \alpha \beta_i + (1-\alpha) \beta^*_i$ for some $\alpha \in (0,1)$, $i = 1, \dots, n$. 

By the identity 
\begin{align*}
\sup_{i} \big\| \beta^*_{(1)i} - \beta^{\text{o}}_{(1)i} \big\|^2 = \max_{\ell} \big\| \gamma^{*}_{(1)\ell} - \gamma^{\text{o}}_{(1)\ell} \big\|^2
\end{align*}
and the triangle inequality, we have
\begin{align}
\sup_{i} \big\| \bar{\beta}_{(1)i} - \beta^{\text{o}}_{(1)i} \big\|  \leq \alpha \sup_{i}\big\| \beta_{(1)i} - \beta^{\text{o}}_{(1)i} \big\| + (1-\alpha) \sup_{i} \big\| \beta^*_{(1)i} - \beta^{\text{o}}_{(1)i} \big\| \leq \phi_n. \label{eq:A.2.5} 
\end{align} 
A direct calculation yields
\begin{align}
I_1 = \sum_{\ell = 1}^k \sum_{\{i \in \mathcal{G}_{\ell}\}} \big( \textsc{ps}_i \big(\bar{\beta}\big) \big)^{\top} \sum_{\{j \in \mathcal{G}_{\ell}\}} \frac{\big(\beta_i - \beta_j\big)}{|\mathcal{G}_{\ell}|}  = - \sum_{\ell = 1}^k \sum_{\{i,j \in \mathcal{G}_{\ell}, i < j\}} \frac{\big(\textsc{ps}_i\big(\bar{\beta}\big) - \textsc{ps}_j\big(\bar{\beta}\big)\big)^{\top}\big(\beta_j - \beta_i\big)}{|\mathcal{G}_{\ell}|}. \label{eq:A.2.6}
\end{align}
Following analogous arguments to those in the proof of (\ref{eq:lemma1_4}), and under assumptions {A2} and {A3}, we establish the uniform convergence 
\begin{align}
&\sup_{\{y,w\}} \bigg\| \frac{1}{nh^{m+1}} \sum_{i = 1}^n \big( I(Y_i \leq y) \big)^s K^{(m)}_2 \bigg( \frac{\gamma^{\text{o} \top} W_i - \gamma^{\text{o} \top} w}{h} \bigg)z^{\otimes m} - \partial_v^m \big( M^*_{s,m}\big(y, z, \gamma^{\text{o} \top}w \big) f_{\gamma^{\text{o}}_{*}}\big( \gamma^{\text{o} \top}w \big) \big) \bigg\| \nonumber \\
& = \sup_{\{y,w\}} \| \xi_{s,m}(y,w,\gamma^{\text{o}}) \| (1+o_p(1)) =  o_p \Bigg( \sqrt{\frac{\log n}{n h^{2m+1}}} \Bigg) + O \big( h^2 \big), \label{eq:A.2.7}
\end{align}
where 
$M^*_{s,m}(y, z, v) = \E \big[ \big( G\big(y, \gamma^{\text{o} \top} W \big) \big)^s | \gamma^{\top}W = v \big] (-z)^{\otimes m}$, $s, m = 0, 1.$
Combining (\ref{eq:A.2.5}), (\ref{eq:A.2.7}), and the uniform boundedness of $I(Y \leq y)$ and $K^{(m)}_2(u)$, we obtain
\begin{align}
&\frac{1}{nh^{m+1}} \sum_{\{j:j \neq i\}} ( I(Y_j \leq y) )^s K^{(m)}\bigg(\frac{\bar{\beta}^{\top}_j Z_j - \bar{\beta}^{\top}_i Z_i}{h}\bigg) Z_i^{m}  \nonumber\\
&= \frac{1}{nh^{m+1}} \sum_{\{j:j \neq i\}} ( I(Y_j \leq y) )^s  K^{(m)}_2\bigg(\frac{\beta^{\text{o}}_j Z_j - \beta^{\text{o}}_i Z_i}{h}\bigg)Z_i^{m}  \nonumber 
\end{align}
\begin{align}
& \hspace{0.2in}+ \frac{1}{nh^{m+1}} \sum_{\{j:j \neq i\}} ( I(Y_j \leq y) )^s \frac{1}{h} K^{(m+1)}_2 \bigg(\frac{\widetilde{\beta}^{\top}_j Z_j - \widetilde{\beta}^{\top}_i Z_i}{h}\bigg) \big[ Z^{\top}_j \big(\bar{\beta}_j - \beta^{\text{o}}_j\big) + Z^{\top}_i \big(\bar{\beta}_i - \beta^{\text{o}}_i\big) \big]Z_i^{m} \nonumber\\
&=  \frac{1}{n h^{m+1}} \sum_{\{j:j \neq i\}} ( I(Y_j \leq y) )^s K^{(m)}_2\bigg(\frac{\gamma^{\text{o} \top} W_j - \gamma^{\text{o} \top} W_i}{h}\bigg)Z_i^{m} + \frac{b_{s,m}(y)\phi_n}{h^{m+1}} \nonumber \\
&=  \partial^m_v \big( M^*_{s,m} \big(y, Z_i, \gamma^{\text{o} \top} W_i \big) f_{\gamma^{\text{o}}_*}\big( \gamma^{\text{o} \top} W_i \big) \big) + \xi_{s,m}\big(y, W_i, \gamma^{\text{o}}\big) + \frac{b_{s,m}(y)\phi_n}{h^{m+1}}, \label{eq:A.2.8}
\end{align}
where $\widetilde{\beta}_i$ lies on the line segment between $\bar{\beta}_i$ and $\beta^{\text{o}}_i$, $i=1,\dots, n$, and $\sup_y |b_{s,m}(y)| = O_p(1)$, $s, m = 0, 1$. 
Using (\ref{eq:A.2.8}) and Assumption {A1}, arguments analogous to those in the proofs of \eqref{eq:lemma1_5}--\eqref{eq:lemma1_7} yield the uniform convergence
\begin{align}
& \sup_{\{i,y\}} \Big| \widehat{G}^{-i}_h \big(y, \bar{\beta}^{\top}_iZ_i; \bar{\beta} \big) - G\big(y, \gamma^{\text{o} \top} W_i\big) \Big| = o_p(1) \label{eq:A.2.9.0}
\end{align}
and
\begin{align}
&\sup_{\{i,y\}} \Big\|\partial_{\beta} \widehat{G}^{-i}_h \big(y, \bar{\beta}^{\top}_iZ_i; \bar{\beta} \big) - G^{[1]}\big(y, Z_i, \gamma^{\text{o} \top} W_i\big) \Big\| = o_p(1), \label{eq:A.2.9}
\end{align} 
where $G^{[1]}(y, z, v)$ is defined as $G^{[1]}(y, w, v)$ with $M_{s,1}(y,w,v)$ replaced by $M^*_{s,1}(y,z,v)$, $s = 0, 1$.
From (\ref{eq:A.2.9.0}) and (\ref{eq:A.2.9}), it follows that for each $i = 1 ,\dots, n$,
\begin{align*}
\big\| \textsc{ps}_i\big(\bar{\beta}\big) \big\|  \leq \bigg[\int \Big|\big(I(Y_i \leq y) - G\big(y,\beta^{\text{o}\top}_i Z_i\big)\big)\Big| \Big\|  G^{[1]}\big(y,Z_i,\beta^{\text{o}\top}_i Z_i\big)\Big\| d \widehat{F}(y) \bigg]+ o_p(1) + O_p\bigg( \frac{ \phi_n}{h^2} \bigg).
\end{align*}
Consequently,
\begin{align*}
\bigg|\sum_{\ell = 1}^k \sum_{\{ i,j \in \mathcal{G}_{\ell},i<j\}} \frac{\big(\textsc{ps}_i\big(\bar{\beta}\big) - \textsc{ps}_j\big(\bar{\beta})\big)^{\top}\big(\beta_j - \beta_i\big)}{|\mathcal{G}_{\ell}|} \bigg| \leq \frac{c_{0}}{n} \bigg( \sum_{\ell = 1}^k \sum_{\{ i,j \in \mathcal{G}_{\ell},i<j\}} \big\| \beta_{(1)i} - \beta_{(1)j} \big\|\bigg) (1 +o_p(1))
\end{align*} 
and, hence,
\begin{align}
I_1 \geq   -\frac{c_{0}}{n}\bigg(\sum_{\ell = 1}^k \sum_{\{ i,j \in \mathcal{G}_{\ell},i<j\}} \big\| \beta_{(1)i} - \beta_{(1)j} \big\|\bigg) (1+o_p(1))\label{eq:A.2.11}
\end{align}
for some constant $c_{0}>0$. 

Given $\beta\in B_{n}$, $\gamma \in \Gamma_{n}$, $i\in\mathcal{G}_{\ell_{1}}$, and $\ell_{1}\neq\ell_{2}$, the triangle inequality implies
\begin{align*}
&\|\beta_{(1)i}-\gamma_{(1)\ell_{1}}\|_{1}\leq  \|\beta_{(1)i}-\beta^{\text{o}}_{(1)i}\|_{1}+\| \gamma_{(1)\ell_{1}} - \gamma^{\text{o}}_{(1)\ell_{1}}\|_{1}\leq 2\phi_{n} \hspace{2.35in}\\
\text{and}&\\
&\|\beta_{(1)i}-\gamma_{(1)\ell_{2}}\|_{1} = \|\beta_{(1)i}-\beta^{\text{o}}_{(1)i}+ \gamma^{\text{o}}_{(1)\ell_{1}}- \gamma^{\text{o}}_{(1)\ell_{2}}+ \gamma^{\text{o}}_{(1)\ell_{2}}- \gamma_{(1)\ell_{2}}\|_{1}\nonumber\\
&\geq \| \gamma^{\text{o}}_{(1)\ell_{1}}- \gamma^{\text{o}}_{(1)\ell_{2}}\|_{1}- \|\beta_{(1)i}-\beta^{\text{o}}_{(1)i}\|_{1}-\| \gamma^{\text{o}}_{(1)\ell_{2}}- \gamma_{(1)\ell_{2}}\|_{1}\geq D_{\gamma} -2\phi_{n}>2\phi_{n}.
\end{align*}
It follows that
\begin{align*}
\min_{\{\ell\in\mathcal{C}\}}\|\beta_{(1)i}-\gamma_{(1)\ell}\|_{1} =\|\beta_{(1)i}-\gamma_{(1)\ell_{1}}\|_{1}.
\end{align*}
Using this, a lower bound for $I_2$ is derived as
\begin{align}
I_{2}&=\lambda\sum^{k}_{\ell=1}\sum_{\{i\in\mathcal{G}_{\ell}\}} \|\beta_{(1)i}-\gamma_{(1)\ell}\|_{1}
=\lambda\sum^{k}_{\ell=1}\frac{1}{2 |\mathcal{G}_{\ell}|}\sum_{\{i,j\in\mathcal{G}_{\ell}\}} \big(\|\beta_{(1)i}-\gamma_{(1)\ell}\|_{1}+\|\beta_{(1)j}-\gamma_{(1)\ell}\|_{1}\big)\nonumber \\
&\geq  \lambda\sum^{k}_{\ell=1}\frac{1}{2 |\mathcal{G}_{\ell}|}\sum_{\{i,j\in\mathcal{G}_{\ell}\}} \|\beta_{(1)i}-\beta_{(1)j}\|_{1} \geq  \frac{\lambda}{n} \sum^{k}_{\ell=1}\sum_{\{i,j\in\mathcal{G}_{\ell},i<j\}} \|\beta_{(1)i}-\beta_{(1)j}\|_{1}\nonumber\\
&\geq \frac{\lambda}{n}\sum^{k}_{\ell=1}\sum_{\{i,j\in\mathcal{G}_{\ell},i<j\}} \|\beta_{(1)i}-\beta_{(1)j}\|.  \label{eq:A.2.13}
\end{align}

Substituting (\ref{eq:A.2.11}) and (\ref{eq:A.2.13}) into (\ref{eq:A.2.4}) yields
\begin{align}
\textsc{psis}_{\textsc{sp}}\big(\beta,\gamma;\lambda\big) - \textsc{psis}_{\textsc{sp}}\big(\beta^{*},\gamma^{*};\lambda\big)
\geq  \Big(\frac{\lambda}{n} -\frac{c_{0}}{n}\Big) \bigg(\sum_{\ell = 1}^k \sum_{\{ i,j \in \mathcal{G}_{\ell},i<j\}} \| \beta_{(1)i} - \beta_{(1)j} \|\bigg) (1+o_p(1)).
 \label{eq:A.2.14}
\end{align}
Therefore, with probability converging to one, 
\begin{align*}
\textsc{psis}_{\textsc{sp}}\big(\beta,\gamma;\lambda\big) \geq\textsc{psis}_{\textsc{sp}}\big(\beta^{*},\gamma^{*};\lambda\big).
\end{align*}
The oracle property stated in Theorem \ref{Thm3.1} follows directly from (\ref{eq:A.2.3}) and (\ref{eq:A.2.14}).

\subsection{Proof of Theorem \ref{Thm3.3_1}}
\label{sec:A.3.1}
 
Given that $P(\widehat{\mathcal{G}}=\mathcal{G}^{\text{o}})\longrightarrow 1$ as $n\longrightarrow\infty$, it follows that
\begin{align}
\sup_{\{i, \ell\}} \big| I\big(i \in \widehat{\mathcal{G}}_{\ell}\big)-I\big(i \in\mathcal{G}_{\ell}^{\text{o}}\big)\big|=o_{p}\big( n^{-1} \big).\label{eq:A.3.1}
\end{align}
Define $N_{s,\ell,h_d}(u;{\mathcal{G}}_{\ell},A_{d})  = \sum_{i=1}^n \big(I\big(i \in {\mathcal{G}}_{\ell}\big)\big)^s \mathcal{K}_{r,h_d}\big(A_{d}^{\top}X_i - u\big) / n$, $s = 0, 1,$ $\ell=1,\dots,k-1.$
By (\ref{eq:A.3.1}), the identity $I(C_{i}=\ell)=I(i\in\mathcal{G}^{\text{o}}_{\ell})$ for $i=1,\dots, n$ and $\ell=1,\dots, k-1$, the uniform boundedness of $\mathcal{K}^{(m)}_{r}(u)$ for $m=0,1,2$, and assumption {B1}, we obtain
\begin{align*}
&\sup_{\{\ell, x, A_{d*}\}} \bigg\| \partial_{\vvec{(A_{d*})}}^m N_{1,\ell,h_d}(u;\widehat{\mathcal{G}}_{\ell},A_{d}) - \partial_{\vvec{(A_{d*})}}^m N_{1,\ell,h_d}(u;A_{d}) \bigg\| \\
&=\sup_{\{\ell, x, A_{d*}\}} \bigg\| \frac{1}{nh^{m+d}_{d}}\sum^{n}_{i=1} \big( I\big(i \in \widehat{\mathcal{G}}_{\ell}\big)-I\big(i \in\mathcal{G}^{\text{o}}_{\ell}\big)\big) \\
&\hspace{0.2in}\prod_{\{\sum_{j=1}^d m_j = m\}} K^{(m_j)}_r \bigg( \frac{\alpha_{j}^{\top}(X_{i}-x)}{h_d} \bigg) \big(\vvec({X}_{d*}-{x}_{d*})\big)^{\otimes m}\bigg\| \\
&\leq \frac{1}{h^{m}_{d}}\sup_{\{i,\ell\}} \big| I\big(i \in \widehat{\mathcal{G}}_{\ell}\big)-I\big(i \in\mathcal{G}^{\text{o}}_{\ell}\big)\big| \\
&\hspace{0.2in}\sup_{\{x, A_{d*}\}} \bigg\| \frac{1}{nh^{d}_{d}}\sum^{n}_{i=1} \prod_{\{\sum_{j=1}^d m_j = m\}}K^{(m_j)}_r \bigg( \frac{\alpha_{j}^{\top}(X_{i}-x)}{h_d} \bigg) \big(\vvec({X}_{d*}-{x}_{d*})\big)^{\otimes m}\bigg\| \\
&=o_{p}\bigg(\frac{1}{nh^{m}_{d}} \bigg), m = 0, 1, 2.
\end{align*}
Since $N_{0,\ell,h_d}(u;\widehat{\mathcal{G}}_{\ell},A_{d})=N_{0,\ell,h_d}(u;A_{d})$, it follows that
\begin{eqnarray}
\sup_{\{\ell,x, A_{d*}\}} \bigg\| \partial_{\vvec{(A_{d*})}}^m\widehat{\pi}_{\ell,h_d}( A_{d}^{\top}x; \widehat{\mathcal{G}},A_{d}) -  \partial_{\vvec{(A_{d*})}}^m\widehat{\pi}_{\ell,h_d}( A_{d}^{\top}x; \mathcal{G}^{\text{o}},A_{d}) \bigg\| = o_p \bigg({\frac{1}{nh_{d}^{m}}}\bigg), m=0,1,2.~ \label{eq:A.3.2}
\end{eqnarray}
Combining (\ref{eq:A.3.2}) with the second assertion in Lemma \ref{A:Lemma_1} and under assumptions {B2}--{B4}, we conclude that
\begin{align}
& \sup_{\{\ell, x, A_{d*}\}} \bigg\| \partial_{\vvec{(A_{d*})}}^m\widehat{\pi}_{\ell,h_d}( A_{d}^{\top}x; \widehat{\mathcal{G}},A_{d}) - \pi_{\ell}^{[m]}( x,A_{d}^{\top}x) - \frac{1}{n}\sum_{i=1}^n \xi_{i,\ell}^{[m]}(x, A_{d}) \bigg\|\nonumber\\
& = o_p \bigg({\frac{\log n}{nh_{d}^{2m+d}}}\bigg) + O \big(h_{d}^{2r} \big), m = 0, 1, 2.\label{eq:A.3.3}
\end{align}

Let $\partial_{\vvec{(A_{d*})}}^m\widehat{\pi}_{\ell,h_d}( A_{d}^{\top}x; A_{d})=\partial_{\vvec{(A_{d*})}}^m\widehat{\pi}_{\ell,h_d}( A_{d}^{\top}x; \mathcal{G}^{\text{o}},A_{d})$ for $m=0,1,2$.
In view of the property in (\ref{eq:A.3.3}), the proof of Theorem \ref{Thm3.3_1} proceeds identically to the argument below, 
with $\widehat{\mathcal{G}}$ replaced by $\mathcal{G}^{\text{o}}$.

\subsubsection*{(Consistency of $\vvec\big(\widehat{A}_{d*}\big)$ to $\vvec(A^{\text{o}}_{d*})$)}
A direct decomposition of $\textsc{pss}(A_d,h_d)$ yields
\begin{align}
\frac{2}{n} \textsc{pss}(A_d,h_d) = & \frac{1}{n} \sum_{i=1}^n \sum_{\ell = 1}^{k-1} \varepsilon^2_{\ell}\big(C_i, A_{d}^{\top}X_i\big) 
 + \frac{1}{n} \sum_{i=1}^n \sum_{\ell = 1}^{k-1} \Big(\pi_{\ell}^{[0]}\big(X_i,A_{d}^{\top}X_i\big) - \widehat{\pi}_{\ell,h_d}^{-i}\big(A_{d}^{\top}X_i; A_{d}\big) \Big)^2 \nonumber \\
& + \frac{2}{n} \sum_{i=1}^n \sum_{\ell = 1}^{k-1} \varepsilon_{\ell}\big(C_i, A_{d}^{\top}X_i\big) \Big(\pi_{\ell}^{[0]}\big(X_i,A_{d}^{\top}X_i\big) - \widehat{\pi}_{\ell,h_d}^{-i}\big(A_{d}^{\top}X_i; A_{d}\big) \Big) \nonumber \\
\stackrel{\triangle}{=} & I_{1 n}(A_{d}) + I_{2 n}(A_{d}) + I_{3 n}(A_{d}),  \label{eq:thm2_1}
\end{align}
where $\varepsilon_{\ell}\big(C, A_{d}^{\top}X\big)  = I(C = \ell) - \pi_{\ell}^{[0]}\big(X_, A_{d}^{\top}X\big)$.
By Corollary 7 in \cite{sherman1994maximal}, the zero-mean $U$-process $I_{1 n}(A_{d}) - \sum_{\ell = 1}^{k-1} \E\big[ \varepsilon^2_{\ell}\big(C, A_{d}^{\top}X\big) \big]$ satisfies
\begin{align}
\sup_{A_{d*}} \bigg| I_{1 n}(A_d) - \sum_{\ell = 1}^{k-1} \E\big[ \varepsilon^2_{\ell}\big(C, A_{d}^{\top}X\big) \big]\bigg| = O_p\Big(\frac{1}{\sqrt{n}} \Big). \label{eq:thm2_2}
\end{align}
Under assumption {B1}, the uniform boundedness of $\varepsilon_{\ell}\big(C, A_{d}^{\top}X\big)$, and the second assertion in Lemma \ref{A:Lemma_1}, it follows that 
\begin{align}
\sup_{A_{d*}} |I_{2 n}(A_{d})| = o_p(1) \text{ and } \sup_{A_{d*}} |I_{3 n}(A_{d})| = o_p(1). \label{eq:thm2_3}
\end{align}
Substituting (\ref{eq:thm2_2}) and (\ref{eq:thm2_3}) into (\ref{eq:thm2_1}) and applying the triangle inequality yield
\begin{align}
\sup_{A_{d*}} \bigg| \frac{2}{n} \textsc{pss}(A_{d},h_{d})  - \sum_{\ell = 1}^{k-1} \E\Big[\varepsilon^2_{\ell}\big(C, A_{d}^{\top}X\big)\Big] \bigg| = o_p (1). \label{eq:thm2_4}
\end{align}
Combining (\ref{eq:thm2_4}) with the strict inequality
\begin{align*}
\sum_{\ell=1}^{k-1}\E\big[\varepsilon^2_{\ell}\big(C, A_{d}^{\top}X\big)\big] > \sum_{\ell=1}^{k-1}\E\big[\varepsilon^2_{\ell}\big(C, A^{\text{o} \top}_{d}X\big)\big] \text{ for } A_{d} \neq A^{\text{o}}_{d},
\end{align*}
implies that
\begin{align*}
 \frac{2}{n} \textsc{pss}(\widehat{A}_{d},h_{d})  \geq \frac{2}{n} \textsc{pss}({A^{\text{o}}_{d}},h_{d})
\end{align*}
with probability converging to one. Since $\widehat{A}_d$ minimizes $\textsc{pss}(A_d, h_d)$ by definition, the reverse inequality holds deterministically. By the continuity of $\textsc{pss}(A_d, h_d)$ in $A_d$, it follows that
\begin{align}
\vvec(\widehat{A}_{d*})\stackrel{p}{\longrightarrow}\vvec(A^{\text{o}}_{d*}) \text{ as } n\longrightarrow\infty. \label{eq:thm2_4.1}
\end{align}

\subsubsection*{(Asymptotic normality of $\vvec(\widehat{A}_{*})$ and $\sqrt{n}$-consistency of $\vvec(\widehat{A}_{d*})$ for $d\geq d^{\text{o}}$)}
Let $\textsc{ps}(A_{d})=\partial_{\vvec(A_{d*})} 2 \textsc{pss}(A_d, h_d)/n$ and $\textsc{pi}(A_{d})=\partial_{\vvec(A_{d*})}^2 2 \textsc{pss}(A_d, h_d)/n$.
For an arbitrary constant vector $\nu\in\mathbb{R}^{(p-d)d}$,  a Taylor expansion of $\nu^{\top}\textsc{ps}_{d}(\widehat{A}_{d})$ around $\vvec(A^{\text{o}}_{d*})$ yields
\begin{align}
\nu^{\top} (\sqrt{n} \textsc{ps}(A^{\text{o}}_{d})) + \nu^{\top} (\textsc{pi}(\bar{A}_{d})) \sqrt{n}(\vvec(\widehat{A}_{d*}) - \vvec(A^{\text{o}}_{d*})) = 0, \label{eq:thm2_5}
\end{align}
where $\vvec(\bar{A}_{d})$ lies on the line segment between $\vvec(\widehat{A}_d)$ and $\vvec(A^{\text{o}}_{d})$.
The pseudo score vector in (\ref{eq:thm2_5}) is decomposed as 
\begin{align}
\textsc{ps} (A^{\text{o}}_{d}) =& \sum_{\ell_1=0}^1 \sum_{\ell_2=0}^1 \frac{2}{n} \sum_{i=1}^n \sum_{\ell = 1}^{k-1} (-1)^{\ell_1}\big( \varepsilon_{\ell}\big(C, A_{d}^{\text{o}\top}X\big) \big)^{1-\ell_1} \Big(\widehat{\pi}^{-i}_{\ell, h_{d}} \big(A^{\text{o} \top}_{d} X_i; A^{\text{o}}_{d}\big)  -  \pi_{\ell}\big(A^{\text{o} \top}_{d} X_i \big)\Big)^{\ell_1} \nonumber \\
& \Big( \pi_{\ell}^{[1]} \big(X_i, A^{\text{o} \top}_{d} X_i \big) \Big)^{1-\ell_2} \Big( \partial_{\vvec(A^{\text{o}}_{d*})} \widehat{\pi}^{-i}_{\ell,h_d} \big(A^{\text{o} \top}_{d} X_i; A^{\text{o}}_{d}\big) - \pi^{[1]}_{\ell}\big(X_i, A^{\text{o} \top}_{d} X_i \big) \Big)^{\ell_2}  \nonumber \\
\stackrel{\triangle}{=}& \sum_{\ell_1=0}^1 \sum_{\ell_2=0}^1 \textsc{ps}_{\ell_1 \ell_2}(A^{\text{o}}_{d}). \label{eq:thm2_6}
\end{align}
By assumption {B1}, the uniform boundedness of $\varepsilon^2_{\ell}\big(C, A_{d}^{\text{o}\top}X\big)$ and $\pi^{[1]}_{\ell}\big(x, A^{\text{o} \top}_d x \big)$, and the second assertion in Lemma \ref{A:Lemma_1}, we obtain
\begin{align}
\sqrt{n} \|\textsc{ps}_{11}(A^{\text{o}}_d)\| & = \sqrt{n}\bigg\| \frac{2}{n} \sum_{i=1}^n \sum_{\ell = 1}^{k-1} \prod_{m=0}^1\Big( \partial_{\vvec(A^{\text{o}}_{d*})}^m \widehat{\pi}^{-i}_{\ell,h_d} \big(A^{\text{o} \top}_{d} X_i; A^{\text{o}}_{d}\big) - \pi^{[m]}_{\ell}\big(X_i, A^{\text{o} \top}_{d} X_i \big) \Big) \bigg\|\nonumber \\
& \leq 2\sqrt{n} (k-1) \prod_{m=0}^1 \sup_{x} \Big\| \partial_{\vvec(A^{\text{o}}_{d*})}^m \widehat{\pi}^{-i}_{\ell,h_d} \big(A^{\text{o} \top}_{d} X_i; A^{\text{o}}_{d}\big) - \pi^{[m]}_{\ell}\big(X_i, A^{\text{o} \top}_{d} X_i \big)  \Big\|  \nonumber \\
&= 2\sqrt{n}  \Bigg(o_p\Bigg(\sqrt{\frac{\log n}{nh_d^{d}}}\Bigg) + O\big(h_d^r\big)  \Bigg) \Bigg( o_p\Bigg(\sqrt{\frac{\log n}{nh_d^{2+d}}}\Bigg) + O\big(h_d^r\big) \Bigg)= o_p(1), \label{eq:thm2_7} \\
\sqrt{n} \textsc{ps}_{01}(A^{\text{o}}_{d}) = & \sqrt{n} \sum_{\ell=1}^{k-1} \frac{2}{n} \sum_{i = 1}^n \varepsilon_{\ell}\big(C,_i A_{d}^{\text{o} \top}X_i\big) \Bigg( \frac{1}{n-1} \sum_{\{j: j \neq i\}} \xi^{[1]}_{j,\ell}(X_i, A^{\text{o}}_{d}) + r^{[1]}_{\ell}(X_i, A^{\text{o}}_{d}) \Bigg)  \nonumber \\
= & \sqrt{n} \sum_{\ell=1}^{k-1} \frac{2}{n(n-1)} \sum_{i \neq j} \varepsilon_{\ell}\big(C_i, A_{d}^{\text{o} \top}X_i\big) \bar{\xi}^{[1]}_{j,\ell}(X_i, A^{\text{o}}_{d})  \nonumber\\
&+ \sqrt{n} \sum_{\ell=1}^{k-1} \frac{2}{n(n-1)} \sum_{i \neq j} \varepsilon_{\ell}\big(C_i, A_{d}^{\text{o} \top}X_i\big) \E\Big[{\xi}^{[1]}_{j,\ell}(X_i, A^{\text{o}}_{d})|X_i\Big] + o_p(1),\label{eq:thm2_7.5} 
\end{align}
and
\begin{align}
\sqrt{n} \textsc{ps}_{10}(A^{\text{o}}_{d}) = & \sqrt{n} \sum_{\ell=1}^{k-1} -\frac{2}{n} \sum_{i = 1}^n \pi^{[1]}_{\ell}\big(X_i, A^{\text{o} \top}_{d} X_i\big) \Bigg( \frac{1}{n-1} \sum_{\{j: j \neq i\}} \xi^{[0]}_{j,\ell}(X_i, A^{\text{o}}_{d}) + r^{[0]}_{\ell}(X_i, A^{\text{o}}_{d}) \Bigg)  \nonumber\\
= & \sqrt{n} \sum_{\ell=1}^{k-1} -\frac{2}{n(n-1)} \sum_{i \neq j} \pi^{[1]}_{\ell}\big(X_i, A^{\text{o} \top}_{d} X_i\big) \bar{\xi}^{[0]}_{j,\ell}(X_i, A^{\text{o}}_{d})  \nonumber \\
&- \sqrt{n} \sum_{\ell=1}^{k-1} \frac{2}{n(n-1)} \sum_{i \neq j} \pi^{[1]}_{\ell}\big(X_i, A^{\text{o} \top}_{d} X_i\big) \E\Big[{\xi}^{[0]}_{j,\ell}(X_i, A^{\text{o}}_{d})|X_i\Big] + o_p(1), \label{eq:thm2_8}
\end{align} 
where $\bar{\xi}^{[m]}_{j,\ell}(X_i, A^{\text{o}}_{d}) = {\xi}^{[m]}_{j,\ell}(X_i, A^{\text{o}}_{d}) - \E[{\xi}^{[m]}_{j,\ell}(X_i, A^{\text{o}}_{d}) | X_i]$, $m = 0, 1$, $i, j = 1, \dots, n$, $\ell = 1 ,\dots, k-1$. 
For $d \geq d^{\text{o}}$, straightforward calculations yield
\begin{align*}
&\xi^{[0]}_{j,\ell}(x,A^{\text{o}}_{d}) = \frac{\big( I(C_j = \ell) - \pi_{\ell}\big( A^{\text{o}\top}_{d} x\big) \big) \mathcal{K}_{r,h_d}(A^{\text{o}\top}_{d} (X_{j} - x)) }{f_{A^{\text{o}}_{d*}(A^{\text{o}\top}_{d}x)}}, j = 1, \dots, n, \ell = 1, \dots, k-1,\\
&\pi^{[1]}_{\ell}\big(x, A^{\text{o} \top}_{d} x \big) = \big( \partial_u \pi_{\ell} \big(A^{\text{o} \top}_{d} x \big)\big) \big( H(A^{\text{o} \top}_{d} x) - x_{d*} \big),\text{ where } H(u) = \E[X_{d*}|A^{\text{o} \top}_{d}X = u], \\
&\E\Big[ \varepsilon_{\ell}\big(C_i, A^{\text{o} \top}_{d}X_i\big) \bar{\xi}^{[1]}_{j,\ell}(X_i, A^{\text{o}}_{d}) | (X_i, C_i) \Big] = 0, \E\Big[ \pi^{[1]}_{\ell}\big(X_i, A^{\text{o} \top}_{d} X_i \big) \bar{\xi}^{[0]}_{j,\ell}(X_i, A^{\text{o}}_{d}) | (X_i, C_i) \Big] = 0,\\
&\E\Big[ \varepsilon_{\ell}\big(C_i, A^{\text{o} \top}_{d}X_i\big) \bar{\xi}^{[1]}_{j,\ell}(X_i, A^{\text{o}}_{d}) | (X_j, C_j) \Big] = \E\Big[ \E\big[ \varepsilon_{\ell}\big(C_i, A^{\text{o} \top}_{d}X_i\big) | X_i\big] \bar{\xi}^{[1]}_{j,\ell}(X_i, A^{\text{o}}_{d}) | (X_j, C_j) \Big] = 0, \\
\text{and}&\\
&\E \Big[ \pi^{[1]}_{\ell}\big(X_i, A^{\text{o} \top}_{d} X_i \big) \bar{\xi}^{[0]}_{j,\ell}(X_i, A^{\text{o}}_{d}) | (X_j, C_j) \Big] \\
&=\E \Big[ \big( \partial_u \pi_{\ell} \big(A^{\text{o} \top}_{d} X_i \big)\big) \big( H(A^{\text{o} \top}_{d} X_i) - \E[X_{d*i} | A^{\text{o} \top}_{d} X_i] \big) \bar{\xi}^{[0]}_{j,\ell}(X_i, A^{\text{o}}_{d}) | (X_j, C_j) \Big] = 0.
\end{align*}
These imply that the $U$-statistics of order 2 in (\ref{eq:thm2_7.5}) and (\ref{eq:thm2_8}) are degenerate.
Using the bounds $\E[(\varepsilon_{\ell}(C_i, A^{\text{o} \top}_{d}X_i) \bar{\xi}^{[1]}_{j,\ell}(X_i, A^{\text{o}}_{d}) )^2] = O(h^{-(2+d)})$ and $\E[(\pi^{[1]}_{\ell}(X_i, A^{\text{o} \top}_{d} X_i) \bar{\xi}^{[0]}_{j,\ell}(X_i, A^{\text{o}}_{d}) )^2] = O(h^{-d})$, Corollary 4 in \cite{sherman1994maximal} yields, for $d \geq d^{\text{o}}$,
\begin{align}
&\Bigg\| \frac{2}{n(n-1)}\sum_{i \neq j} \varepsilon_{\ell}\big(C_i, A^{\text{o} \top}_{d}X_i\big) \bar{\xi}^{[1]}_{j,\ell}(X_i, A^{\text{o}}_{d}) \Bigg\| = O_p\Bigg(\frac{1}{n\sqrt{h_{d}^{2+d}}}\Bigg)\nonumber
\end{align}
and
\begin{align}
&\Bigg\| \frac{2}{n(n-1)}\sum_{i \neq j} \pi_{\ell}^{[1]}\big(X_i, A^{\text{o} \top}_{d} X_i \big) \bar{\xi}^{[0]}_{j,\ell}(X_i, A^{\text{o}}_{d}) \Bigg\| = O_p\Bigg(\frac{1}{n\sqrt{h_{d}^{d}}}\Bigg), \ell = 1, \dots, k-1.  \label{eq:thm2_9}
\end{align}
Applying Theorem 2.5 in \cite{Kosorok2008} for $d \geq d^{\text{o}}$ further establishes
\begin{align}
&  \bigg\| \frac{1}{n} \sum_{i=1}^n  \varepsilon_{\ell}\big(C_i, A^{\text{o} \top}_{d}X_i\big) \E\Big[{\xi}^{[1]}_{j,\ell}(X_i, A^{\text{o}}_{d}) | X_i\Big] \bigg\| = O\bigg( \frac{h_d^r}{\sqrt{n}} \bigg)\nonumber
 \end{align}
and
\begin{align}
&  \bigg\| \frac{1}{n} \sum_{i=1}^n  \pi^{[1]}_{\ell}\big(X_i, A^{\text{o} \top}_{d} X_i \big) \E\Big[{\xi}^{[0]}_{j,\ell}(X_i, A^{\text{o}}_{d}) | X_i\Big] \bigg\| = O\bigg( \frac{h_d^r}{\sqrt{n}} \bigg), \ell = 1, \dots, k-1. \label{eq:thm2_9.5}
\end{align}
Under assumption {B1}, substituting (\ref{eq:thm2_9}) and (\ref{eq:thm2_9.5}) into (\ref{eq:thm2_7.5}) and (\ref{eq:thm2_8}) yields
\begin{align}
\sqrt{n} \textsc{ps}_{01}(A^{\text{o}}_{d}) = o_p(1) \text{ and } \sqrt{n} \textsc{ps}_{10}(A^{\text{o}}_{d}) = o_p(1) \text{ for } d \geq d^{\text{o}}. \label{eq:thm2_10}
\end{align}
By substituting (\ref{eq:thm2_7}), (\ref{eq:thm2_10}), and $\sqrt{n} \nu^{\top} \textsc{ps}_{00}(A^{\text{o}}_{d}) \stackrel{d}{\longrightarrow} N_{(p-d)d}\big(0, \nu^{\top} \E[S^{\otimes 2}_{A^{\text{o}}_{d}}] \nu\big)$, which follows from the central limit theorem, into (\ref{eq:thm2_6}), we obtain the asymptotic distribution:
\begin{align}
\sqrt{n} \textsc{ps}(A^{\text{o}}_{d})\stackrel{d}{\longrightarrow} N_{(p-d)d}\Big(0, \E\big[S^{\otimes 2}_{A^{\text{o}}_{d}}\big]\Big). \label{eq:thm2_12}
\end{align}
By the second assertion in Lemma \ref{A:Lemma_1} and the uniform boundedness of $\pi^{[m]}_{\ell}(A^{\top}_{d}x)$, $m = 0, 1, 2$, $\ell = 1 ,\dots, k-1$, the pseudo information matrix in (\ref{eq:thm2_5}) admits the representation
\begin{align}
\textsc{pi}(A_d) = & \frac{2}{n} \sum_{\ell=1}^{k-1}  \sum_{i=1}^n \Big( \pi^{[1]}_{\ell}\big(X_i, A_{d}^{\top}X_i\big) \Big)^{\otimes 2}  + o_p\Bigg( \sqrt{\frac{\log n}{nh_d^{2+d}}} \Bigg) + O(h_d^r)  \nonumber\\
& - \frac{2}{n} \sum_{\ell=1}^{k-1}  \sum_{i=1}^n \varepsilon_{\ell} \big(C_i, A_{d}^{\top}X_i\big) \pi^{[2]}_{\ell} \big(X_i, A_{d}^{\top}X_i\big) + o_p\Bigg( \sqrt{\frac{\log n}{nh_d^{4+d}}} \Bigg) + O(h_d^r). \label{eq:thm2_13}
\end{align}
Applying the law of large numbers to (\ref{eq:thm2_13}) under assumption {B1} gives 
\begin{align}
\textsc{pi}_d(A_d)  \stackrel{p}{\longrightarrow} E\bigg[\sum_{\ell=1}^{k-1} \bigg(\big(\pi^{[1]}_{\ell}\big(X, A_{d}^{\top}X\big)\big)^{\otimes 2} - \varepsilon_{\ell}\big(C, A_{d}^{\top}X\big) \pi^{[2]}_{\ell} \big(X, A_{d}^{\top}X\big) \bigg) \bigg].  \label{eq:thm2_14}
\end{align}
The continuity of $\textsc{pi}(A_{d})$ in $A_{d}$, the consistency of $\vvec(\widehat{A}_{d*})$ to $\vvec(A^{\text{o}}_{d*})$, and the fact that the second term in (\ref{eq:thm2_14}) has zero mean, for $A_{d} = A^{\text{o}}_{d}$ and $d \geq d^{\text{o}}$, together imply
\begin{align}
\textsc{pi}(\bar{A}_{d}) \stackrel{p}{\longrightarrow} V_{A^{\text{o}}_{d}}. \label{eq:thm2_15}
\end{align}
Substituting (\ref{eq:thm2_12}) and (\ref{eq:thm2_15}) into (\ref{eq:thm2_5}) and invoking Slutsky's theorem yields 
\begin{align}
\vvec\big(\widehat{A}_{d*}\big) =\vvec(A^{\text{o}}_{d})+O_{p}\big(n^{-\frac{1}{2}}\big) \text{ for }d \geq d^{\text{o}}. \label{eq:thm2_15_d0}
\end{align}
Under assumption {B5}, we establish the limiting distribution
\begin{align}
\sqrt{n} (\vvec\big(\widehat{A}^{\text{o}}_{*}\big) - \vvec(A^{\text{o}})) \stackrel{d}{\longrightarrow} N_{(p-d^{\text{o}})d^{\text{o}}}\Big(0, V_{A^{\text{o}}}^{-1}\E[S^{\otimes 2}_{A^{\text{o}}}]V_{A^{\text{o}}}^{-1}\Big) \text{ as } n \longrightarrow \infty. \label{eq:thm2_15_d}
\end{align}

\subsubsection*{(Consistency of $\hat{d}$ to $d^{\text{o}}$)}
By $\sup_{\{x, A_{d}\}} |\widehat{\pi}_{\ell, \tilde{h}_{d}}\big( A^{\top}_{d}x; A_{d}\big)-\pi^{[0]}_{\ell}(x,A^{\top}_{d}x) | = O_{p}( n^{-{r}/{(2r+d)}})$ and
\begin{align*}
\vvec(\widehat{A}_{d*}) -  \vvec(A^{\text{o}}_{d*}) =\left\{
 \begin{array}{ll}
 o_{p}(1) & d < d^{\text{o}},\\\\
 O_{p}\big( n^{-\frac{1}{2}}\big) & d \geq d^{\text{o}},
  \end{array}\right.
\end{align*}
we obtain
\begin{align}
&\sup_{x}\Big|\widehat{\pi}_{\ell,\tilde{h}_{d}}\big(\widehat{A}^{\top}_{d}x; \widehat{A}_{d}\big)-\pi_{\ell}(A^{\text{o} \top}_{d}x) \Big |\nonumber \\
&\leq \sup_{\{x, A_{d*}\}}\Big |\widehat{\pi}_{\ell, \tilde{h}_{d}}\big( A^{\top}_{d}x; A_{d}\big)-\pi^{[0]}_{\ell}(x,A^{\top}_{d}x) \Big | + \sup_{\{u,x\}}\Big |\partial_{u} \pi_{\ell}^{[0]}(u) \Big | \Big\| \vvec(\widehat{A}_{d*}) -\vvec(A^{\text{o}}) \Big\| \| x \| \nonumber \\
&= \left\{
\begin{array}{ll}
O_{p}\big( n^{-\frac{r}{2r+d}} \big) + O_{p}\big( \| \vvec(\widehat{A}_{d*}) -\vvec(A^{\text{o}}_{d*}) \| \big)
& d < d^{\text{o}}, \\\\
O_{p}\big( n^{-\frac{1}{2}} \big)
& d \geq d^{\text{o}}.
\end{array}\right. \label{eq:thm2_16}
\end{align}
Combining this with the decomposition of $\textsc{pss}\big(\widehat{A}_{d},\tilde{h}_{d}\big)$ yields
\begin{align}
&\frac{2}{n} \textsc{pss}\big(\widehat{A}_{d},\tilde{h}_{d}\big)\nonumber\\
& = \frac{1}{n}\sum^{n}_{i=1} \sum_{\ell = 1}^{k-1} \varepsilon^2_{\ell}\big( X_i, A^{\text{o} \top}_{d} X_i \big) - \frac{2}{n}\sum^{n}_{i=1} \sum_{\ell = 1}^{k-1} \varepsilon_{\ell}\big( X_i, A^{\text{o} \top}_{d} X_i \big) \Big(\widehat{\pi}^{-i}_{\ell, \tilde{h}_{d}}\big(\widehat{A}^{\top}_{d}X_{i};\widehat{A}_{d}\big) -\pi_{\ell}(A^{\text{o}\top}_{d} X_{i}\big)\Big) \nonumber\\
& \hspace{0.2in}+ \frac{1}{n}\sum^{n}_{i=1} \sum_{\ell = 1}^{k-1} \Big( \widehat{\pi}^{-i}_{\ell, \tilde{h}_{d}}\big(\widehat{A}^{\top}_{d}X_{i};\widehat{A}_{d}\big) -\pi_{\ell}(A^{\text{o}\top}_{d} X_{i}\big) \Big)^{2}\nonumber
\end{align}
\begin{align}
& \stackrel{\triangle}{=} \left\{
\begin{array}{ll}
I_{1}({d})+I_{2}({d})+O_{p}\big( n^{-\frac{2r}{2r+d}} \big) + O_{p}\big( \| \vvec(\widehat{A}_{d*}) -\vvec(A^{\text{o}}_{d*}) \|^2 \big)
& d < d^{\text{o}},\\\\
I_{1}({d})+I_{2}({d})+O_{p}\big( n^{-\frac{2r}{2r+d}} \big)
& d \geq d^{\text{o}}.
\end{array}\right. \label{eq:thm2_17}
\end{align}
Applying the central limit theorem to $I_1(d)$ for $d < d^{\text{o}}$ and using the fact that $I_{1}({d}) - I_{1}(d^{\text{o}}) = 0$ for $d \geq d^{\text{o}}$, we have
\begin{align}
I_{1}({d}) - I_{1}(d^{\text{o}}) = \left\{
\begin{array}{ll}
b^{2}(d) + O_p\big( n^{-\frac{1}{2}} \big)
& d < d^{\text{o}},\\\\
0
& d > d^{\text{o}},
\end{array}\right. \label{eq:thm2_18}
\end{align}
where $b^{2}(d) = \E[ \sum_{\ell = 1}^{k-1} ( \pi_{\ell}^{[0]}(A^{\text{o}\top}_{d} X\big) - \pi_{\ell}(A^{\text{o}\top}X))^2]$.
The second term in (\ref{eq:thm2_17}) admits the expansion
\begin{align}
I_{2}(d) = &\sum_{\ell = 1}^{k-1} \bigg(   \frac{1}{n} \sum_{i=1}^n  \varepsilon_{\ell}\big( X_i, A^{\text{o} \top}_{d} X_i \big) \pi^{[1]}_{\ell} \big( X_i, A^{\text{o}\top}_{d}X_i \big) \bigg)^{\top} \big(\vvec(\widehat{A}_{d*})-\vvec(A_{d*}^{\text{o}})\big)(1+o_p(1)) \nonumber\\
&+ \sum_{\ell = 1}^{k-1} \bigg( \frac{1}{n(n-1)} \sum_{i \neq j} \varepsilon_{\ell}\big( X_i, A^{\text{o} \top}_{d} X_i \big) \xi^{[1]}_{j,\ell} \big(X_i, A^{\text{o}}_{d}\big)  \bigg)^{\top} \big(\vvec(\widehat{A}_{d*})-\vvec(A_{d*}^{\text{o}})\big)(1+o_p(1)) \nonumber \\
& + \sum_{\ell = 1}^{k-1} \frac{1}{n(n-1)} \sum_{i \neq j} \varepsilon_{\ell}\big( X_i, A^{\text{o} \top}_{d} X_i \big) \xi^{[0]}_{j,\ell} \big(X_i, A^{\text{o}}_{d}\big) + O_p\Big( n^{-\frac{2r}{2r+d}} \Big). \label{eq:thm2_19}
\end{align}
By applying the $\sqrt{n}$-consistency of $\vvec(\widehat{A}_{d*})$ to $\vvec(A_{d*}^{\text{o}})$ in (\ref{eq:thm2_15_d0}), the central limit theorem to $\sum_{i=1}^n  \varepsilon_{\ell}\big( X_i, A^{\text{o} \top}_{d} X_i \big) \pi^{[1]}_{\ell} \big( X_i, A^{\text{o}\top}_{d}X_i \big) /n$, and the uniform convergence rates in (\ref{eq:thm2_9}) and (\ref{eq:thm2_9.5}) for (\ref{eq:thm2_7.5}) and (\ref{eq:thm2_8}), we obtain
\begin{align}
I_{2}(d)= O_{p}\big(n^{-\frac{2r}{2r+d}}\big) \text{ for } d \geq d^{\text{o}}. \label{eq:thm2_20}
\end{align}
In contrast, for $d < d^{\text{o}}$, the first two terms in $I_2(d)$ are $O_{p}\big( \|\vvec(\widehat{A}_{d*})-\vvec(A_{d*}^{\text{o}}) \|\big)$.
Using this result and applying Theorem 2.5 in \cite{Kosorok2008} to the third term imply 
\begin{align}
I_{2}(d)= O_{p}\big(n^{-\frac{r}{2r+d}}\big) + O_{p}\big( \|\vvec(\widehat{A}_{d*})-\vvec(A_{d*}^{\text{o}}) \|\big) \text{ for } d < d^{\text{o}}. \label{eq:thm2_21}
\end{align}
Substituting (\ref{eq:thm2_18}), (\ref{eq:thm2_20}), and (\ref{eq:thm2_21}) into $2\big( \textsc{pss}\big(\widehat{A}_{d}, \tilde{h}_{d}\big) - \textsc{pss}\big(\widehat{A}_{d^{\text{o}}}, \tilde{h}_{d^{\text{o}}}\big) \big)/n$, where the terms in the difference are specified in (\ref{eq:thm2_17}), yields
\begin{align}
\textsc{pss}(d)-\textsc{pss}(d^{\text{o}}) = \left\{
\begin{array}{ll}
\frac{\log n }{2 n^{\frac{2r}{2r+d}}} (1+o_{p}(1))
& d > d^{\text{o}}, \\\\
 b^{2}(d) (1+o_{p}(1))
&d < d^{\text{o}}.
\end{array}\right. \label{eq:thm2_22}
\end{align}
The consistency of $\hat{d}$ to $d^{\text{o}}$ follows directly from (\ref{eq:thm2_22}). 

\subsubsection*{(Asymptotic normality of $P_{\widehat{A}}$)}
The consistency of $\widehat{d}$ to $d^{\text{o}}$ and the boundedness of $\vvec(P_{\widehat{A}} - P_{A^{\text{o}}})$ imply
\begin{align}
P\big(\big\| \sqrt{n} \vvec \big(P_{\widehat{A}} - P_{A^{\text{o}}}\big) I\big(\widehat{d} \neq d^{\text{o}}\big) \big\| > \varepsilon\big) \leq P\big( \widehat{d} \neq d^{\text{o}} \big) \longrightarrow 0 \text{ as } n \longrightarrow \infty.   \label{eq:thm2_23}
\end{align}
Together with (\ref{eq:thm2_15_d}) and the decomposition 
\begin{align*}
\sqrt{n} \vvec(P_{\widehat{A}} - P_{A^{\text{o}}}) = \sqrt{n} \vvec(P_{\widehat{A}_{d^{\text{o}}}} - P_{A^{\text{o}}}) I(\widehat{d} = d^{\text{o}}) +  \sqrt{n} \vvec(P_{\widehat{A}} - P_{A^{\text{o}}}) I(\widehat{d} \neq d^{\text{o}}),
\end{align*} 
Slutsky's theorem yields
\begin{align}
 \sqrt{n} \vvec \big(P_{\widehat{A}} - P_{A^{\text{o}}}\big) = \sqrt{n} \vvec \big(P_{\widehat{A}_{d^{\text{o}}}} - P_{A^{\text{o}}}\big) (1+o_p(1)) \stackrel{d}{\longrightarrow} N_{(p-d^{\text{o}})d^{\text{o}}}(0, \Sigma). \label{eq:thm2_24}
\end{align}
The proof is completed.

\subsection{Proof of Theorem \ref{Thm3.4}} 
\label{sec:A.4}
\subsubsection*{(Consistency of $\hat{k}^{*}_{1}$ to $k^{*}$)}
Define
\begin{align*}
N_{s,h} \big(y,v; \mathcal{G}, \gamma \big)=\frac{1}{n}\sum_{i = 1}^n \big( I(Y_i \leq y) \big)^s \sum_{\ell = 1}^{k} I\big( i \in \mathcal{G}_{\ell}\big) K_{2,h}\big( \gamma^{\top}_{\ell} Z_i - v \big), s=0,1.
\end{align*}
Since $I(Y\leq y)$ and $K_2(u)$ are uniformly bounded, it follows that
\begin{align}
&\sup_{\{y,v,\gamma\}}\big| N_{s,\tilde{h}}\big(y,v; \widetilde{\mathcal{G}}, \gamma\big) - N_{s,\tilde{h}}\big(y,v; \mathcal{G}^{\text{o}}, \gamma\big) \big| \nonumber\\
&=\sup_{\{y,v,\gamma\}}\bigg | \frac{1}{n} \sum_{i = 1}^n \big( I(Y_i \leq y) \big)^s \sum_{\ell = 1}^{k} \big( I\big(i\in\widetilde{\mathcal{G}}_{\ell}\big)-I\big(i\in \mathcal{G}^{\text{o}}_{\ell} \big) \big) K_{2,\tilde{h}} \big( \gamma^{\top}_{\ell}Z_i - v \big) \bigg|  \nonumber \\
&\leq \sup_{\{ i, \ell\}} \Big| I\big(i\in\widetilde{\mathcal{G}}_{\ell}\big)-I\big(i\in\mathcal{G}^{\text{o}}_{\ell}\big)\Big|\cdot
\sup_{\{y,v,\gamma\}}\bigg | \frac{1}{n}\sum^{n}_{i=1}(I(Y_{i}\leq y))^{s}\sum^{k}_{\ell=1}K_{\tilde{h}}\big(\gamma^{\top}_{\ell}Z_{i}-v\big) \bigg | \nonumber \\
&=o_{p}\big( n^{-1} \big), s=0,1.  \label{eq:A.4.0}
\end{align}
Applying arguments analogous to those used in the proof of Lemma \ref{A:Lemma_1}, we obtain
\begin{align}
\sup_{\{v,\gamma\}}\big| N_{0,\tilde{h}}\big(y,v; \mathcal{G}^{\text{o}}, \gamma\big)  - f_{\gamma_{*}}(v) \big| = o_p\bigg( \frac{\sqrt{\log n}}{n^{\frac{2}{5}}} \bigg). \label{eq:A.4.1}
\end{align}
By (\ref{eq:A.4.0}) and (\ref{eq:A.4.1}), we have
\begin{align}
&\sup_{\{y,v,\gamma_{*}\}} \Big|\widehat{G}_{\tilde{h}}\big(y, v; \widetilde{\mathcal{G}},\gamma\big)-\widehat{G}_{\tilde{h}}\big(y, v; \mathcal{G}^{\text{o}},\gamma\big) \Big | \hspace{3.4in} \nonumber\\
& = \sup_{\{y,v,\gamma_{*}\}}  \frac{ \big| \sum_{s=0}^1 (-1)^s \big( N_{s,\tilde{h}}\big(y,v; \widetilde{\mathcal{G}}, \gamma\big) - N_{s,\tilde{h}}\big(y,v; \mathcal{G}^{\text{o}}, \gamma\big) \big) N_{1-s,\tilde{h}}\big(y,v; \mathcal{G}^{\text{o}}, \gamma\big) \big|}{N_{0,\tilde{h}}\big(y,v; \widetilde{\mathcal{G}},\gamma\big) N_{0,\tilde{h}}\big(y,v;\mathcal{G}^{\text{o}}, \gamma\big)}  \nonumber \\
& \leq  \frac{ \sum_{s=0}^1  \sup_{\{y,v,\gamma\}}\big|\big( N_{s,\tilde{h}}\big(y,v; \widetilde{\mathcal{G}},\gamma\big) - N_{s,\tilde{h}}\big(y,v; \mathcal{G}^{\text{o}}, \gamma\big) \big)\big|  \sup_{\{y,v,\gamma\}} \big| N_{1-s,\tilde{h}}\big(y,v; \mathcal{G}^{\text{o}}, \gamma\big)\big|}{ \Big( \inf_{\{v,\gamma\}} f_{\gamma_{*}}(v) + o_p\Big( \frac{\sqrt{\log n}}{n^{\frac{2}{5}}} \Big) - O_p\big( n^{-1}\big)\Big)  \Big( \inf_{\{v,\gamma\}} f_{\gamma_{*}}(v) + o_p\Big( \frac{\sqrt{\log n}}{n^{\frac{2}{5}}} \Big)\Big)  }  \nonumber \\
&=o_{p}\big( n^{-1} \big).\label{eq:A.4.2}
\end{align}
The uniform convergence of $\widehat{G}_{\tilde{h}}\big(y, \gamma^{\top}w;\mathcal{G}^{\text{o}},{\gamma}\big)$ to $G^{[0]}(y,w,\gamma^{\top}w)$ follows by tthe same reasoning as in Lemma \ref{A:Lemma_1}.
Moreover, Lemma \ref{Thm2.1} implies
\begin{align*}
\widetilde{\gamma} - \gamma^{\text{o}}=\left\{
 \begin{array}{ll}
 O_{p}\big( n^{-\frac{1}{2}} \big) & \mathcal{G}^{\text{o}} \in \mathcal{C}_{\mathcal{G}},\\\\
  o_{p}(1) & \text{ otherwise.}
  \end{array}\right.
\end{align*}
Thus,
\begin{align}
&\sup_{\{y,w\}}\Big |\widehat{G}_{\tilde{h}}\big(y, \widetilde{\gamma}^{\top}w; \mathcal{G}^{\text{o}},\widetilde{\gamma}\big)-G(y,\gamma^{\text{o}\top}w) \Big |\nonumber \\
&\leq \sup_{\{y,w, \gamma_{*}\}}\Big |\widehat{G}_{\tilde{h}}\big(y, \gamma^{\top}w; \mathcal{G}^{\text{o}},{\gamma}\big)-G^{[0]}(y,w,\gamma^{\top}w) \Big | + \sup_{\{y,v,w\}}\Big |\partial_{v} G^{[0]}(y,v) \Big | \Big\| \widetilde{\gamma} - \gamma^{\text{o}} \Big\| \| w \| \nonumber \\
&= \left\{
\begin{array}{ll}
o_p\Big( \frac{\sqrt{\log n}}{n^{\frac{2}{5}}} \Big) + O_{p}\big( n^{-\frac{1}{2}} \big)
&\mathcal{G}^{\text{o}} \in \mathcal{C}_{\mathcal{G}}, \\\\
o_p\Big( \frac{\sqrt{\log n}}{n^{\frac{2}{5}}} \Big) + O_{p}\big( \| \widetilde{\gamma} - \gamma^{\text{o}} \| \big)
& \text{otherwise.}
\end{array}\right. \label{eq:A.4.3}
\end{align}
Combining (\ref{eq:A.4.2}) and (\ref{eq:A.4.3}) and applying the triangle inequality, we obtain
\begin{align}
&\sup_{\{y,w\}}\Big |\widehat{G}_{\tilde{h}}\big(y, \widetilde{\gamma}^{\top}w; \widetilde{\mathcal{G}},\widetilde{\gamma}\big)-G(y,\gamma^{\text{o} \top}w) \Big |= \left\{
\begin{array}{ll}
o_p\Big( \frac{\sqrt{\log n}}{n^{\frac{2}{5}}} \Big)
& \mathcal{G}^{\text{o}} \in \mathcal{C}_{\mathcal{G}}, \\\\
o_p\Big( \frac{\sqrt{\log n}}{n^{\frac{2}{5}}} \Big) + O_{p}\big( \| \widetilde{\gamma} - \gamma^{\text{o}} \| \big)
& \text{otherwise.}
\end{array}\right.\label{eq:A.4.4}
\end{align}
Using the decomposition of $\textsc{psis}\big(\widetilde{\gamma},\tilde{h};\widetilde{\mathcal{G}}\big)$, together with (\ref{eq:A.4.4}), we have
\begin{align}
&\frac{2}{n} \textsc{psis}\big(\widetilde{\gamma},\tilde{h};\widetilde{\mathcal{G}}\big)\nonumber\\
& = \int\frac{1}{n}\sum^{n}_{i=1} \big( \varepsilon_{y}\big( Y_i, \gamma^{\text{o} \top} W_i \big) \big)^{2}d\widehat{F}(y)-2\int \frac{1}{n}\sum^{n}_{i=1} \varepsilon_{y}\big( Y_i, \gamma^{\text{o} \top} W_i \big) \Big(\widehat{G}^{-i}_{\tilde{h}}\big(y, \widetilde{\gamma}^{\top}\widetilde{W}_{i};\widetilde{\mathcal{G}},\widetilde{\gamma}\big) \nonumber\\
& \hspace{0.2in}-G^{[0]}(y,\gamma^{\text{o}\top}W_{i}\big)\Big)d\widehat{F}(y)+ \int\frac{1}{n}\sum^{n}_{i=1} \Big( \widehat{G}^{-i}_{\tilde{h}}\big(y, \widetilde{\gamma}^{\top}\widetilde{W}_{i};\widetilde{\mathcal{G}},\widetilde{\gamma}\big) -G^{[0]}(y,\gamma^{\text{o}\top}W_{i}\big) \Big)^{2}d\widehat{F}(y) \nonumber\\
& \stackrel{\triangle}{=} \left\{
\begin{array}{ll}
I_{1}(k)+I_{2}(k)+ o_p\Big( \frac{{\log n}}{n^{\frac{4}{5}}} \Big)
&\mathcal{G}^{\text{o}} \in \mathcal{C}_{\mathcal{G}}, \\\\
I_{1}(k)+I_{2}(k)+ o_p\Big( \frac{{\log n}}{n^{\frac{4}{5}}} \Big) + O_{p}\big( \| \widetilde{\gamma} - \gamma^{\text{o}} \|^2 \big)
& \text{otherwise,}
\end{array}\right. \label{eq:A.4.5}
\end{align}
where $W_{i}=(I(i\in\mathcal{G}^{\text{o}}_{1}), \dots, I(i\in\mathcal{G}^{\text{o}}_{k}))^{\top} \otimes Z_{i}$, $\widetilde{W}_{i}=(I(i\in\widetilde{\mathcal{G}}_{1}), \dots, I(i\in\widetilde{\mathcal{G}}_{k}))^{\top} \otimes Z_{i}$, and $ \varepsilon_{y}\big( Y_i, \gamma^{\text{o} \top} W_i \big) = I(Y_{i}\leq y)-G^{[0]}\big(y,\gamma^{\text{o}\top}W_{i}\big)$, $i=1,\dots, n$. 

By noting that $I_1(k) - I_1(k^{*})=0$ whenever $\mathcal{G}^{\text{o}} \in \mathcal{C}_{\mathcal{G}}$, and invoking Theorem 8.1 in \cite{hoeffding1948probability} alongside the consistency of $U$-statistics in \cite{hoeffding1961strong}, we obtain 
\begin{align}
I_{1}(k) - I_{1}(k^{*}) = \left\{
\begin{array}{ll}
0
&\mathcal{G}^{\text{o}} \in \mathcal{C}_{\mathcal{G}},\\ \\
b^{2}_{1}(k) (1+o_p(1))
& \text{otherwise,}
\end{array}\right. \label{eq:A.4.6}
\end{align}
where $b^{2}_{1}(k)= \int\E\big[ \big(G^{[0]}\big(y,\gamma^{\text{o}\top}W\big)-G\big(y,\gamma^{*\text{o}\top}W^{*}\big)\big)^{2}\big]dF(y)$, with
$W^{*}=(I(C^{*}=1), \dots, I(C^{*}=k^{*}))^{\top} \otimes Z$. Observe that the process $\{\varepsilon_{y}\big( Y_i,$ $\gamma^{\text{o} \top} W_i \big):y\in\mathcal{Y}\}$ degenerates to the zero process $\{\varepsilon_y(Y_{i}, \gamma^{* \text{o} \top}W^{*}_{i}):y\in\mathcal{Y}\}$ for all $i=1,\dots, n$ when $\mathcal{G}^{\text{o}} \in \mathcal{C}_{\mathcal{G}}$.
From (\ref{eq:A.4.2}) and a standard Taylor expansion, we derive the decomposition
\begin{align}
\widehat{G}^{-i}_{\tilde{h}}\big(y, \widetilde{\gamma}^{\top}\widetilde{W}_{i};\widetilde{\mathcal{G}},\widetilde{\gamma}\big)-G^{[0]}(y,\gamma^{\text{o}\top}W_{i}\big)
=&\big(\partial_{\gamma_{*}}\widehat{G}^{-i}_{\tilde{h}}\big(y, \bar{\gamma}^{\top}W_{i};\mathcal{G}^{\text{o}},\bar{\gamma}\big)\big)^{\top}\big(\widetilde{\gamma}_{*}-\gamma^{\text{o}}_{*}\big)+\widehat{G}^{-i}_{\tilde{h}}\big(y,\gamma^{\text{o}\top}W_{i};\mathcal{G}^{\text{o}},\gamma^{\text{o}}\big) \nonumber\\
&-G^{[0]}(y,\gamma^{\text{o}\top}W_{i}\big) + r_{0i}(y),\label{eq:A.4.7.0}
\end{align}
where $\bar{\gamma}$ lies on the line segment between $\widetilde{\gamma}$ and $\gamma^{\text{o}}$, and
\begin{align*}
r_{0i}(y)=\widehat{G}^{-i}_{\tilde{h}}\big(y, \widetilde{\gamma}^{\top}\widetilde{W}_{i};\widetilde{\mathcal{G}},\widetilde{\gamma}\big)-\widehat{G}^{-i}_{\tilde{h}}\big(y, \widetilde{\gamma}^{\top}W_{i};\mathcal{G}^{\text{o}},\widetilde{\gamma}\big),\text{ with }
\sup_{\{i,y\}}|r_{0i}(y)|=o_{p}(n^{-1}).
\end{align*}
Using the techniques in the proofs of Lemma \ref{A:Lemma_1} and the asymptotic normality of $\widehat{\gamma}_{*}$ in Lemma \ref{Thm2.1}, the expression in (\ref{eq:A.4.7.0}) simplifies to
\begin{align*}
&\widehat{G}^{-i}_{\tilde{h}}\big(y, \widetilde{\gamma}^{\top}\widetilde{W}_{i};\widetilde{\mathcal{G}},\widetilde{\gamma}\big)-G^{[0]}(y,\gamma^{\text{o}\top}W_{i}\big)\\
&= \bigg( G^{[1]}\big( y, W_i, \gamma^{\text{o} \top}W_i \big) + \frac{1}{n-1} \sum_{\{j:j \neq i\}} \xi^{[1]}_j \big(y, W_i, \gamma^{\text{o}}\big) + r_{1i}(y) \bigg)^{\top} \big(\widetilde{\gamma}_{*}-\gamma^{\text{o}}_{*}\big)  \\
& \hspace{0.2 in} + \bigg( \frac{1}{n-1} \sum_{\{j:j \neq i\}} \xi^{[0]}_j \big(y, W_i, \gamma^{\text{o}}\big) + r_{2i}(y) \bigg) + r_{0i}(y),  
\end{align*}
where $\sup_{\{i,y\}}\|r_{1i}(y)\| =o_{p}(n^{-2/5} \sqrt{\log n})$ and $\sup_{\{i,y\}}|r_{2i}(y)|=o_{p}(n^{-4/5}{\log n})$. Accordingly, the second term in (\ref{eq:A.4.5}) becomes
\begin{align*}
I_2(k) = &\bigg( \int  \frac{1}{n} \sum_{i=1}^n \varepsilon_{y}\big( Y_i, \gamma^{*\text{o} \top} W^{*}_i \big) G^{[1]}\big( y, W_i, \gamma^{\text{o} \top}W_i \big)  d\widehat{F}(y) \bigg)^{\top} \big(\widetilde{\gamma}_{*}-\gamma^{\text{o}}_{*}\big)\\
&+ \bigg( \int\frac{1}{n(n-1)} \sum_{\{j:j \neq i\}} \varepsilon_{y}\big( Y_i, \gamma^{*\text{o} \top} W^{*}_i \big) \xi^{[1]}_j \big(y, W_i, \gamma^{\text{o}}\big) d\widehat{F}(y) \bigg)^{\top} \big(\widetilde{\gamma}_{*}-\gamma^{\text{o}}_{*}\big) \nonumber \\
& + \int \frac{1}{n(n-1)} \sum_{\{j:j \neq i\}} \varepsilon_{y}\big( Y_i, \gamma^{*\text{o} \top} W^{*}_i \big) \xi^{[0]}_j \big(y, W_i, \gamma^{\text{o}}\big) d\widehat{F}(y)+ o_p\bigg( \frac{{\log n}}{n^{\frac{4}{5}}} \bigg). 
\end{align*}
Applying Theorem 2.5 in \citet{Kosorok2008} to the first term, and using the techniques in (\ref{eq:thm1_9}) and (\ref{eq:thm1_9.5}) for the remaining two, we deduce the respective orders $O_p(n^{-1})$, $O_p(n^{-12/10})$, and $O_p(n^{-4/5})$ under $\mathcal{G}^{\text{o}} \in \mathcal{C}_{\mathcal{G}}$. Thus,
\begin{align}
I_{2}(k)= o_p\bigg( \frac{{\log n}}{n^{\frac{4}{5}}} \bigg) \text{ if } \mathcal{G}^{\text{o}} \in \mathcal{C}_{\mathcal{G}}.
\label{eq:A.4.9}
\end{align}
When $\mathcal{G}^{\text{o}} \notin \mathcal{C}_{\mathcal{G}}$, the expectations of the integrands in $I_2(k)$ are strictly nonzero. Together with (\ref{eq:A.4.4}), this yields
\begin{align}
I_{2}(k)= o_p\bigg( \frac{\sqrt{\log n}}{n^{\frac{2}{5}}} \bigg) + O_{p}\big( \| \widetilde{\gamma} - \gamma^{\text{o}} \|^2 \big). \label{eq:A.3.10}
\end{align}
Substituting (\ref{eq:A.4.6}), (\ref{eq:A.4.9}), and (\ref{eq:A.3.10}) into (\ref{eq:A.4.5}) gives
\begin{align}
\textsc{spic}_{1}(k)-\textsc{spic}_{1}(k^{*}) = \left\{
\begin{array}{ll}
\max\Big\{ b^{2}_{1}(k),(k-k^{*})\frac{\log n }{2 n^{\frac{4}{5}}}\Big\}(1+o_{p}(1))
& k > k^{*}, \\\\
 b^{2}_{1}(k) (1+o_{p}(1))
&\ell <k.
\end{array}\right. \label{eq:A.3.11}
\end{align}
The consistency of $\widehat{k}^{*}_{1}$ to $k^{*}$ then follows directly from (\ref{eq:A.3.11}). 

\subsubsection*{(Consistency of $\hat{k}^{*}_{2}$ to $k^{*}$)}

It follows from (\ref{eq:A.4.4}) that
\begin{align}
\sup_{\{y,w\}} \Big| I\Big(\widehat{G}_{\tilde{h}}\big(y, \widetilde{\gamma}^{\top}w; \widetilde{\mathcal{G}},\widetilde{\gamma}\big) = s \Big)-I \Big(G^{[0]}(y,\gamma^{\text{o} \top}w) = s\Big) \Big|=o_{p}(n^{-1}), s = 0, 1.
\label{eq:A.5.1}
\end{align}
Combining (\ref{eq:A.3.1}), (\ref{eq:A.4.4}), and (\ref{eq:A.5.1}) with a Taylor expansion yields
\begin{align}
\textsc{spic}_2(k) =& - \frac{1}{n^2} \sum_{i = 1}^n\sum_{j = 1}^n \bigg[ I(Y_i \leq Y_j) \log G^{[0]}_{0}\big(Y_j,\gamma^{\text{o} \top}W_i \big)+ I(Y_i > Y_j) \log \Big(1-  G^{[0]}_{1}\big(Y_j,\gamma^{\text{o} \top} W_i \big) \Big) \bigg]  \nonumber \\
&-  \int \frac{1}{n} \sum_{i = 1}^n   \Bigg( \frac{I(Y_i \leq y)}{G^{[0]}_{0}\big(y,\gamma^{\text{o} \top} W_i \big) } 
-\frac{I(Y_i > y)}{1-  G^{[0]}_{1}\big(y,\gamma^{\text{o} \top} W_i \big) } \Bigg)\Big(\widehat{G}^{-i}_{\tilde{h}}\big(y, \widetilde{\gamma}^{\top}\widetilde{W}_{i}; \widetilde{\mathcal{G}},\widetilde{\gamma}\big)\nonumber\\
&- G^{[0]}(y,\gamma^{\text{o}}W_{i}) \Big) d \widehat{F}(y)
+ O_{p}\big(n^{-\frac{4}{5}}\big) +  \frac{k \log n}{n^{\frac{4}{5}}}\nonumber\\
 \stackrel{\triangle}{=} &~ I\!I_{1}(k)+ I\!I_{2}(k) +  \frac{k \log n}{n^{\frac{4}{5}}},
\label{eq:A.5.3}
\end{align}
where 
\begin{align*}
&G^{[0]}_{s}\big(y,\gamma^{\text{o} \top}W_i \big) =G^{[0]}\big(y,\gamma^{\text{o} \top}W_i \big) +(-1)^{s} I\big( G^{[0]}\big(y,\gamma^{\text{o} \top}W_i \big) = s \big), s=0,1, i=1,\dots,n.
\end{align*}

Clearly, $I\!I_{1}(k)-I\!I_{1}(k^{*})=0$ whenever $\mathcal{G}^{\text{o}} \in \mathcal{C}_{\mathcal{G}}$. 
When $\mathcal{G}^{\text{o}} \notin \mathcal{C}_{\mathcal{G}}$, the uniform convergence of the estimated functions, together with Theorem 8.1 in \cite{hoeffding1948probability}, the consistency of $U$-statistics in \cite{hoeffding1961strong}, and Jensen’s inequality, implies that
\begin{align*}
& I\!I_{1}(k) - I\!I_{1}(k^{*})=b^{2}_{2}(k) (1+o_p(1))\nonumber \\
&>-\int \log \E\bigg[  \E\bigg[I(Y_i \leq y) \frac{G^{[0]}_{0}\big(y,\gamma^{\text{o}\top}W_i\big) }{G_{0}\big(y,\gamma^{*\text{o}\top}W^{*}_i\big) }+ I(Y_i >y)\frac{1-G^{[0]}_{1}\big(y,\gamma^{\text{o}\top}W_i\big)}{1-G_{1}\big(y,\gamma^{*\text{o}\top}W^{*}_i\big)}\bigg|  W_i \bigg]\bigg]d{F}(y) (1+o_p(1))  \\
&=- \int \log\E \Big[ I\big( G\big(y,\gamma^{*\text{o}\top}W^{*}_i\big) = 0 \big) \big( 1-G^{[0]}\big(y,\gamma^{\text{o}\top}W_i\big)  \big)  + I\big( G\big(y,\gamma^{*\text{o}\top}W^{*}_i\big) = 1 \big) \big( G^{[0]}_{0}\big(y,\gamma^{\text{o}\top}W_i\big)  \big) \\
&\hspace{0.2in}+ I\big( G\big(y,\gamma^{*\text{o}\top}W^{*}_i\big) \in (0,1) \big) \Big] dF(y) (1+o_p(1))= 0,
\end{align*}
where 
\begin{align*}
b^{2}_{2}(k)= -\int   \E\bigg[I(Y_i \leq y) \log\frac{G^{[0]}_{0}\big(y,\gamma^{\text{o}\top}W_i\big) }{G_{0}\big(y,\gamma^{*\text{o}\top}W^{*}_i\big) }+ I(Y_i >y)\log \frac{1-G^{[0]}_{1}\big(y,\gamma^{\text{o}\top}W_i\big)}{1-G_{1}\big(y,\gamma^{*\text{o}\top}W^{*}_i\big)}\bigg]d{F}(y)
\end{align*}
is zero when $\mathcal{G}^{\text{o}} \in \mathcal{C}_{\mathcal{G}}$ and strictly positive otherwise.
Thus,
\begin{align}
I\!I_{1}(k) - I\!I_{1}(k^{*})
&= \left\{
\begin{array}{ll}
0
&\mathcal{G}^{\text{o}} \in \mathcal{C}_{\mathcal{G}}, \\\\
b^{2}_{2}(k)(1+o_p(1)) 
& \text{otherwise.}
\end{array}\right. \label{eq:A.5.5}
\end{align}
Similar to the derivations of (\ref{eq:A.4.9}) and (\ref{eq:A.3.10}), we obtain
\begin{align}
I\!I_{2}(k)= \left\{
\begin{array}{ll}
o_p\Big( \frac{{\log n}}{n^{\frac{4}{5}}} \Big)
&\mathcal{G}^{\text{o}} \in \mathcal{C}_{\mathcal{G}}, \\\\
o_p\Big( \frac{\sqrt{\log n}}{n^{\frac{2}{5}}} \Big)
&\text{otherwise}.
\end{array}\right. \label{eq:A.5.6} 
\end{align}
Combining (\ref{eq:A.5.5}), (\ref{eq:A.5.6}), and the decomposition of $\textsc{spic}_{2}(k)$ in (\ref{eq:A.5.3}) yields
\begin{align}
\textsc{spic}_{2}(k)-\textsc{spic}_{2}(k^{*}) = \left\{
\begin{array}{ll}
\max\Big\{ b^{2}_{2}(k),(k-k^{*})\frac{\log n }{ n^{\frac{4}{5}}}\Big\}(1+o_{p}(1))
& k > k^{*}, \\\\
 b^{2}_{2}(k) (1+o_{p}(1))
& k < k^{*}.
\end{array}\right. \label{eq:A.5.7}
\end{align}
The consistency of $\hat{k}^{*}_{2}$ to $k^{*}$ follows immediately from (\ref{eq:A.5.7}).

\clearpage

\subsection{Pseudocode and Flowcharts}
\label{spcode-flowcharts}


\RestyleAlgo{ruled}

\begin{algorithm}
\footnotesize  
\caption*{Pseudocode for the SP estimation procedure}
Initialize $\widehat{\beta}^{(0)}, \widehat{\eta}^{(0)}, \widehat{\gamma}^{(0)}$, and set $\widehat{\nu}^{(0)} \gets 0$, $\varepsilon_1 \gets \text{ predefined tolerance}$ \;
\Begin{
 \For{$m = 0, 1, 2, \cdots$}{
  Set ${\beta}^{(0)} \gets \widehat{\beta}^{(m)}$, $\varepsilon_2 \gets \text{ predefined tolerance}$ \;
  \For{$s = 0, 1, 2, \cdots$}{
  Compute ${\beta}^{(s+1)}_{(1)*}$ and ${\beta}^{(s+1)}_{(2)*}$ using (\ref{alg:SP_beta})\;
  \eIf{$\| {\beta}^{(s+1)} - {\beta}^{(s)} \| < \varepsilon_2$,}{
Set $\widehat{\beta}^{(m+1)} = {\beta}^{(s+1)}$ and exit the inner loop\;
   }{$s = s+1$ \;
   }{
  }
   }{${\eta}^{(0)} \gets \widehat{\eta}^{(m)}, {\gamma}^{(0)}_{(1)} \gets \widehat{\gamma}^{(m)}_{(1)} $ \;
    \For{$s = 0, 1, 2, \cdots$}{
    Set ${\delta}_{i\ell}^{(0)} \gets {\eta}_{i}^{(s)} - {\gamma}_{(1)\ell}^{(s)}, {u}^{(0)} \gets 0$, $\varepsilon_3 \gets \text{ predefined tolerance}$  \;
      \For{$s_1 = 0, 1, 2, \cdots$}{
      Compute $\bar{\eta}^{(s_1+1)}$ using (\ref{alg:SP_eta})\;
      Compute ${\delta}^{(s_1+1)}$ using (\ref{alg:SP_delta})\;
      Compute ${u}^{(s_1+1)}$ using (\ref{alg:SP_u})\;
        \eIf{$\sum_{i=1}^n \sum_{\ell=1}^k \| \bar{\eta}^{(s_1+1)}_i - {\gamma}^{(s)}_{(1)\ell} - {\delta}^{(s_1+1)}_{i\ell} \| < \varepsilon_3$}{
         Set ${\eta}^{(s+1)} = \bar{\eta}^{(s_1+1)}$ and exit the inner loop\;
        }{$s_1 = s_1+1$ \;
        }{
        }
      }
    Compute ${\gamma}^{(s+1)}_{(1)}$ using (\ref{alg:SP_gamma})\;
                \eIf{$\textsc{psisp}_{\textsc{sp}}\big(\widehat{\beta}^{(m+1)},{\eta}^{(s+1)},{\gamma}^{(s+1)}_{(1)},\widehat{\nu}^{(m)};\lambda \big) > \textsc{psisp}_{\textsc{sp}}\big(\widehat{\beta}^{(m+1)},{\eta}^{(s)},{\gamma}^{(s)}_{(1)},\widehat{\nu}^{(m)};\lambda \big)$}{
 Set $\widehat{\eta}^{(m+1)} = {\eta}^{(s)}$ and $\widehat{\gamma}^{(m+1)}_{(1)} = {\gamma}^{(s)}_{(1)}$, and exit the loop\;
        }{$s = s+1$ \;
        }{
        }
   }
   }{
  }
  Compute $\widehat{\nu}^{(m+1)}$ using (\ref{alg:SP_nu})\;
  \eIf{$\| \widehat{\beta}^{(m+1)}_{(1)}- \widehat{\eta}^{(m+1)} \| < \varepsilon_1$}{
Set $\widehat{\beta}^{\lambda} = \widehat{\beta}^{(m+1)}$ and $\widehat{\gamma}^{\lambda}_{(1)} = \widehat{\gamma}^{(m+1)}_{(1)}$; terminate\;
   }{$m = m+1$ \;
   }{
  }
 }

  }
\end{algorithm}

\begin{figure}[htbp]
\centering
\begin{subfigure}[b]{0.8\textwidth}
    \centering
    \includegraphics[width=\textwidth]{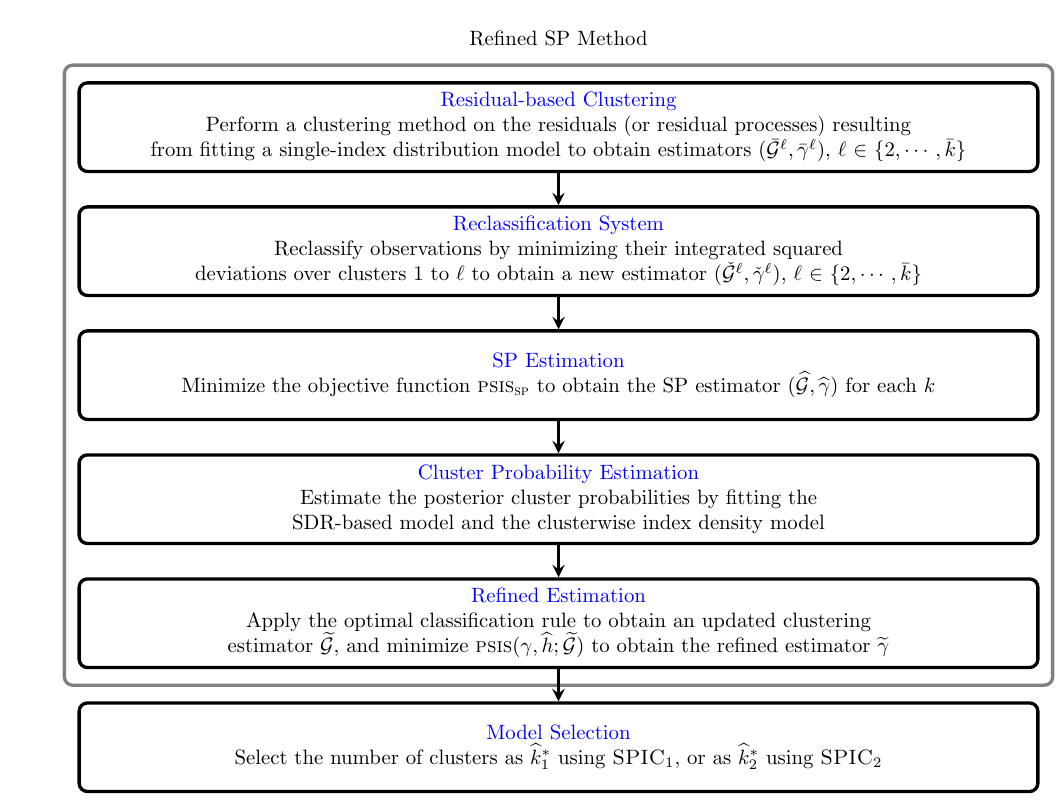}
\end{subfigure}

\vspace{0.5cm}

\begin{subfigure}[b]{1\textwidth}
    \centering
    \includegraphics[width=\textwidth]{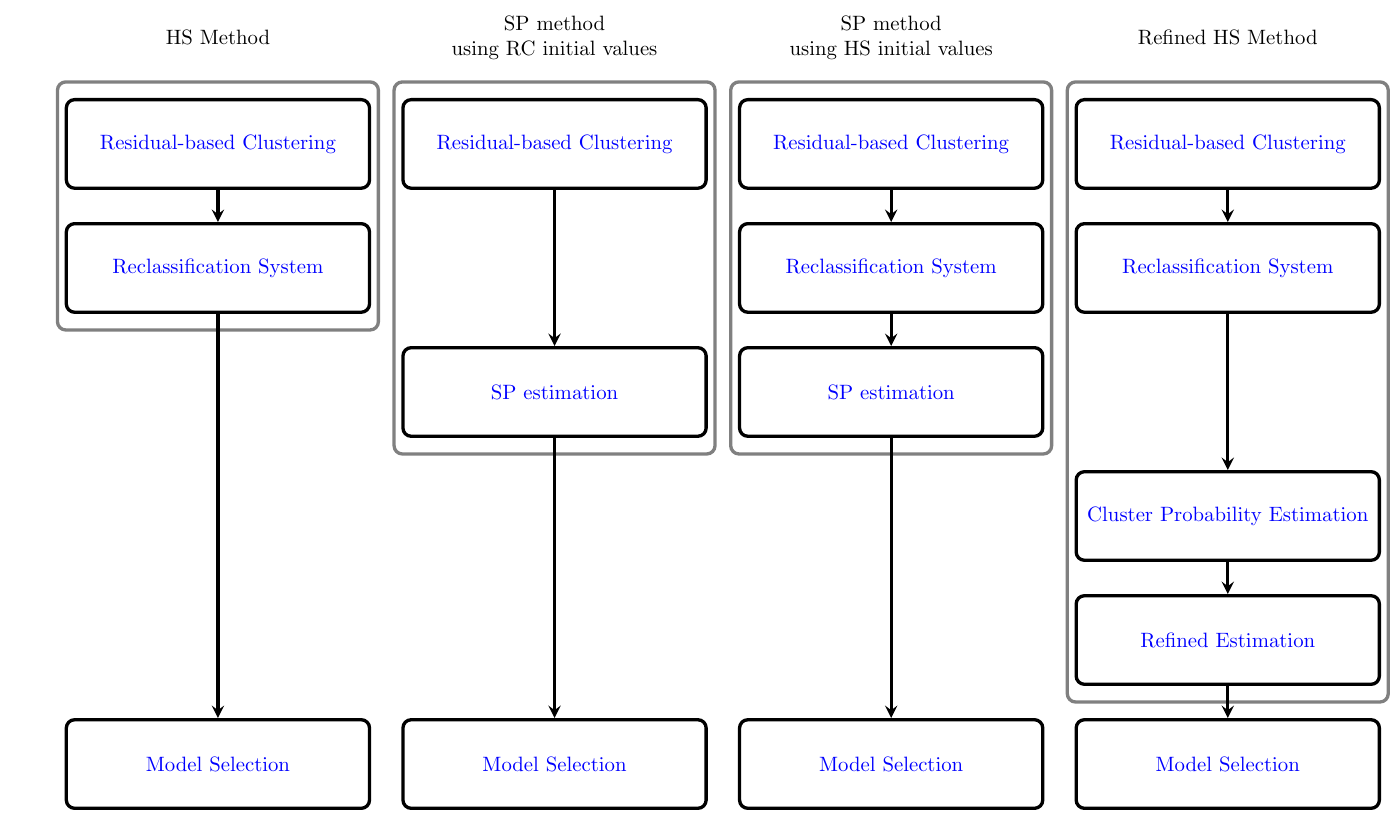}
\end{subfigure}

\caption*{The flowcharts illustrating the refined SP method and related methods derived from its estimation procedure.}
\end{figure}

\newpage
\subsection{Supplementary Tables and Figures}
\label{table-figure}

\begin{table}[H]
  \centering
  \caption{Means (standard deviations) of 500 RI values of the true partition and clustering estimates from various methods under different CID and cluster membership models.}
      {\small
      \begin{adjustbox}{max width=\linewidth}
    \begin{tabular}{ccccccccccc}
    \toprule
      \multicolumn{2}{c}{Method} & SP    & FP    & FP   & RHS & & SP    & FP    & FP   & RHS \\
      \cmidrule(rl){3-6} \cmidrule(rl){8-11}  
      \multicolumn{2}{c}{Initial value} & RC    & RC    & HS   & HS & & RC    & RC    & HS   & HS \\
      \cmidrule(rl){3-6} \cmidrule(rl){8-11}  
    CID   & $n$     & \multicolumn{4}{c}{Covariate-independence} & & \multicolumn{4}{c}{Covariate-dependence}\\
    \midrule          
 M1    & 100   & 0.924 (0.0716)& 0.923 (0.0732)& 0.941 (0.0673)& 0.955 (0.0693)&       & 0.888 (0.0975)& 0.883 (0.1020)& 0.954 (0.0878)& 0.954 (0.0982)\\
          & 200   & 0.974 (0.0246)& 0.973 (0.0281)& 0.975 (0.0252)& 0.985 (0.0206)&       & 0.946 (0.0409)& 0.945 (0.0593) & 0.986 (0.0444)& 0.987 (0.0443)\\
          & 300   & 0.982 (0.0169)& 0.982 (0.0164)& 0.983 (0.0156)& 0.990 (0.0123)&       & 0.967 (0.0485)& 0.967 (0.0549)& 0.992 (0.0322)& 0.993 (0.0240)\\
          & 400   & 0.987 (0.0129)& 0.987 (0.0127)& 0.988 (0.0118)& 0.992 (0.0080)&       & 0.974 (0.0589)& 0.973 (0.0504)& 0.993 (0.0359)& 0.994 (0.0289) \\
          \midrule  
M2    & 100   & 0.858 (0.0805) & 0.858 (0.0804) & 0.877 (0.0744) & 0.894 (0.0897) &       & 0.823 (0.0894) & 0.828 (0.0912) & 0.904 (0.0698) & 0.908 (0.0817) \\
      & 200   & 0.917 (0.0467) & 0.919 (0.0458) & 0.921 (0.0439) & 0.939 (0.0363) &       & 0.891 (0.0494) & 0.889 (0.0558) & 0.946 (0.0309) & 0.950 (0.0290) \\
      & 300   & 0.934 (0.0361) & 0.935 (0.0352) & 0.936 (0.0336) & 0.951 (0.0277) &       & 0.919 (0.0409) & 0.919 (0.0434) & 0.956 (0.0205) & 0.957 (0.0207) \\
      & 400   & 0.944 (0.0285) & 0.944 (0.0283) & 0.944 (0.0279) & 0.956 (0.0216) &       & 0.923 (0.0390) & 0.924 (0.0410) & 0.958 (0.0175) & 0.960 (0.0166) \\
\midrule  
M3    & 100   & 0.885 (0.0847) & 0.884 (0.0845) & 0.897 (0.0816) & 0.918 (0.0774) &       & 0.833 (0.0977) & 0.831 (0.1027) & 0.907 (0.0915) & 0.912 (0.1046) \\
      & 200   & 0.934 (0.0444) & 0.936 (0.0434) & 0.937 (0.0434) & 0.959 (0.0321) &       & 0.895 (0.0700) & 0.887 (0.0826) & 0.948 (0.0644) & 0.954 (0.0650) \\
      & 300   & 0.951 (0.0312) & 0.951 (0.0303) & 0.952 (0.0299) & 0.967 (0.0221) &       & 0.908 (0.0793) & 0.909 (0.0772) & 0.956 (0.0594) & 0.965 (0.0485) \\
      & 400   & 0.959 (0.0250) & 0.959 (0.0244) & 0.959 (0.0243) & 0.970 (0.0193) &       & 0.926 (0.0663) & 0.926 (0.0663) & 0.964 (0.0478) & 0.969 (0.0455) \\
\midrule  
M4    & 100   & 0.824 (0.0855) & 0.825 (0.0850) & 0.889 (0.0850) & 0.900 (0.0912) &       & 0.822 (0.0854) & 0.823 (0.0850) & 0.894 (0.0795) & 0.905 (0.0871) \\
      & 200   & 0.898 (0.0618) & 0.898 (0.0614) & 0.955 (0.0452) & 0.968 (0.0377) &       & 0.898 (0.0589) & 0.897 (0.0591) & 0.960 (0.0407) & 0.972 (0.0357) \\
      & 300   & 0.926 (0.0458) & 0.927 (0.0461) & 0.975 (0.0249) & 0.982 (0.0179) &       & 0.925 (0.0449) & 0.924 (0.0449) & 0.976 (0.0229) & 0.982 (0.0190) \\
      & 400   & 0.942 (0.0416) & 0.943 (0.0408) & 0.983 (0.0166) & 0.986 (0.0096) &       & 0.940 (0.0364) & 0.941 (0.0358) & 0.983 (0.0143) & 0.986 (0.0113) \\
\midrule  
M5    & 100   & 0.764 (0.1094) & 0.760 (0.1257) & 0.805 (0.1589) & 0.815 (0.1579) &       & 0.776 (0.1009) & 0.778 (0.1229) & 0.808 (0.1533) & 0.810 (0.1628) \\
      & 200   & 0.896 (0.0656) & 0.891 (0.0681) & 0.954 (0.0521) & 0.951 (0.0702) &       & 0.877 (0.0778) & 0.879 (0.0751) & 0.946 (0.0646) & 0.948 (0.0725) \\
      & 300   & 0.924 (0.0463) & 0.917 (0.0353) & 0.970 (0.0189) & 0.969 (0.0300) &       & 0.917 (0.0476) & 0.916 (0.0366) & 0.967 (0.0264) & 0.970 (0.0321) \\
      & 400   & 0.938 (0.0358) & 0.934 (0.0278) & 0.977 (0.0141) & 0.978 (0.0130) &       & 0.945 (0.0321) & 0.935 (0.0251) & 0.977 (0.0137) & 0.975 (0.0145) \\
          \bottomrule
    \end{tabular}%
    \end{adjustbox}
    }
  \label{tab:S_RI}%
\end{table}%

\begin{table}[H]
  \centering
  \caption{Means of 500 RSEs of subject-specific coefficient estimates across subjectwise representations corresponding to CID models {M1}--{M5} from various methods under different cluster membership models.}
    {\small
     \begin{adjustbox}{max width=0.6\linewidth}
         \begin{tabular}{ccccccccccc}
    \toprule
      \multicolumn{2}{c}{Method} & SP    & FP    & FP   & RHS & & SP    & FP    & FP   & RHS \\
      \cmidrule(rl){3-6} \cmidrule(rl){8-11}  
      \multicolumn{2}{c}{Initial value} & RC    & RC    & HS   & HS & & RC    & RC    & HS   & HS \\
      \cmidrule(rl){3-6} \cmidrule(rl){8-11}  
    CID   & $n$     & \multicolumn{4}{c}{Covariate-independence} & & \multicolumn{4}{c}{Covariate-dependence}\\
    \midrule  
    \multicolumn{1}{c}{M1} & 100   & 0.541 & 0.559 & 0.490 & 0.433 &       & 0.819 & 0.816 & 0.609 & 0.615 \\
          & 200   & 0.326 & 0.330 & 0.317 & 0.253 &       & 0.614 & 0.618 & 0.542 & 0.467 \\
          & 300   & 0.265 & 0.259 & 0.253 & 0.205 &       & 0.505 & 0.507 & 0.399 & 0.367 \\
          & 400   & 0.212 & 0.215 & 0.211 & 0.173 &       & 0.463 & 0.451 & 0.379 & 0.319 \\
          \midrule
    \multicolumn{1}{c}{M2} & 100   & 0.434 & 0.431 & 0.401 & 0.384 &       & 0.563 & 0.566 & 0.449 & 0.416 \\
          & 200   & 0.313 & 0.310 & 0.306 & 0.273 &       & 0.405 & 0.408 & 0.326 & 0.298 \\
          & 300   & 0.279 & 0.278 & 0.275 & 0.242 &       & 0.316 & 0.316 & 0.234 & 0.231 \\
          & 400   & 0.256 & 0.257 & 0.252 & 0.230 &       & 0.307 & 0.308 & 0.229 & 0.223 \\
          \midrule
    \multicolumn{1}{c}{M3} & 100   & 0.707 & 0.711 & 0.676 & 0.608 &       & 1.009 & 1.005 & 0.791 & 0.747 \\
          & 200   & 0.485 & 0.484 & 0.480 & 0.397 &       & 0.760 & 0.749 & 0.681 & 0.641 \\
          & 300   & 0.410 & 0.419 & 0.417 & 0.345 &       & 0.755 & 0.757 & 0.544 & 0.485 \\
          & 400   & 0.380 & 0.379 & 0.370 & 0.325 &       & 0.599 & 0.599 & 0.460 & 0.425 \\
          \midrule
    \multicolumn{1}{c}{M4} & 100   & 1.435 & 1.444 & 1.212 & 1.151 &       & 1.354 & 1.357 & 1.134 & 1.114 \\
          & 200   & 1.214 & 1.212 & 0.981 & 0.896 &       & 1.051 & 1.040 & 0.797 & 0.753 \\
          & 300   & 0.950 & 0.951 & 0.765 & 0.745 &       & 0.911 & 0.911 & 0.755 & 0.739 \\
          & 400   & 0.904 & 0.904 & 0.729 & 0.718 &       & 0.892 & 0.891 & 0.720 & 0.698 \\
          \midrule
    \multicolumn{1}{c}{M5} & 100   & 0.989 & 0.976 & 0.906 & 0.896 &       & 0.785 & 0.773 & 0.710 & 0.715 \\
          & 200   & 0.401 & 0.395 & 0.257 & 0.253 &       & 0.431 & 0.407 & 0.314 & 0.303 \\
          & 300   & 0.300 & 0.305 & 0.212 & 0.205 &       & 0.296 & 0.297 & 0.219 & 0.216 \\
          & 400   & 0.293 & 0.290 & 0.165 & 0.163 &       & 0.255 & 0.256 & 0.148 & 0.147 \\
          \bottomrule
    \end{tabular}%
      \end{adjustbox}
   }
  \label{tab:S_RMSE_b}%
\end{table}%

\begin{table}[htbp]
  \centering
  \caption{Means of 500 estimated number of clusters using distinct criteria across various methods under different CID and cluster membership models.}
 \setlength{\tabcolsep}{4pt} 
  \begin{subfigure}[t]{0.48\textwidth}
    \centering
    \caption*{$\textsc{SPIC}_1$}
    \begin{adjustbox}{max width=0.9\linewidth}
      \begin{tabular}{ccccccccccc}
        \toprule
      \multicolumn{2}{c}{Method} & SP    & FP    & FP   & RHS & & SP    & FP    & FP   & RHS \\
      \cmidrule(rl){3-6} \cmidrule(rl){8-11}  
      \multicolumn{2}{c}{Initial value} & RC    & RC    & HS   & HS & & RC    & RC    & HS   & HS \\
      \cmidrule(rl){3-6} \cmidrule(rl){8-11}  
       CID   & $n$     & \multicolumn{4}{c}{Covariate-independence} & & \multicolumn{4}{c}{Covariate-dependence}\\
      \midrule  
    M1    & 100   & 1.84  & 1.84  & 1.89  & 1.90 & & 1.90  & 1.88  & 1.97  & 1.95 \\
          & 200   & 2.00  & 2.00  & 2.00  & 2.00 & & 2.00  & 2.00  & 2.00  & 2.00 \\
          & 300   & 2.00  & 2.00  & 2.00  & 2.00 & & 2.00  & 2.00  & 2.00  & 2.00 \\
          & 400   & 2.00  & 2.00  & 2.00  & 2.00 & & 2.01  & 2.01  & 2.00  & 2.00 \\
          \midrule
    M2    & 100   & 1.43  & 1.42  & 1.49  & 1.56 & & 1.61  & 1.62  & 1.90  & 1.88 \\
          & 200   & 2.00  & 2.00  & 2.00  & 2.00 & & 2.00  & 2.00  & 2.00  & 2.00 \\
          & 300   & 2.00  & 2.00  & 2.00  & 2.00 & & 2.00  & 2.00  & 2.00  & 2.00 \\
          & 400   & 2.00  & 2.00  & 2.00  & 2.00 & & 2.00  & 2.00  & 2.00  & 2.00 \\
          \midrule
    M3    & 100   & 1.73  & 1.72  & 1.79  & 1.84 & & 1.83  & 1.83  & 1.95  & 1.94 \\
          & 200   & 2.00  & 2.00  & 2.00  & 2.00 & & 2.00  & 2.00  & 2.00  & 2.00 \\
          & 300   & 2.00  & 2.00  & 2.00  & 2.00 & & 2.00  & 2.00  & 2.00  & 2.00 \\
          & 400   & 2.00  & 2.00  & 2.00  & 2.01 & & 2.02  & 2.02  & 2.01  & 2.00 \\
          \midrule
    M4    & 100   & 1.87  & 1.87  & 1.98  & 1.96 & & 1.85  & 1.85  & 1.96  & 1.94 \\
          & 200   & 2.19  & 2.20  & 2.43  & 2.49 & & 2.25  & 2.24  & 2.34  & 2.34 \\
          & 300   & 2.86  & 2.86  & 2.98  & 2.99 & & 2.86  & 2.86  & 2.97  & 2.98 \\
          & 400   & 2.98  & 2.98  & 3.00  & 3.00 & & 2.98  & 2.98  & 3.00  & 3.00 \\
          \midrule
    M5    & 100   & 1.56  & 1.57  & 1.72  & 1.74 & & 1.62  & 1.61  & 1.73  & 1.74 \\
          & 200   & 2.00  & 1.99  & 2.00  & 2.00 & & 2.00  & 1.99  & 2.00  & 2.00 \\
          & 300   & 2.00  & 2.00  & 2.00  & 2.00 & & 2.00  & 2.00  & 2.00  & 2.00 \\
          & 400   & 2.00  & 2.00  & 2.00  & 2.00 & & 2.00  & 2.00  & 2.00  & 2.00 \\
        \bottomrule
      \end{tabular}
    \end{adjustbox}
  \end{subfigure}
  \hfill
  \begin{subfigure}[t]{0.48\textwidth}
    \centering
    \caption*{$\textsc{SPIC}_2$}
    \begin{adjustbox}{max width=0.9\linewidth}
           \begin{tabular}{ccccccccccc}
        \toprule
      \multicolumn{2}{c}{Method} & SP    & FP    & FP   & RHS & & SP    & FP    & FP   & RHS \\
      \cmidrule(rl){3-6} \cmidrule(rl){8-11}  
      \multicolumn{2}{c}{Initial value} & RC    & RC    & HS   & HS & & RC    & RC    & HS   & HS \\
      \cmidrule(rl){3-6} \cmidrule(rl){8-11}   
       CID   & $n$     & \multicolumn{4}{c}{Covariate-independence} & & \multicolumn{4}{c}{Covariate-dependence}\\
        \midrule
    \multicolumn{1}{c}{M1} & 100   & 1.95  & 1.96  & 1.97  & 1.97 &  & 1.97  & 1.96  & 2.00  & 1.99 \\
          & 200   & 2.01  & 2.01  & 2.01  & 2.00 & & 2.02  & 2.03  & 2.01  & 2.01 \\
          & 300   & 2.03  & 2.03  & 2.03  & 2.00 & & 2.03  & 2.02  & 2.01  & 2.00 \\
          & 400   & 2.02  & 2.02  & 2.02  & 2.01 & & 2.03  & 2.03  & 2.01  & 2.01 \\
          \midrule
    \multicolumn{1}{c}{M2} & 100   & 1.87  & 1.87  & 1.92  & 1.93 & & 1.86  & 1.91  & 1.98  & 1.98 \\
          & 200   & 2.00  & 2.00  & 2.00  & 2.00 & & 2.00  & 2.01  & 2.00  & 2.00 \\
          & 300   & 2.01  & 2.01  & 2.02  & 2.02 & & 2.00  & 2.01  & 2.00  & 2.00 \\
          & 400   & 2.06  & 2.06  & 2.06  & 2.05 & & 2.02  & 2.03  & 2.00  & 2.00 \\
          \midrule
    \multicolumn{1}{c}{M3} & 100   & 1.95  & 1.94  & 1.96  & 1.97&  & 1.98  & 1.98  & 2.00  & 2.01 \\
          & 200   & 2.03  & 2.05  & 2.05  & 2.03 & & 2.04  & 2.04  & 2.02  & 2.01 \\
          & 300   & 2.11  & 2.12  & 2.12  & 2.11 & & 2.07  & 2.07  & 2.03  & 2.02 \\
          & 400   & 2.36  & 2.37  & 2.38  & 2.36 & & 2.08  & 2.08  & 2.04  & 2.02 \\
          \midrule
    \multicolumn{1}{c}{M4} & 100   & 2.12  & 2.12  & 2.21  & 2.18 & & 2.12  & 2.11  & 2.22  & 2.21 \\
          & 200   & 2.74  & 2.74  & 2.95  & 2.97  && 2.75  & 2.73  & 2.93  & 2.95 \\
          & 300   & 2.97  & 2.97  & 3.00  & 3.00  && 2.98  & 2.98  & 3.00  & 3.00 \\
          & 400   & 3.01  & 3.01  & 3.00  & 3.00  && 3.01  & 3.01  & 3.00  & 3.00 \\
          \midrule
    \multicolumn{1}{c}{M5} & 100   & 1.91  & 1.95  & 1.97  & 1.91  && 1.80  & 1.84  & 1.86  & 1.90 \\
          & 200   & 2.03  & 2.03  & 2.02  & 2.01  && 2.02  & 2.02  & 2.01  & 2.01 \\
          & 300   & 2.02  & 2.00  & 2.00  & 2.01  && 2.01  & 2.01  & 2.00  & 2.01 \\
          & 400   & 2.02  & 2.02  & 2.01  & 2.00  && 2.00  & 2.00  & 2.00  & 2.00 \\
        \bottomrule
      \end{tabular}
    \end{adjustbox}
  \end{subfigure}
  \label{tab:S_est_k}
\end{table}

\begin{figure}[htbp]
\centering
 \includegraphics[width=16cm,height=9cm]{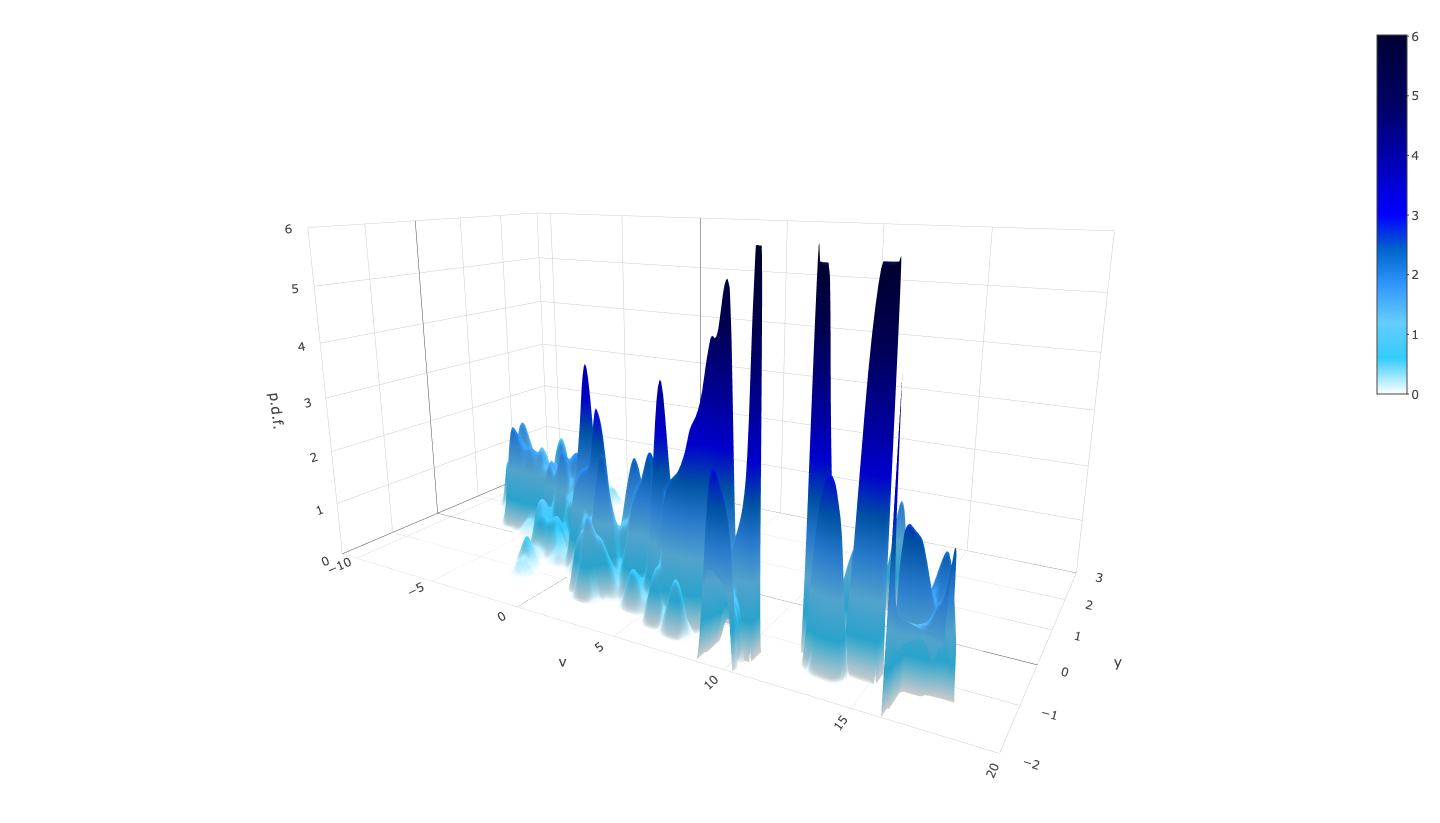}
 \caption{{\footnotesize Estimated conditional probability density function from the clusterwise index probability density model incorporating cluster-specific intercepts and riverbank effects for a real estate valuation study.}} \label{fig:Sindian_pdf_D} 
\end{figure}

\begin{figure}[htbp]
\centering
\begin{subfigure}[b]{\textwidth} 
    \centering
    \includegraphics[width=12cm,height=3.375cm]{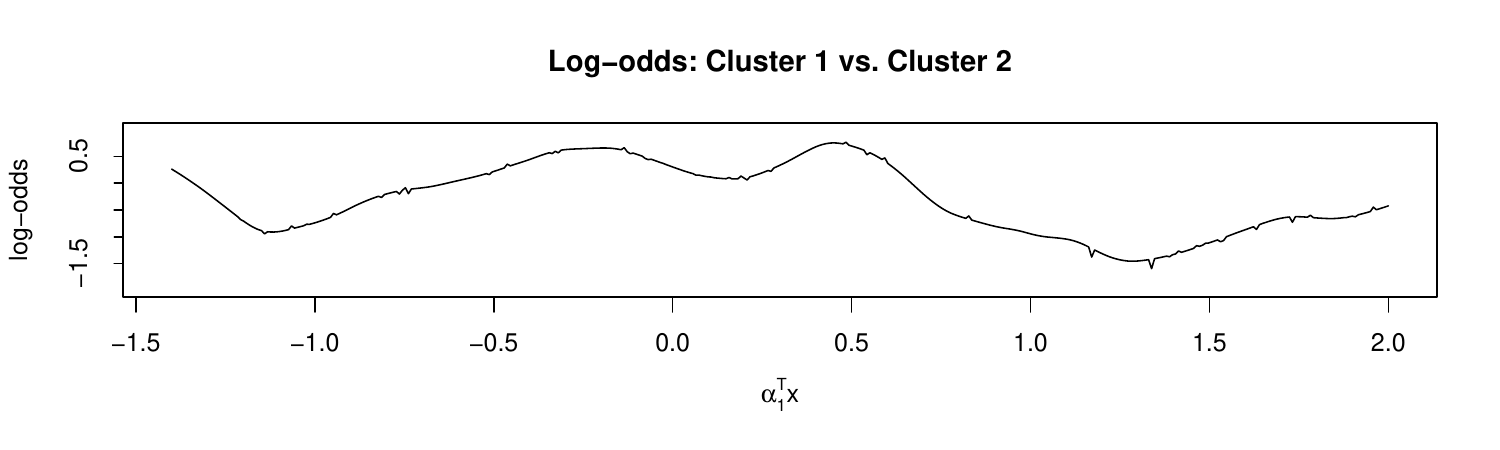} 
    \caption{{\footnotesize SDR-based Cluster Membership Model}}
\end{subfigure}


\begin{subfigure}[b]{\textwidth} 
    \centering
    \includegraphics[width=12cm,height=3.375cm]{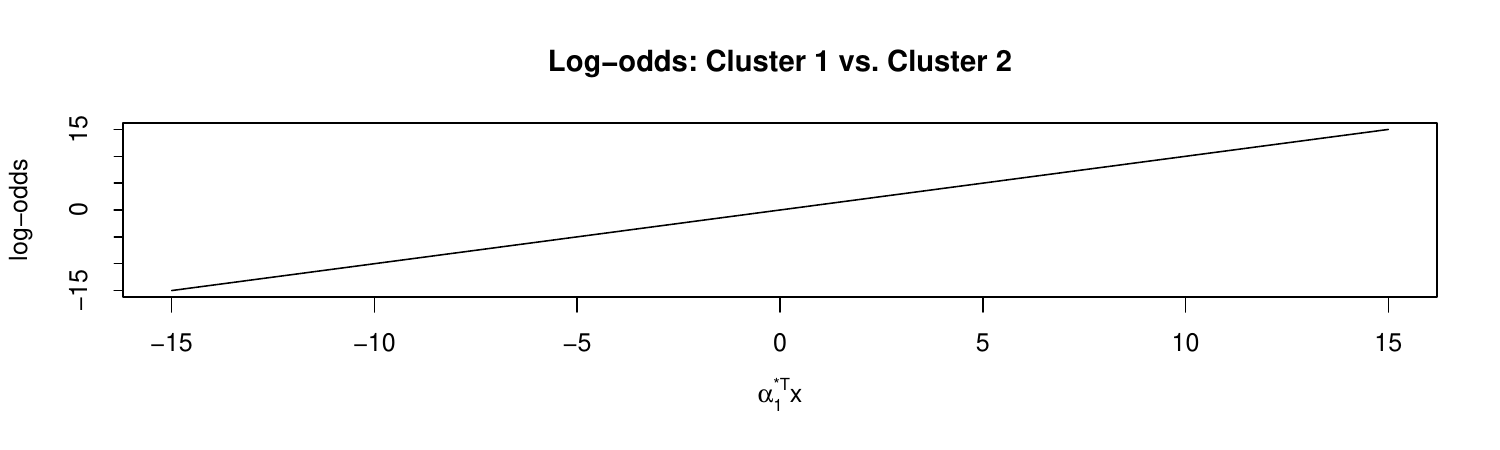}
    \caption{{\footnotesize Logistic Cluster Membership Model}}
\end{subfigure}

 \caption{{\footnotesize Estimated log-odds of membership in Cluster 1 relative to Cluster 2 based on the SDR-based and logistic cluster membership models for a real estate valuation study.}} \label{fig:Sindian_prob} 
\end{figure}

\begin{figure}[htbp]
\centering
 \includegraphics[width=16cm,height=9cm]{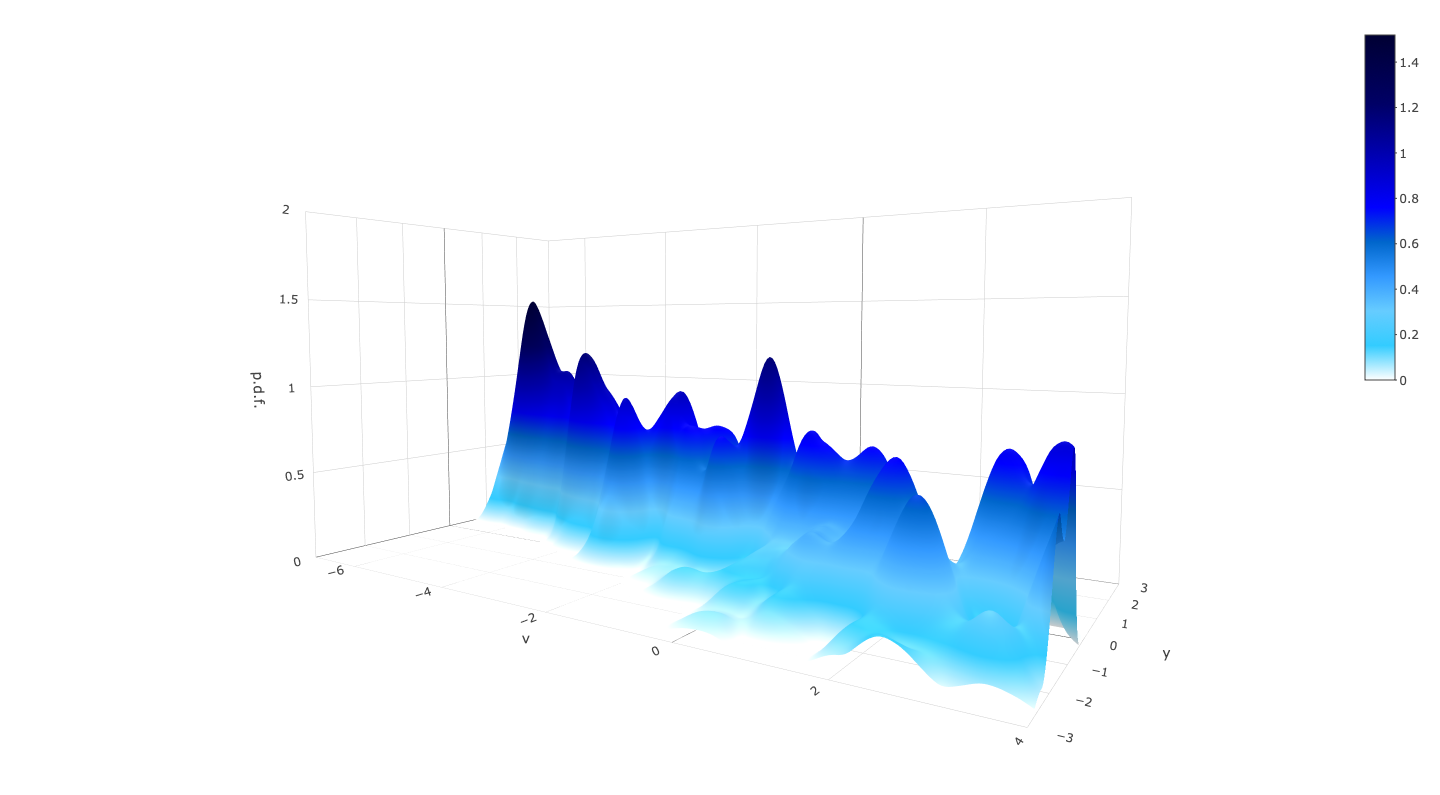}
 \caption{{\footnotesize Estimated conditional probability density function from the clusterwise index probability density model incorporating cluster-specific intercepts and disease effects for the Cleveland Heart Disease study.}} \label{fig:CLE_pdf_D} 
\end{figure}

\begin{figure}[t]
\centering
\begin{subfigure}[b]{\textwidth} 
    \centering
    \includegraphics[width=12cm,height=3.375cm]{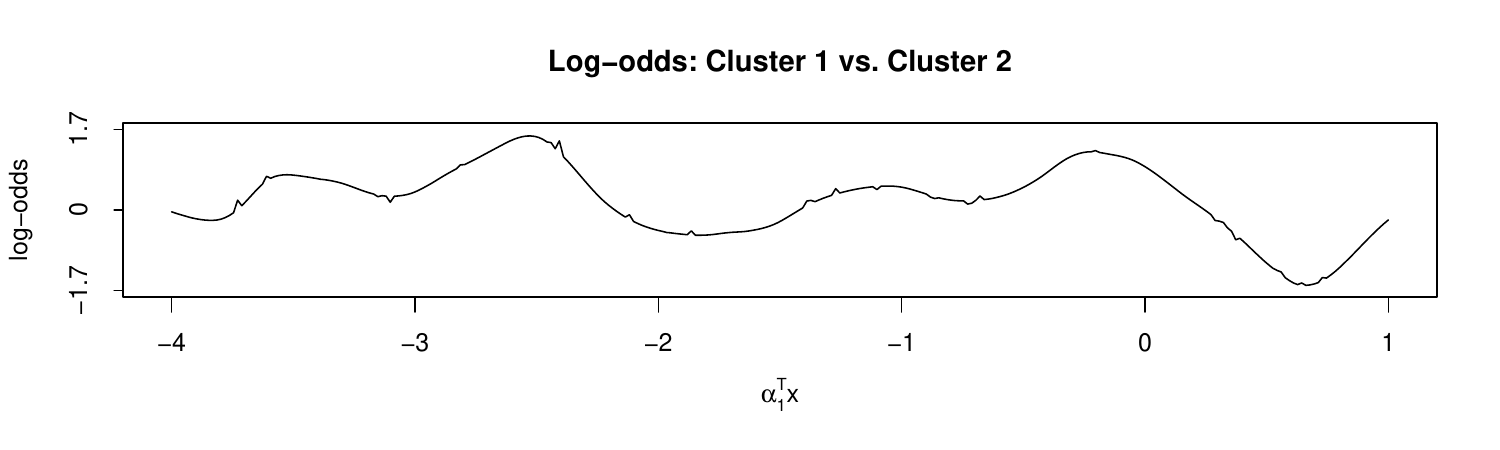} 
    \caption{{\footnotesize SDR-based Cluster Membership Model}}
\end{subfigure}


\begin{subfigure}[b]{\textwidth} 
    \centering
    \includegraphics[width=12cm,height=3.375cm]{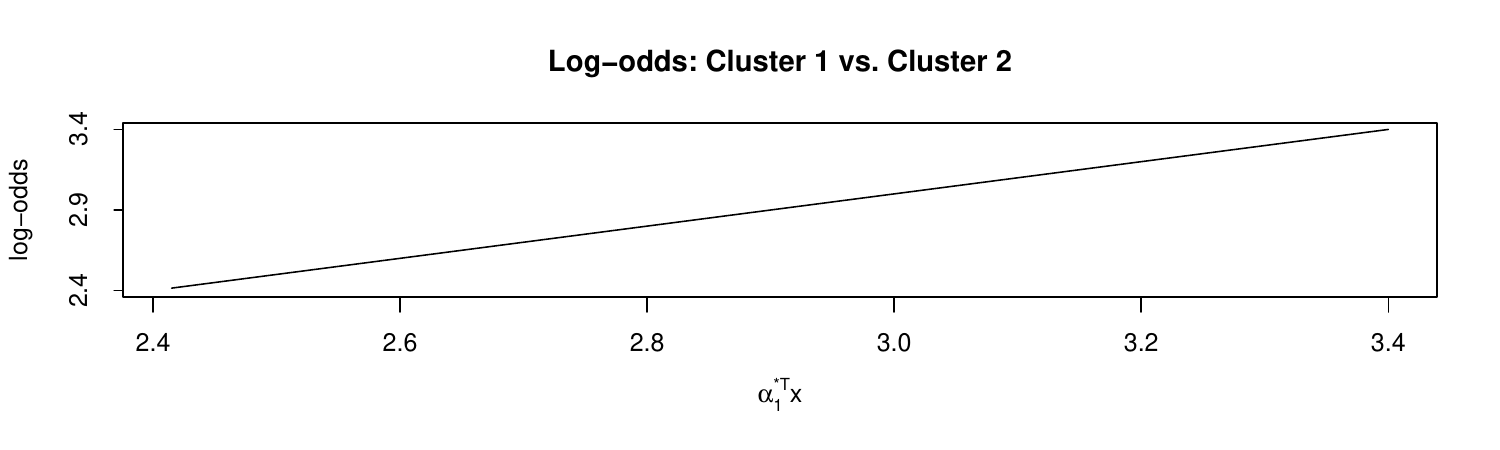}
    \caption{{\footnotesize Logistic Cluster Membership Model}}
\end{subfigure}

 \caption{{\footnotesize Estimated log-odds of membership in Cluster 1 relative to Cluster 2 based on the SDR-based and logistic cluster membership models for the Cleveland Heart Disease study.}} \label{fig:CLE_prob} 
\end{figure}

\subsection{Application in AIDS Clinical Trials Group Study 175} 
\label{subsec:ACTG}

Established in 1987 by the National Institute of Allergy and Infectious Diseases, the AIDS Clinical Trials Group (ACTG) has played a central role in advancing research on HIV/AIDS. One of its most influential studies, ACTG 175, was a landmark randomized clinical trial that substantially shaped the development of antiretroviral therapy protocols. This study assessed the relative effectiveness of four therapy protocols in 2139 adults infected with HIV type I whose initial CD4 T cell counts ranged from 200 to 500 cells per cubic millimeter. Further details on the study design and results can be found in \cite{hammer1996trial}.

We modeled the log-transformed CD4 cell counts measured at $20\pm5$ weeks, denoted as \textit{cd4}, as the response variable. Following \cite{ma2020exploration}, the covariates included treatment indicators (\textit{tr1}, \textit{tr2}, \textit{tr3}), log-transformed baseline CD8 cell counts (\textit{cd8}), age (\textit{age}), weight (\textit{weight}), Karnofsky performance score (\textit{ks}), and gender (\textit{gender}; 0 = female, 1 = male). Treatment assignment was represented by three indicators: (0,0,0) for zidovudine monotherapy, (1,0,0) for zidovudine combined with didanosine, (0,1,0) for zidovudine combined with zalcitabine, and (0,0,1) for didanosine monotherapy. Participants with extreme values in the response variable or covariates were excluded to improve numerical stability, resulting in a final sample of 2,054 adults.
The CID model, incorporating cluster-specific intercepts and treatment effects, was applied to capture latent heterogeneity in treatment response. According to $\textsc{SPIC}_1$ and $\textsc{SPIC}_2$, the number of clusters selected across the RC, HS, SP, and RSP methods was consistently five.
The resulting cluster assignments demonstrated high concordance: the RI value between the RC and RSP methods was 0.929, and the RI value for each pair among the HS, SP, and RSP methods was 1.000. Furthermore, the HS, SP, and RSP methods have the same partition and parameter estimates, while the RC method produced a nearly identical partition with only minor discrepancies.

The RSP method yields the following estimated cluster index coefficients and associated standard errors for the five cluster indices in the CID model:
\begin{table}[H]
  \centering 
  \begin{adjustbox}{max width=\linewidth}
    \begin{tabular}{cccccccccc}
    \toprule
 Cluster & intercept & \textit{tr}1   & \textit{tr}2   & \textit{tr}3   & \textit{cd}8   & \textit{age}   & \textit{weight} & \textit{ks}    & \textit{gender} \\
 \midrule
    1     & 0  & 8.93 (3.084)  & 7.01 (2.265)  & 6.92 (2.397) & 1  & -0.18 (0.055) & 0.42 (0.157) & 1.01 (0.303) & -0.67 (0.264)\\
    2     & -12.45 (4.046)& 3.62 (1.514) & 1.66 (0.705) & 2.02  (0.748)& 1  & -0.18 (0.055) & 0.42 (0.157) & 1.01 (0.303) & -0.67 (0.264)\\
    3     & -6.04 (1.828)& 3.56 (0.988) & 2.37 (0.677) & 2.42 (0.611) & 1  & -0.18 (0.055) & 0.42 (0.157) & 1.01 (0.303) & -0.67 (0.264)\\
    4     & -19.34 (6.621)& 1.41 (0.602) & 1.05 (0.543) & -0.23 (0.317)& 1 & -0.18 (0.055) & 0.42 (0.157) & 1.01 (0.303) & -0.67 (0.264)\\
    5     & 11.92 (4.156)& 6.91 (2.273) & 3.78 (1.475) & 4.42 (1.505)  & 1 & -0.18 (0.055) & 0.42 (0.157) & 1.01 (0.303) & -0.67 (0.264)\\
    \bottomrule
    \end{tabular}%
    \end{adjustbox}
\end{table}%
\noindent The cluster-specific coefficient estimates reveal that Cluster 5 corresponds to the highest \textit{cd4} across all treatment conditions, followed in descending order by Clusters 1, 3, 2, and 4. Clusters 2, 3, and 5 exhibit a consistent treatment ordering---zidovudine$+$didanosine $>$ didanosine $>$ zidovudine $+$zalcitabine $>$ zidovudine---indicating a shared pattern of treatment responsiveness. In contrast, Clusters 1 and 4 deviate from this ordering, reflecting partial heterogeneity in treatment effects. Among the cluster-invariant coefficients, \textit{cd8}, \textit{weight}, and \textit{ks} show significant associations with \textit{cd4}, while \textit{age} and \textit{gender} exhibit significant adverse associations.
In Figure \ref{fig:ACTG_mean}, the estimated mean function increases monotonically with respect to the index value, reflecting a consistent positive association.
Figure \ref{fig:ACTG_pdf} highlights the heterogeneity in distributional forms induced by the interaction between latent cluster structure and treatment effects.

In this data analysis, the structural dimension in the SDR-based cluster membership model was identified as one, with Cluster 5 serving as the reference group. The estimated index coefficients for the CS direction and their standard errors are presented below.
\vspace{-0.07in}
\begin{table}[H]
  \centering
  \footnotesize
    \begin{tabular}{cccccccc}
    \toprule
   \textit{tr}1   & \textit{tr}2   & \textit{tr}3   & \textit{cd}8   & \textit{age}   & \textit{weight} & \textit{ks}    & \textit{gender} \\
    \midrule
    2.12  (0.401)& 3.81 (0.240)& 3.07 (0.254)& -0.24 (0.098)& 1  & 0.18 (0.112)& -0.20 (0.131) & -1.27 (0.331)  \\
    \bottomrule
    \end{tabular}%
\end{table}%
\noindent The estimated direction suggests that the treatment indicators (\textit{tr}1,\textit{tr}2,\textit{tr}3) and age contribute significantly to the index, whereas \textit{cd8} and \textit{gender} contribute in the opposite direction. This contrast in covariate effects provides evidence of a covariate-dependent cluster membership model.
Figure \ref{fig:ACTG_prob} highlights the limitations of the multinomial logistic model in capturing the structure of the cluster membership probabilities.

\begin{figure}[htbp]
\centering
 \includegraphics[width=13cm,height=6.5cm]{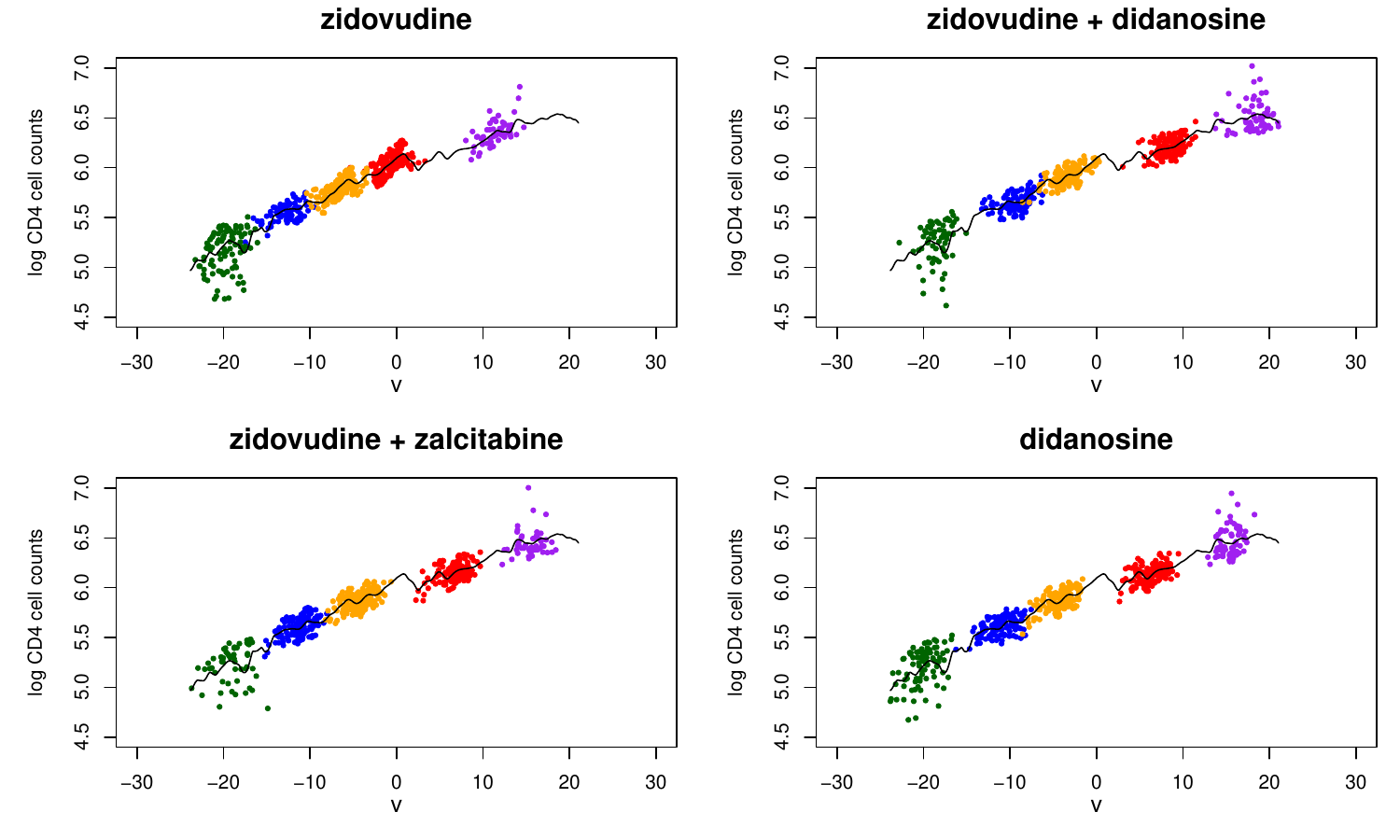}
 \caption{{\footnotesize Scatter plots of log-transformed CD4 cell counts for HIV-infected adults across four therapies from ACTG 175 overlaid with the estimated mean function from the clusterwise index mean model incorporating cluster-specific intercepts and treatment effects. The red, blue, orange, green, and purple dots represent Clusters 1, 2, 3, 4, and 5, respectively.}} \label{fig:ACTG_mean} 
\end{figure}

\begin{figure}[H]
\centering
 \includegraphics[width=16cm,height=9cm]{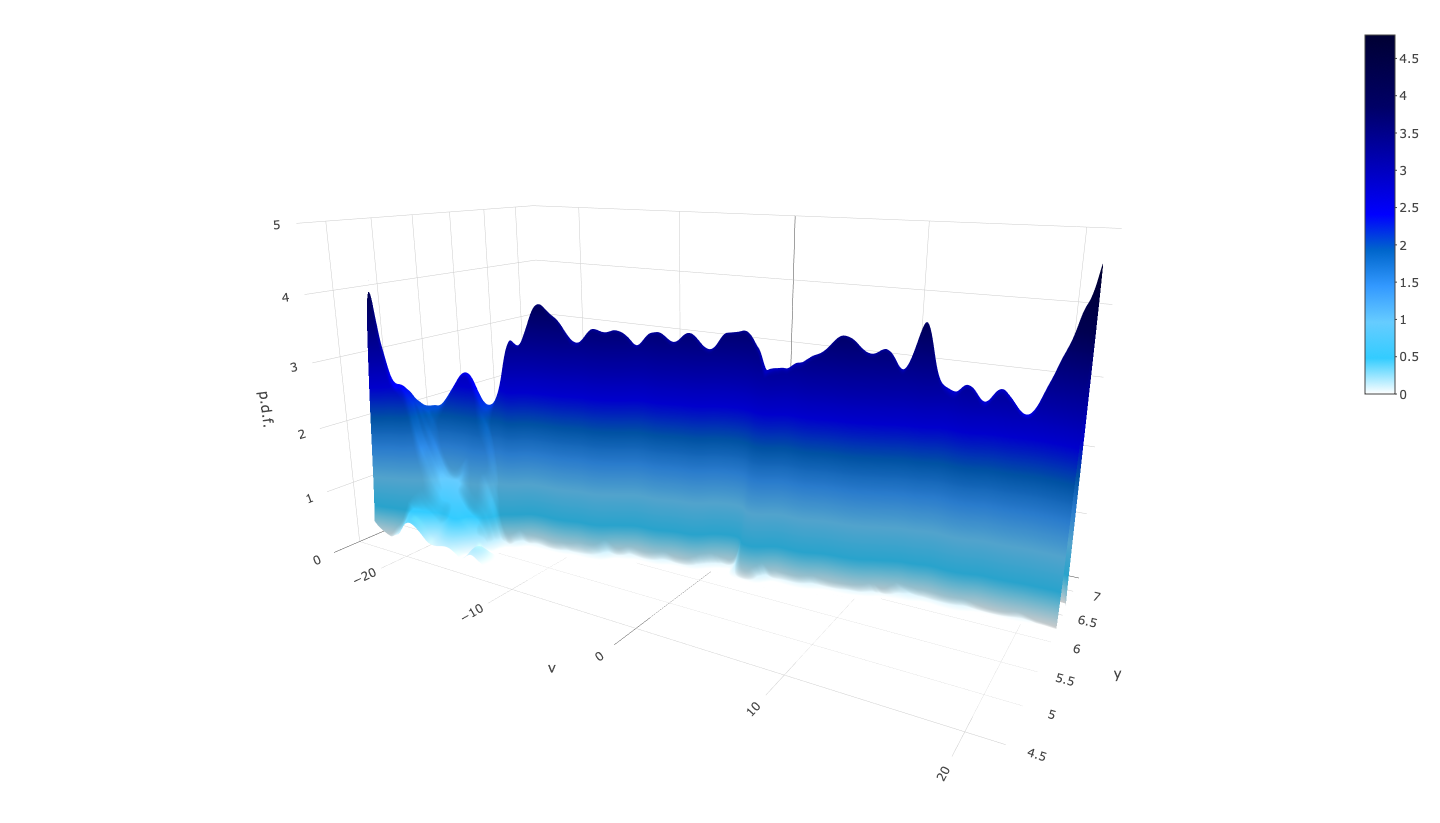}
 \caption{{\footnotesize Estimated conditional probability density function from the clusterwise index probability density model incorporating cluster-specific intercepts and treatment effects for the AIDS Clinical Trials Group study.}} \label{fig:ACTG_pdf} 
\end{figure}

\begin{figure}[H]
\centering
\begin{subfigure}[t]{0.45\linewidth} 
    \centering
    \includegraphics[width=\linewidth]{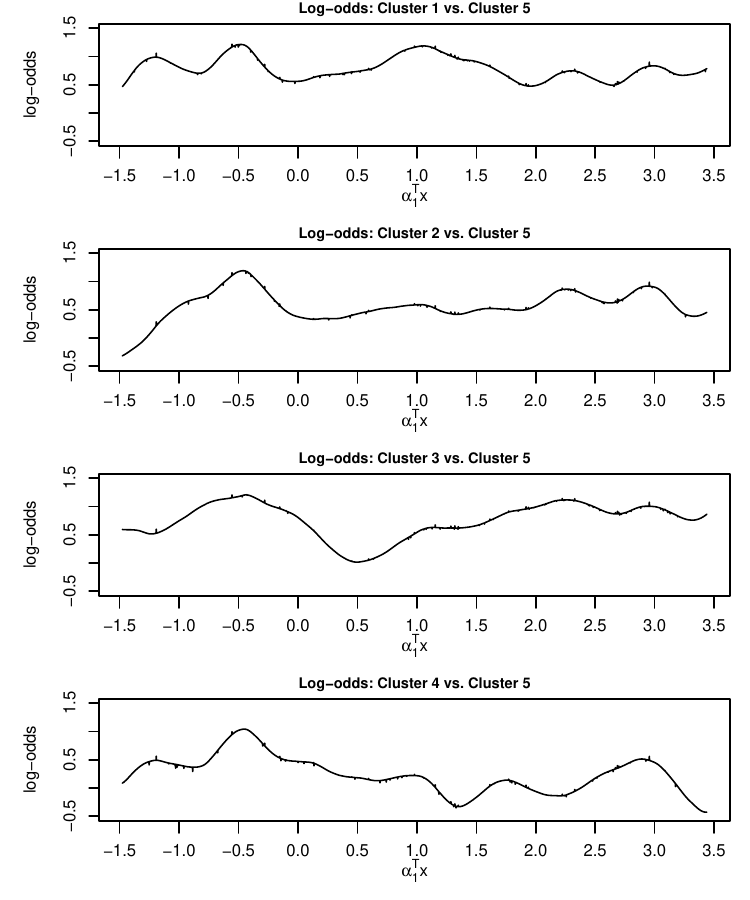} 
    \caption{{\footnotesize SDR-based Cluster Membership Model}}
\end{subfigure}
\begin{subfigure}[t]{0.45\linewidth} 
    \centering
    \includegraphics[width=\linewidth]{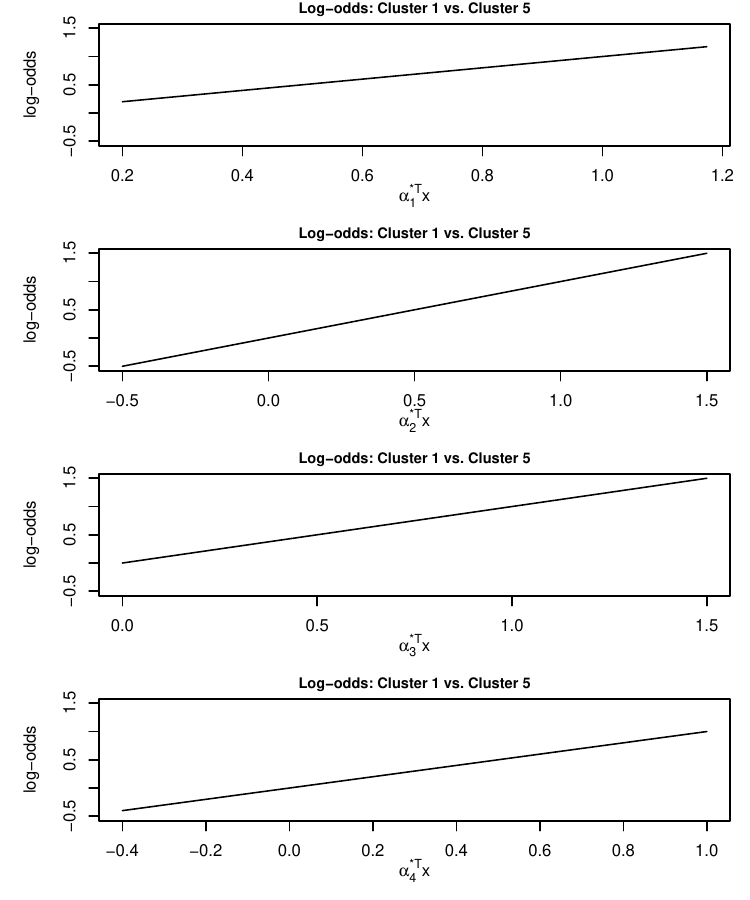} 
    \caption{{\footnotesize Logistic Cluster Membership Model}}
\end{subfigure}

\caption{{\footnotesize Estimated log-odds of membership in Clusters 1, 2, 3, and 4 relative to Cluster 5 based on the SDR-based and logistic cluster membership models for the AIDS Clinical Trials Group study.}}
\label{fig:ACTG_prob}
\end{figure}

\end{document}